\crefname{section}{section}{sections}
\providecommand{\customgenericname}{}
\theoremstyle{definition}
\newtheorem*{example*}{Example}
\newcommand{\argmax}{\operatornamewithlimits{argmax}}
\theoremstyle{plain}
\newtheorem{thm}{Theorem}
\newtheorem{lem}{Lemma}
\newtheorem{prop}{Proposition}
\newtheorem{cor}{Corollary}
\theoremstyle{definition}
\newtheorem{defn}{Definition}
\theoremstyle{remark}
\begin{document}

\begin{titlepage}
\vspace{-3cm}
\title{{\bf \large{Optimal Refund Mechanism with Consumer Learning}}\footnote{This paper was previously circulated under the title, ``Learning Deterrence vs. Learning Encouragement: Optimal Pricing and Return Policy.'' }
}
\vspace{3cm}
\author{
\vspace{0.5cm}
\begin{minipage}{0.4\textwidth}\centering  
Qianjun Lyu\footnote{Institute for Microeconomics, University of Bonn. I thank Wing Suen for his insightful comments and for ongoing discussions. I also thank the editor and two anonymous referees, whose suggestions significantly improved the paper. I would like to give special thanks to Pak Hung Au, Sarah Auster, Francesc Dilmé, Carl-Christian Groh,  Bard Harstad, Wei He, Jin Li, Yunan Li, Xianwen Shi, Bal\'{a}zs Szentes, Dong Wei, and Yimeng Zhang for their valuable comments and suggestions at different stages of this project. I thank seminar and conference participants at the University of Hong Kong, the University of Bonn,  LSE, the University of Nottingham, and Econometric Society  meetings for helpful comments. Finally, I acknowledge funding from the Deutsche Forschungsgemeinschaft (DFG, German Research Foundation) under Germany's Excellence Strategy (EXC 2126/1--390838866).  
 }  
\\ \centering  \small \it University of Bonn
\end{minipage}  
}

\date{\vspace{0.8cm} \today}
\maketitle
\thispagestyle{empty}
\vspace{-0.8cm}
\begin{abstract}

\noindent This paper studies the optimal refund mechanism when an uninformed buyer can privately acquire information about his valuation of a product over time. We consider a class of refund mechanisms based on \textit{stochastic return policies}: if the buyer requests a return, the seller will issue a (partial) refund while allowing the buyer to keep the product with some probability. Such return policies can affect the buyer's learning process and thereby influence the return rate. 
Nevertheless, we show that the optimal refund mechanism is deterministic and takes a simple form: either the seller  offers a sufficiently low price and disallows returns to deter buyer learning, or she offers a sufficiently high price with free returns to implement maximal buyer learning.  The form of the optimal refund mechanism is non-monotone in the buyer's prior belief regarding his valuation. 
\vspace{0.5cm}

\noindent \textit{Keywords}: buyer learning, refund contract, implementable mechanism, information design, elasticity
\vspace{0.5cm}

\end{abstract}

\end{titlepage}

\section{Introduction}
In 2023, approximately 14.5\% of total retail sales in the United States resulted in returns. The return rate for online purchases was even higher, at 17.6\%.\footnote{Source: Appriss Retail and National Retail Federation.
https://cdn.nrf.com/sites/default/files/2024-01/2023\%20Consumer\%20Returns\%20in\%20the\%20Retail\%20Industry.pdf
} 
Since e-commerce retailers often provide generous return policies, 
consumers often make purchases in order to learn their own valuation of a product (i.e., to learn whether the product meets their preferences); 
for example, a 2022 survey of 9,286 US adults showed that 75\% of consumers returning products purchased online did so because of poor fit.\footnote{Source: Power Reviews, https://www.powerreviews.com/research/reduce-returns.
}
On the other hand, there are many areas of commerce in which sellers offer no-return policies or returns with partial refunds. For instance, heavily discounted items are often final-sale and cannot be returned, 
and travel agencies typically charge a fixed fee for ticket refunds.

In this paper, we study the revenue-maximizing refund mechanism by explicitly examining how the design of a return policy influences a buyer's learning process. 
Our model includes an important departure from most of the existing literature on refund contracts: we consider non-deterministic (\emph{stochastic}) return policies. Previous papers have assumed that the buyer learns his valuation of a product immediately upon purchase (\citet{courtyli2000}; \cite{matthews2007}; \cite{hinnosaar2020}). Under this assumption, it is without loss of generality to restrict one's attention to deterministic return policies, in which returns are either always allowed or always disallowed. By contrast, our work is motivated by the purchase of experience goods, for which a buyer must learn his valuation gradually over time. For instance, the comfort and performance of a mattress, an office chair, or a car can only be determined through sustained use. 
In such scenarios, non-deterministic return policies may introduce another avenue for price discrimination, as they can either extend or shorten the buyer's gradual learning process.

Consider a (female) seller selling a product to a (male) buyer who values the product at either 0 or $v>0$. Neither the seller nor the buyer knows the actual valuation, but they share a common prior belief $\mu_0$ (representing the probability that the valuation is high). The buyer can learn both before and after purchase.
We model the buyer's learning process à la \cite{KRC}.  
The buyer exerts some costly effort to learn. 
If his valuation is high, then good news arrives according to a Poisson process; 
if his valuation is low, then no news ever arrives and his belief gradually declines. This is known as the ``no news is bad news'' model, or the good-news model. 
For the sake of elaboration, we assume that if the seller does not allow returns, then the buyer learns before purchase, whereas if returns are allowed, the buyer learns only after purchase. The seller's opportunity cost is assumed to be zero.

For the seller, it would be ideal to charge the price $\mu_0 v$ and disallow returns, thus extracting all of the surplus.  However, this strategy is not feasible: the buyer can always acquire information before purchase, and his option value from learning would prevent the seller from extracting all of the surplus. Consequently, the seller's expected revenue is much smaller than $\mu_0 v$. 

One possible strategy the seller can adopt is to post a  non-refundable but attractive enough price to compensate the buyer's option value, so that he is willing to buy the product immediately and forgo the benefits of pre-purchase learning.  We refer to this mechanism  as \textit{learning deterrence}: It specifies a sufficiently low price and at the same time disallows a return to prevent the buyer from private learning. 
Thus, the buyer purchases with probability one at the prior belief $\mu_0$.

The other strategy is to allow free returns, which enable the buyer to benefit maximally from post-purchase learning (i.e., he can learn until his continuation value from learning reaches zero). 
Under this strategy, the seller can charge a very high price, since she is more likely to sell to a high-valuation buyer. The downside is that the probability of sale is lower than with the first strategy, since buyer learning may lead to more returns.  
When free returns are available, the buyer's learning process is as follows. As long as no news arrives, his belief declines.  If it becomes sufficiently pessimistic, he decides to quit learning and request a refund; such belief is called the \emph{quitting belief}, $q$.  On the other hand, if good news arrives before his belief has fallen to $q$, then he decides to keep the item (and not request a refund).

These two strategies correspond to the two extremes of buyer learning: with learning deterrence, the buyer does not learn at all; with free returns, he learns as much as possible (until his belief reaches the quitting belief $q$). The seller can also combine the two strategies to induce some intermediate amount of learning. That is, she can design a \emph{stochastic return policy} appropriately, under which the buyer will choose to stop learning earlier, at a more optimistic stopping belief $\beta>q$.  

A stochastic return policy specifies two things: 
(a) the probability that the seller allows the buyer to \textit{keep} the product when he requests a refund;
(b) the amount of the refund that will be issued. 
One can interpret a stochastic return policy as a convex combination between allowing the buyer to keep the product at a (discounted) price; and asking the buyer to return the product to get a (partial) refund. 
Such an intermediate return policy can provide sufficient incentive for the buyer to stop learning earlier; in fact, we show that the seller can induce any stopping belief between the quitting belief $q$ and the prior belief $\mu_0$ through an appropriately designed stochastic return policy.

Our first result (Proposition \ref{step1}) is a characterization of the optimal return policy when the price is exogenously given.  We find that for moderate prices (neither too high nor too low), a stochastic return policy is optimal. However, our main result (Theorem \ref{deterministic}) shows that when the seller optimizes over prices, her expected revenue is quasi-convex in price (conditional on the return policy being optimally designed for each price). This implies that the optimal price must be either very high or very low. For such a price, the optimal return policy is necessarily deterministic. 
In other words, even when we look at the larger class of stochastic refund mechanisms, the optimal refund mechanism is still deterministic: if the seller provides refunds at all, she requires the buyer to return the product with probability one. This mirrors the simplicity of real-world return policies.

Interestingly, the suboptimality of stochastic return policies is driven by their flexibility in shaping the buyer's learning behavior. The optimally designed stochastic return policy allows the seller to simultaneously increase the price and decrease the buyer's stopping belief. A decrease in the stopping belief increases the probability of a successful sale (because the buyer learns for a longer time) but decreases the seller's revenue in the event of a return (because the buyer requests a return at a more pessimistic belief). 
If we interpret the probability of a successful sale as the demand for the product, then with the optimal  design of a stochastic return policy, an increase in price leads to higher demand---an exception to the law of demand.

In addition, note that the seller's expected revenue depends on the price, the net return revenue (i.e., the price net the refund), and the probability of receiving these amounts. The price represents the seller's revenue from a high-type buyer (since only a high-type buyer will ultimately keep the product without requesting a return), while the net return revenue is the amount she can obtain from either type of buyer (since she receives it even if the buyer requests a return). 
Hence, the seller can adopt either a niche-marketing strategy (aimed solely at the high-type buyer) or  a mass-marketing strategy (aimed at both types of buyer). With the former, she tends to charge a higher price in order to gain as much as possible from the high-type buyer; simultaneously, she utilizes a stochastic return policy to increase the probability of selling to a high-type buyer, which reinforces her incentive to focus on the niche market and raise the price. In contrast, with the mass-marketing strategy, the seller aims for a higher return revenue so as to gain more from both buyer types. To achieve this, she must decrease the price, which amplifies her incentive to focus on the mass market and increase the return revenue. This in turn reinforces her incentive to decrease the price. The self-reinforcing nature of these two strategies explains the above-mentioned quasi-convexity of the seller's revenue as a function of price; in particular, it explains why the optimal prices lie at the boundaries.

Our second main result (Theorem \ref{two}) characterizes the optimal refund mechanism for every prior belief  $\mu_0$ between zero and one. The mass-marketing form of the optimal refund mechanism (learning deterrence---a low price with no returns) leads to welfare maximization, while the niche-marketing form (a high price with free returns) leads to a welfare loss. 
The form of the optimal mechanism is 
a non-monotone function of the prior belief $\mu_0$: learning deterrence is optimal for small and large values of $\mu_0$, and free returns are optimal for intermediate values. This is mainly driven by the fact that information is more valuable when the prior belief is
more uncertain. 
Recall that to deter learning, the seller must lower the price sufficiently to compensate for the buyer's option value from learning. This strategy cannot be optimal unless the buyer's option value is low, i.e., unless the prior belief is very low or very high.

\textbf{Related literature.} Our paper relates to the sequential screening literature pioneered by \citet{courtyli2000}, in which the seller offers a menu of refund contracts designed to elicit the buyer's ex-ante private information in order to facilitate price discrimination. Building on this, \citet{krahmerstrauzs}  and \citet{Bergemannexpost} impose ex-post participation constraints to study the optimal sequential mechanism. We deviate from that line of research and instead study how the design of the refund contract affects the buyer's endogenous learning. In our model, the buyer's ex-post participation can be a consequence of the optimal refund mechanism.

As noted earlier, the existing literature on product returns with consumer learning typically focuses on deterministic return policies, assuming that the buyer learns his actual valuation immediately after purchase (\citet{matthews2007}; \cite{Jonas}; \citet{Janssen}).   \citet{matthews2007}  characterize the seller's optimal choice of price and refund in a setting in which the buyer can acquire perfect information at a fixed cost before purchase. 
  \Citet{Jonas} instead considers arbitrary information structures and shows that any split of the surplus between buyer and seller can be achieved. 
\citet{Janssen} discuss equilibrium refunds in the context of a market with consumer search.  Unlike all of these papers, we consider a setting with imperfect learning, which makes it viable for stochastic return policies to affect the buyer's learning behavior. 

 \cite{hinnosaar2020} investigate refund mechanisms in an environment where the buyer's signal structure is designed to minimize the seller's profit. 
 They find that the seller can best hedge against uncertainty by offering the ``robust refund policy,'' which is a mixture of generous refunds with randomly discounted non-refundable prices. 
  In contrast, our buyer's objective is to maximize his own surplus from trade net his information cost. For this objective, only the expected price of the product matters; that is, random pricing cannot affect the buyer's learning strategy.\footnote{The robust refund mechanism of \cite{hinnosaar2020} coincides with our optimal mechanism if the restocking fee in their model is 0 and the learning cost in our model is 0. } On the other hand, a random allocation rule (the probability specified in a stochastic return policy) is crucial in our model. 

Our paper is also related to  \citet{board} and \cite*{daleygreen}, which investigate option contracts that the winning bidder can choose whether to execute an option after collecting new information. In \citet{board}, a winning bidder can choose whether to use the asset at a contingent fee or to give up the upfront payoff and quit the market. 
In \cite*{daleygreen}, which discusses due diligence in mergers and acquisitions, the acquirer agrees on a price with the target firm but then has the option not to execute the contract. Both papers focus on deterministic execution. Our model, in contrast, allows stochastic execution. 

Our paper also contributes to the literature on mechanism design with information acquisition. 
 \cite{shi2012} and \citet*{mensch} study mechanism design in settings where the buyer can privately acquire information. 
\cite{shi2012} uses rotation-ordered information structures to model buyer learning. \citet*{mensch} discusses flexible information acquisition, with cost as the expected difference in a posterior-separable measure of uncertainty. 
In contrast, we investigate sequential Poisson learning. The Markov nature of this learning process allows us to analyze how the optimal refund mechanism varies with the buyer's prior belief, a question that \citet*{mensch} cannot accommodate.\footnote{This is because, with uniformly posterior-separable information costs, the same Blackwell experiment may have different costs at different prior beliefs, as discussed in Appendix A of \citet*{mensch}. Moreover, Poisson learning involves unbounded experiment, which is often ruled out in this literature; see \citet*{costofinfo}.}  

Finally, there is a growing literature on seller pricing decisions in light of buyer learning.\footnote{For a robustness perspective, see \cite{johnson-myatt} and \citet*{balazandroesler}.
} 
In relation to sequential buyer learning,\footnote{See \citet*{bonatti2011}, \citet*{bergemannvalimaki2000},
and \citet*{bv1996}.} \cite{Branco},
\citet*{lang2019}, and \citet{pease2018} discuss the seller's optimal pricing rule assuming the buyer can sequentially acquire information before trading. These papers model the buyer's learning process as Brownian motion with different state-contingent drifts.\footnote{In \citet*{lang2019}, the drift equals the buyer's actual valuation, which is distributed on a continuous support. In \cite{pease2018}, the actual valuation is binary, and high (low) value induces upward (downward) drift. } \cite{Branco} and \citet*{lang2019} conclude that the probability of sale is increasing in the buyer's ex-ante expected valuation. Superficially, this appears to be at odds with the finding in our paper that the probability of sale  is non-monotone in the buyer's prior belief. The subtle difference is that, in our model, the buyer's prior belief captures the amount of information he initially has (i.e., his initial level of certainty about his valuation); this quantity is non-monotone with respect to the belief, since very low (high) beliefs reflect greater certainty that the valuation is low (high). In \cite{Branco} and \citet*{lang2019} this channel is absent, since the buyer's expected valuation (the mean of the normal prior distribution) is unrelated to the informativeness of the prior distribution (in this case, the variance). 
Similarly, \citet{pease2018} predicts that the optimal pricing obeys a cutoff rule with respect to the buyer's prior belief; that is, if the prior belief is above some threshold, then the seller sets  higher prices to encourage learning. By contrast, in our case, the price  under the optimal refund mechanism is non-monotone in the prior belief. We explain this in more detail in Section \ref{opm}. The key issue is that in the papers cited above, the only choice variable is the price; thus, the impact of the return policy cannot be measured.

In terms of methodology, we reframe our mechanism design problem as an information design problem. Specifically, we characterize the implementable mechanisms, then constructing a transfer function on the belief space. The seller's objective is to maximize the expected value of the transfer function over all feasible distributions of posterior beliefs.\footnote{The ``feasible'' distributions are those that correspond to a learning strategy from the ex-ante point of view. } We then utilize the concavification technique in Bayesian persuasion (\cite{KamenicaGentzkow2011}) to facilitate our analysis and characterize the optimal information structure.\footnote{A similar approach is adopted in \citet*{mensch} and \cite{hinnosaar2020}.  }

The remainder of this paper is organized as follows. In Section \ref{sec:model} we present our model and some preliminary results needed to characterize the implementable mechanisms. 
In Section \ref{Exogenousprice} we study the optimal  return policy for an exogenously fixed price. 
In Section \ref{endogenousprice} we endogenize the price and identify the optimal refund mechanism, which turns out to be deterministic. In Section \ref{opm} we characterize the optimal refund mechanism for each value of the prior belief.  
In Sections \ref{posteff} and \ref{bad} we consider two extensions of the main model. In Section \ref{sec:discussion} we provide an alternative interpretation of our model and discuss the robustness of our notion of a refund mechanism.

\section{Model}
\label{sec:model}
A seller sells one unit of an indivisible product to a risk-neutral buyer, who values the product at either 0 or $v>0$.\footnote{Our results remain the same if the buyer has a binary valuation $\{v_h,v_l\}$ where  $v_h>v_l\ge 0$.} 
Both parties have a common prior belief $\mu_0\in (0,1)$, representing the probability of the high valuation $v$. The buyer's posterior belief evolves over time and is therefore denoted by $\mu(\tau)$, where $\tau$ represents time; when appropriate, we may simply write $\mu$. 
Certain specific values of the buyer's posterior belief (namely, his stopping beliefs) will be denoted by $\beta$; the distinction between $\beta$ and $\mu$ will become clear later.
We focus on the scenario in which welfare maximization requires trade with probability one; therefore, we normalize the seller's opportunity cost to 0.\footnote{Such normalization is valid as long as the seller's opportunity cost is lower than the value $v_l$ mentioned in footnote 10. } There is no cost associated with production or return. We assume that neither party discounts over time.\footnote{This assumption is not crucial to the analysis, as the seller obtains an upfront payment even with a free-return policy. Nevertheless, we retain the assumption because the time between purchase and return is usually not very long.}
The buyer's outside option is 0.

The seller commits to a refund mechanism  consisting of the product's price, denoted by $t_b$, and a return policy, denoted by $(x_r,t_r)$. The price $t_b\ge0$ indicates the transfer made by the buyer to the seller upon purchase.\footnote{The price is constant over time, because we assume the seller does not know when the buyer observes the mechanism. } 
The return policy has two elements: (a) the probability, $x_r\in [0, 1]$, that, if the buyer requests a return, then the seller will issue a refund yet allow the buyer to keep the product; (b) the \emph{return transfer}, $t_r\in [0, t_b]$, which is the (expected) net payment from the buyer to the seller if he requests a return. 
Under a typical refund mechanism $m=\{t_b, (x_r,t_r)\}$,  the buyer pays $t_b$ at purchase. Should he later request a return, the seller employs a public randomization device: with probability $x_r$, the buyer keeps the product; with the remaining probability, he must return the product to the seller. In either case, the seller issues a refund of $t_b-t_r$.\footnote{Note that although the seller could base the refund on whether the buyer returns the product, for a risk-neutral buyer, only the expected refund matters.} 
The pair $(x_r,t_r)$ can represent  various return policies. For instance, $(1,t_b)$ represents a no-return policy, and $(0,0)$ represents a free-return policy. 
We call a return policy $(x_r,t_r)$ with $x_r\in (0,1)$ a \textit{stochastic return policy}.

In addition, if returns are allowed, the seller can commit to a \textit{return window} $T\in [0,+\infty)$, starting from the purchase time, such that the return policy $(x_r,t_r)$ is only valid during this window. Note that, a priori, the seller can also limit the buyer's learning process via the return window $T$. However, she cannot gain from this additional instrument.\footnote{Loosely speaking, suppose the buyer learns only after purchase. Then the seller can induce the same learning behavior using either the return policy $(x_r,t_r)$
or the return window $T$. However, using a stochastic return policy to induce the same stopping belief can increase the allocation surplus upon return, giving the seller a higher return transfer. By contrast, using $T$ does not create surplus.} 
Therefore, without loss of generality, we can always set the return window to match the time at which it is optimal for the buyer to stop learning, given the mechanism $m$. 
We discuss this further at the end of the paper.  Henceforth, we omit the notion of $T$.

The buyer's payoff is realized when he makes his final decision either to keep the item or to request a return.\footnote{In principle, the buyer can decide at any time within the return window. We assume his payoff is realized when he chooses to stop learning and make a decision; however, since there is no discounting, one could equivalently assume the payoff is realized at $T$. }  At belief $\mu$, if the buyer decides to keep the item (i.e., not to request a return), he obtains a consumption utility $\mu v-t_b$;\footnote{One can consider $v$ as the reduced-form net present value for the product if the consumption utility is realized over time.  }
if he requests a return, he obtains a return utility $\mu v x_r-t_r$. 
Without loss of generality, we assume $v-t_b>v x_r-t_r$,  i.e., a high-type buyer does not benefit from requesting a return.\footnote{Note that if requesting a return were preferable at belief 1, then it would be preferable at every belief in $[0,1]$; that is, the buyer would request a return regardless of the learning outcome. Hence learning would not be necessary. The seller could therefore let $x_r=1$ and increase $t_r$ to get a higher revenue. } The seller receives the price $t_b$ if the buyer decides not to request a return, and the return transfer $t_r$ if he does request a return. Her ex-ante expected revenue is contingent on the buyer's learning strategy.

The buyer can learn both before and after purchase. 
As described in the introduction, we model his learning using the exponential bandit framework, with a ``no news is bad news" structure. The buyer incurs  a  flow cost $k$ to learn. If his valuation is high, then good news arrives according to a Poisson process. If his valuation is low, then no news ever arrives. 
Let $\lambda_B$ denote the pre-purchase learning rate and $\lambda_P$ the post-purchase learning rate.
We assume $\lambda_P \geq \lambda_B$, since it is reasonable to suppose that information available before purchase remains accessible after purchase. 
In fact, for most of this paper we focus on the scenario in which $\lambda_P = \lambda_B = \lambda$. This is justified by the fact that, in recent years, the gap between pre-purchase and post-purchase information has narrowed significantly, thanks to the increasing availability of information on the internet, as well as policies at some retailers (including Apple) that enable consumers to test products in person.
In Section \ref{posteff}, we extend our analysis to scenarios in which $\lambda_P > \lambda_B$, and we explore the limiting case where $\lambda_P \rightarrow \infty$, which captures the scenario in which the buyer learns his true valuation instantly after purchase.

Throughout our analysis, we maintain the assumption $4k < \lambda v$ to avoid the trivial situation in which the learning cost $k$ is so high that the buyer never chooses to learn, regardless of the mechanism or the prior belief. As long as the buyer continues to learn and good news does not arrive (i.e., no Poisson jump occurs), his belief gradually declines according to the following law of motion:
\begin{equation}\label{lawofmotion}
 \mu'(\tau)=-\mu(\tau)(1-\mu(\tau))\lambda<0.   
\end{equation}
If good news arrives, his belief jumps to one. 

Note that the buyer can always behave as if returns are disallowed and learn as much as he likes before purchase. Therefore his option value from pre-purchase learning is endogenously determined through the price $t_b$ alone. 
We denote this option value by $V^0(\mu(\tau);t_b)$.  It is given by the following Bellman equation: 
\begin{equation}\label{v0}
    \begin{aligned}
    V^0(\mu(\tau);t_b)= &\max\{\   \mu(\tau) v-t_b, \ 0,\\
   & - k d\tau +\mu(\tau) \lambda d\tau (v-t_b)  +(1-\mu(\tau)\lambda d\tau)V^0(\mu(\tau+d\tau);t_b) \}.
    \end{aligned}
\end{equation}
We interpret the equation as follows. At any time $\tau$ before purchase, the buyer can either purchase the product, walk away, or continue to learn. If he continues to learn for an increment of time $d\tau$, then, with probability $\mu(\tau)\lambda d\tau $, good news arrives and he purchases the item; 
with the remaining probability, no news arrives and his belief decreases to $\mu(\tau+d\tau)$. 
Note that his option value at the prior belief, $V^0(\mu_0; t_b)$, is the minimum payoff he can obtain from participation.

The seller's choice of return policy $(x_r,t_r)$ affects the buyer's learning behavior as follows. Given a  mechanism $m=\{t_b,(x_r,t_r)\}$, 
let $V_P(\mu(\tau);m)$ be the buyer's continuation value from learning   if he purchases the product first and learns afterward: 
\begin{equation*}\label{vp}
    \begin{aligned}
    V_P(\mu(\tau); m)=&\max\{\   \mu(\tau) v-t_b, \ \mu(\tau) v x_r-t_r,\\
   & - k d\tau +\mu(\tau) \lambda d\tau (v-t_b)  +(1-\mu(\tau)\lambda d\tau)V_P(\mu(\tau+d\tau);\ m) \}.
    \end{aligned}
\end{equation*}
This equation captures the fact that when the buyer stops learning after purchase, he can either request a return or decide to keep the product without requesting a return. 
Individual rationality requires $V_P(\mu_0; m)\ge V^0(\mu_0; m)$. Otherwise, the return policy would not be effective, because the buyer would learn before purchase and thereby guarantee his option value. Therefore, without loss of generality,  if the seller allows returns, we assume that the buyer learns after purchase.\footnote{ We adopt this assumption because it is more natural in practice. In the case that $V_P(\mu_0; m)= V^0(\mu_0; m)$, one can also assume the buyer learns all he needs to before purchase, and upon purchase, he immediately decides whether to request a return or keep the product.   }

\subsection{The buyer's endogenous option value}
In this  subsection, we discuss how the buyer's option value $V^0(\mu_0;t_b)$ varies with the price $t_b$. 
Conditional on learning, the Bellman equation (\ref{v0}) leads to the differential equation
\begin{equation}\label{odeb}\tag{ODE}
        (1-\mu)\mu \lambda V_1 (\mu;t_b)+ \mu \lambda V(\mu;t_b)=\mu \lambda (v-t_b)- k,
    \end{equation}
where $V_1(\mu; t_b)$ denotes the partial derivative 
with respect to
the first argument. For a fixed $t_b$, the buyer's optimal learning strategy is determined by two cutoff beliefs: the quitting belief $q(t_b)$ and the trial belief $Q(t_b)$, with $q(t_b)\le Q(t_b)$. Specifically, he continues to learn as long as his belief $\mu$ falls 
between the two cutoffs. 
The quitting belief $q(t_b)$ is determined by the standard value-matching and smooth-pasting conditions,
and it admits a closed form:
\begin{equation}\label{quitbelief}
    q(t_b)=\frac{k}{\lambda (v-t_b)}.
\end{equation}
The trial belief is the value of belief above which the buyer strictly prefers immediate consumption over further learning: 
\begin{equation}\label{bq}
    Q(t_b)=\{\mu:  V(\mu;t_b)=\mu v-t_b\}.
\end{equation}
In equation (\ref{bq}) and in the remainder of the paper, with slight abuse of notation, we use $V(\mu;t_b)$ to denote the solution of (\ref{odeb}) with boundary point $(q(t_b),0)$.\footnote{The formula for $V(\mu;t_b)$ can be found at the end of Appendix B. }



\begin{prop}\label{benchprop} 
There exist $\underline{t}_b<\overline{t}_b$ such that the following hold:
\begin{enumerate}\setlength\itemsep{0cm}
    \item   If $t_b\notin [\underline{t}_b, \overline{t}_b]$\ , 
    \[V^0(\mu;t_b)=\max\{0,\mu v-t_b\}.\] 
    \item If $t_b\in [\underline{t}_b, \overline{t}_b]$,
\begin{equation*}
V^0(\mu;t_b)=\left\{
             \begin{array}{lr}
            0,                                      &  \mu<q(t_b),\\
            V(\mu;t_b),  & q(t_b)\leq \mu<Q(t_b),\\
            \mu v- t_b,                 &  \mu\geq Q(t_b).
             \end{array}
\right.
\end{equation*}
\end{enumerate}
\end{prop}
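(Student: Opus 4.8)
The plan is a guess-and-verify argument based on the Hamilton--Jacobi--Bellman variational inequality associated with \eqref{v0}. Substituting the law of motion $\mu'=-\mu(1-\mu)\lambda$ and letting $d\tau\to 0$ in \eqref{v0} turns the continuation branch into \eqref{odeb}, so I would define the operator
\[
\mathcal{L}V:=\mu(1-\mu)\lambda\,V_1+\mu\lambda V-\mu\lambda(v-t_b)+k ,
\]
which vanishes exactly where learning is optimal. A candidate $W$ equals $V^0$ if $W\ge g:=\max\{0,\mu v-t_b\}$, $\mathcal{L}W\ge 0$, and $\min\{W-g,\mathcal{L}W\}=0$ pointwise (continuation is dominated wherever $W=g$, and the ODE holds wherever $W>g$). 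Two elementary computations drive everything: the walk-away value gives $\mathcal{L}0=k-\mu\lambda(v-t_b)$, which is positive precisely for $\mu<q(t_b)$; the purchase value gives, after cancellation, $\mathcal{L}(\mu v-t_b)\equiv k>0$. Thus ``buy'' is always an admissible stop, and ``walk away'' is admissible exactly below the quitting belief.

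For Case 2 I would take $V(\cdot;t_b)$ to be the solution of \eqref{odeb} pinned down by value-matching and smooth-pasting at $q(t_b)$ (which reproduces \eqref{quitbelief}), and let $Q(t_b)$ be defined by the value-matching condition \eqref{bq}. Existence and uniqueness of $Q(t_b)$ is the crux, and it follows by reading the ODE along the candidate crossing: substituting $V=\mu v-t_b$ into \eqref{odeb} yields $V_1=v-\tfrac{k}{\mu(1-\mu)\lambda}<v$ at any point where $V$ meets the purchase line, so every crossing is a strict down-crossing. Since (in the interior of the relevant interval) $V(q(t_b))-(q(t_b)v-t_b)=t_b-q(t_b)v>0$ while $V(1)-(v-t_b)=-k/\lambda<0$, the intermediate value theorem gives a unique $Q(t_b)\in(q(t_b),1)$; at the endpoints $Q(t_b)=q(t_b)$ and Case 2 degenerates to Case 1. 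The same computation shows the crossing is transversal, so $W$ has a convex kink at $Q(t_b)$ (its slope jumps up from $V_1(Q(t_b))$ to $v$), which is harmless for the variational inequality. Finally, substituting $V=0$ into \eqref{odeb} gives $V_1>0$ at every point where $V=0$ (using $\mu>q(t_b)$), so $V$ stays strictly positive on $(q(t_b),Q(t_b)]$; together with $V\ge \mu v-t_b$ on $[q(t_b),Q(t_b)]$ this yields $W=V\ge g$ there, and with the two operator computations $W$ satisfies the variational inequality and equals $V^0$.

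It remains to decide when Case 2 is the relevant regime, and here the decisive comparison is between the quitting belief and the break-even belief $t_b/v$. The construction is consistent only if ``walk away'' is the binding stop at $q(t_b)$, i.e.\ $q(t_b)v-t_b\le 0$, equivalently $q(t_b)\le t_b/v$. Substituting \eqref{quitbelief} this reads $\lambda t_b^2-\lambda v t_b+kv\le 0$; since $4k<\lambda v$ the quadratic has two real roots
\[
\underline{t}_b=\tfrac{v}{2}\Bigl(1-\sqrt{1-\tfrac{4k}{\lambda v}}\Bigr),\qquad
\overline{t}_b=\tfrac{v}{2}\Bigl(1+\sqrt{1-\tfrac{4k}{\lambda v}}\Bigr),
\]
and the inequality holds exactly on $[\underline{t}_b,\overline{t}_b]$, giving Case 2. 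When $t_b\notin[\underline{t}_b,\overline{t}_b]$ we instead have $q(t_b)>t_b/v$, so on the whole region $\{\mu<t_b/v\}$ we have $\mu<q(t_b)$ and hence $\mathcal{L}0>0$, while $\mathcal{L}(\mu v-t_b)=k>0$ on $\{\mu\ge t_b/v\}$; thus $g=\max\{0,\mu v-t_b\}$ itself satisfies the variational inequality and $V^0=g$, which is Case 1.

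I expect the main obstacle to be the shape analysis underlying Case 2: establishing that the ODE solution crosses the purchase line exactly once and dominates both stopping payoffs on $[q(t_b),Q(t_b)]$. The trick that keeps this manageable is to avoid the closed form of $V$ and instead read the sign of $V_1$ directly off \eqref{odeb} at the candidate crossings (of $V$ with $0$ and with $\mu v-t_b$), which delivers the single-crossing property and the convex kink at $Q(t_b)$ for free. The threshold computation and the operator-sign checks are then routine algebra.
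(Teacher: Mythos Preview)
Your proposal is correct and reaches the same threshold characterization as the paper---both arrive at the quadratic inequality $q(t_b)v\le t_b$, i.e.\ $\lambda t_b^2-\lambda v t_b+kv\le 0$, whose roots under $4k<\lambda v$ give $\underline t_b$ and $\overline t_b$---but the route is genuinely different. The paper outsources the structure of Case~2 to \cite{KRC} (their Proposition~3.1) and then argues the ordering $q(t_b)\le Q(t_b)$ indirectly, by introducing $\tilde\mu(\mu):=q(Q^{-1}(\mu))$, deriving the ODE $\tilde\mu'=\tilde\mu^2(1-\tilde\mu)/[(1-\mu)^2\mu]$, and integrating to obtain an implicit relation (equation~\eqref{mutilde}) that is used repeatedly later in the paper. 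Your argument is instead a self-contained HJB verification: you compute the operator on the two stopping payoffs, pin $V$ by value-matching/smooth-pasting at $q(t_b)$, and obtain existence and uniqueness of $Q(t_b)$ by reading the sign of $V_1-v$ off the ODE at any crossing with the purchase line (a clean single-crossing argument), combined with $V(q)=0>q v-t_b$ and $V(1)-(v-t_b)=-k/\lambda<0$. What your approach buys is independence from \cite{KRC} and a transparent reason for the convex kink at $Q(t_b)$; what the paper's approach buys is the auxiliary object $\tilde\mu(\mu)$ and its closed-form relation \eqref{mutilde}, which it reuses in the proof of Theorem~\ref{two}. If you adopt your route, be aware that you will need to derive \eqref{mutilde} separately when you get to that theorem.
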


Proposition \ref{benchprop} characterizes the buyer's continuation value from learning (as) if the mechanism disallows returns. His learning strategy is constructed via the quitting belief and the trial belief. As depicted in Figure \ref{fig:olive}(a), 
 both beliefs are increasing in price, and they coincide at the two boundaries.\footnote{
At higher prices, the high-type surplus is smaller, so the buyer's optimal learning period is shorter; i.e., the quitting belief is increasing in $t_b$.  The trial belief is also increasing in $t_b$, because higher prices make immediate consumption more attractive for a larger range of beliefs. 
} 
Clearly, when the price is sufficiently low or high (case 1 of the proposition), learning is suboptimal regardless of the buyer's belief.
When the price is moderate (case 2), if the buyer's prior belief lies in $[q(t_b), Q(t_b))$, then it is optimal for him to learn until either good news arrives (at which point he purchases the item), or his belief has fallen to $q(t_b)$ without good news arriving (at which point he walks away).  Let $\underline{\mu}=q(\underline{t}_b)=Q(\underline{t}_b)$ and $\overline{\mu}=q(\bar{t}_b)=Q(\bar{t}_b)$. Since learning is never optimal if the prior belief lies outside $[\underline{\mu},\overline{\mu}]$, from now on we assume $\mu_0\in [\underline{\mu},\overline{\mu}]$ unless otherwise specified. From Figure \ref{fig:olive}(a), it is obvious that for a given prior belief $\mu_0$, the buyer is willing to learn if the price $t_b$ lies in between $Q^{-1}(\mu_0)$ and $q^{-1}(\mu_0)$. 

         \begin{figure}[t]
        \centering
        \includegraphics[width=15cm]{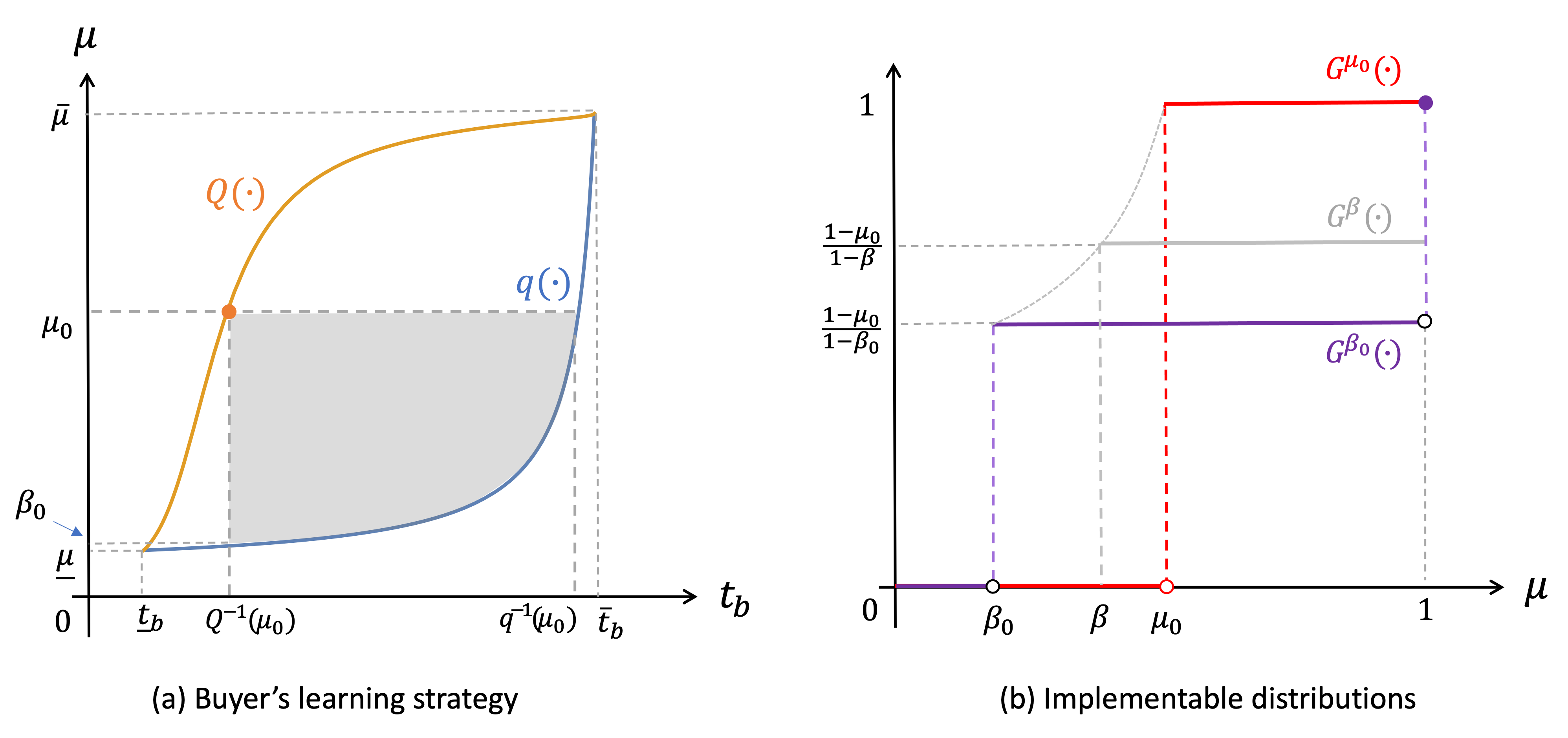}
        \caption{Buyer's learning strategy and distribution of stopping beliefs.}
        \label{fig:olive}
         \end{figure} 

\subsection{Implementable distributions}
Imagine that the buyer chooses the learning strategy such that he continues to learn until his posterior belief reaches either some value $\beta\le \mu_0$ or $1$. Ex ante, such a learning strategy corresponds to a certain distribution $G^{\beta}: [0,1]\rightarrow [0,1]$ of posterior (stopping) beliefs, constructed as follows: 
         \begin{equation*}
		  G^{\beta}(\mu) = \begin{cases}
			0 &\text{if } \mu \in [0,\beta), \\
			\frac{1-\mu_0}{1-\beta} &\text{if } \mu \in [\beta,1),  \\
			1 &\text{if } \mu=1.
		\end{cases}
		\label{G}
	\end{equation*}
 If $\beta \neq \mu_0$, then $G^{\beta}$ has a binary support $\{\beta,1\}$.
 Note that a smaller stopping belief $\beta$ indicates a longer (and hence more informative) learning process. 
Since $Q^{-1}(\mu_0)$ is the lowest price at which the buyer is willing to learn, the smallest stopping belief that can be realized is the quitting belief at that price, namely, $q(Q^{-1}(\mu_0))$. We let $\beta_0= q(Q^{-1}(\mu_0))$.\footnote{Note that $\beta_0$ is a function of $\mu_0$; however, for brevity, we write simply $\beta_0$ wherever there is no risk of ambiguity. } Hence, $G^{\beta_0}$ represents the most informative learning process possible. 
 
 If $\beta=\mu_0$, then $G^{\mu_0}$ represents the Dirac measure on the $\mu_0$.  It corresponds to the least informative learning process, under which the buyer does not learn at all. 
Figure  \ref{fig:olive}(b) shows $G^{\beta_0}$ in purple, $G^{\mu_0}$ in red, and one example of $G^{\beta}$ in grey.

Let $\mathcal{G}^{\mu_0}=\{G^{\beta}\}_{\beta\in[\beta_0,\mu_0]}$. Each element in this set 
corresponds to a reduced-form representation of the buyer's learning strategy. 

\begin{lem}\label{set}
    The set $\mathcal{G}^{\mu_0}$ consists of all distributions $G^\beta$ that can be implemented under some refund mechanism. 
\end{lem}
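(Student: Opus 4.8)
The plan is to establish the two inclusions. For the easy inclusion (every $G^\beta$ with $\beta\in[\beta_0,\mu_0]$ is implementable), I would use the free-return policy $(x_r,t_r)=(0,0)$: then returning and walking away both yield $0$, so the post-purchase problem coincides with the pre-purchase problem of Proposition~\ref{benchprop}. Hence at any price $t_b\in[Q^{-1}(\mu_0),q^{-1}(\mu_0)]$ the buyer purchases and learns down to $q(t_b)$, implementing $G^{q(t_b)}$; as $t_b$ ranges over this interval, $q(t_b)$ ranges over $[\beta_0,\mu_0]$ by monotonicity of $q$ in \eqref{quitbelief}, and the degenerate endpoint $G^{\mu_0}$ is also implemented by a sufficiently low non-refundable price (learning deterrence). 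Free returns therefore already trace out the entire family; the stochastic policies matter only for the later revenue comparison.

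For the converse I would proceed in three steps, the first two of which are routine. \emph{Shape:} between Poisson jumps the belief follows the deterministic, strictly decreasing path \eqref{lawofmotion}, and a jump sends it to $1$; since the stopping payoff $\max\{\mu v-t_b,\ \mu v x_r-t_r\}$ is convex and the continuation value solves \eqref{odeb}, the optimal rule is a single threshold, so the buyer learns on an interval and stops either at some $\beta$ (return) or at $1$ (keep). The induced distribution is thus supported on $\{\beta,1\}$, and the martingale property of beliefs pins the weights down exactly as in $G^\beta$; moreover $\beta\le\mu_0$ because the belief never rises absent a jump. \emph{Nonnegative return utility:} if $vx_r-t_r<0$, then $\mu v x_r-t_r<0$ for every $\mu$, so the post-purchase stopping payoff is pointwise below the pre-purchase payoff $\max\{\mu v-t_b,\,0\}$ while the good-news payoff $v-t_b$ is unchanged; taking the supremum over stopping rules gives $V_P(\mu_0;m)<V^0(\mu_0;t_b)$ on the positive-probability return branch, violating individual rationality. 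Hence $vx_r-t_r\ge0$.

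The crux is the lower bound $\beta\ge\beta_0$. The key device is the affine change of variables $\widehat V(\mu):=V_P(\mu;m)-(\mu v x_r-t_r)$, which I would verify turns the post-purchase Bellman equation into exactly the free-return benchmark with effective value $\tilde v=v(1-x_r)\le v$ and effective price $\hat t=t_b-t_r$: the forcing term of \eqref{odeb} becomes $\mu\lambda(\tilde v-\hat t)-k$, the return boundary maps to the walk-away boundary $0$, and the quitting belief becomes $\beta=k/[\lambda(\tilde v-\hat t)]$. Because the transformation preserves optimal actions, the buyer learns at $\mu_0$ in the original mechanism iff $\mu_0\le\widehat Q$, the trial belief of the reduced problem. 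I would then fix the net value $s=\tilde v-\hat t$ (equivalently $\beta$): the learning value depends only on $s$ and is therefore unchanged, while the keep-line $\mu\tilde v-\hat t=s-\tilde v(1-\mu)$ pivots about the common point $(1,s)$ and lies lower for $\mu<1$ as $\tilde v$ grows. Since $\partial/\partial\tilde v\big(\widehat V(\mu)-(\mu\tilde v-\hat t)\big)=1-\mu>0$, the value-matching belief $\widehat Q$ is increasing in $\tilde v$, so it is largest at $\tilde v=v$, where it equals $\mu_0$ precisely when the price is $Q^{-1}(\mu_0)$ and the quitting belief is $q(Q^{-1}(\mu_0))=\beta_0$. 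Concretely, if $\beta<\beta_0$ then in the value-$v$ problem with the same net value the price falls below $Q^{-1}(\mu_0)$, so its trial belief is below $\mu_0$; by the monotonicity the reduced trial belief $\widehat Q$ (at $\tilde v\le v$) is then also below $\mu_0$, contradicting $\mu_0\le\widehat Q$. Hence $\beta\ge\beta_0$.

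I expect this last step to be the main obstacle. The difficulty is that a stochastic return policy can, at a \emph{fixed} price, push the quitting belief below the free-return level, so the bound cannot come from the price alone; it is individual rationality---operating through $vx_r-t_r\ge0$ and the resulting value reduction $\tilde v\le v$---together with the monotonicity of the trial belief in the effective value that ultimately caps learning at the benchmark level $\beta_0$. By contrast, the shape argument (threshold stopping plus martingale weights) and the nonnegativity argument (a one-line pointwise-domination comparison with the pre-purchase option value) are comparatively standard.
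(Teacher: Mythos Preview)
Your proof is correct and takes a genuinely different (and more complete) route from the paper. The paper's one-paragraph argument addresses only the ``not implementable below'' half, and even there it only shows $\beta\ge q(t_b)$: smooth pasting forces the slope at the stopping belief to be $vx_r\ge0$, whence \eqref{odeb} makes the value at $\beta$ negative whenever $\beta<q(t_b)$; the (unstated) contradiction is exactly your IR comparison---a negative return payoff at $\beta$ would let the buyer strictly beat $V_P$ by replicating the same learning pre-purchase and walking away instead. The passage from $\beta\ge q(t_b)$ to $\beta\ge\beta_0$ is left to the surrounding text's assertion that $Q^{-1}(\mu_0)$ is the lowest price at which the buyer is willing to learn, which is the benchmark statement and does not obviously survive once a generous return option makes learning attractive at lower prices; your affine reduction $\widehat V=V_P-(\mu v x_r-t_r)$ together with the monotonicity of the reduced trial belief in $\tilde v$ fills this gap directly. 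The forward inclusion, which you obtain at once via the free-return policy, is deferred in the paper to the explicit construction of Proposition~\ref{implementable}. One streamlining: your step~(b) is correct but not actually needed for step~(c), since the key inequality $\tilde v=v(1-x_r)\le v$ follows from $x_r\ge0$ alone; what guarantees that the reduced trial belief $\widehat Q$ is well defined is the IC condition $\beta v x_r-t_r>\beta v-t_b$ (equivalently $\beta\tilde v-\hat t<0$), not IR.
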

Lemma \ref{set} implies that if $\beta \notin [\beta_0,\mu_0]$, then $G^{\beta}$ is not an optimal learning strategy for the buyer under any refund mechanism.

\subsection{Implementable mechanisms}\label{sec:inplementable}
Since the buyer's learning strategy is the best response to the refund mechanism, we now study implementable mechanisms---those that can induce the buyer  to adopt a specific learning strategy. 

\begin{defn}
 A refund mechanism $m$ is $G^{\beta}$-\textit{implementable}, for some $\beta\in [\beta_0,\mu_0)$, if the following conditions hold:
\vspace{-0.1cm}
\begin{enumerate}[itemindent=-2em]\setlength\itemsep{0.1cm}
    \item[] (IR)~~$V_P(\mu_0;m)\ge V^0(\mu_0;t_b)$;
    \item[] (OB).
    $V_P(\mu_0;m)\ge \max \{\mu_0 v-t_b, \mu_0 v x_r-t_r\}$ ; $\beta=\inf \{\mu\in[0,1]: V_P(\mu;m)>\mu v x_r-t_r\}$;
     \item[] (IC)~~$\beta v x_r-t_r>\beta v-t_b$ and $v-t_b>v x_r-t_r$. 
\end{enumerate}\vspace{-0.2cm}
A refund mechanism $m$ is $G^{\mu_0}$-\textit{implementable} if the following conditions hold:
\vspace{-0.1cm}
\begin{enumerate}[itemindent=-2em]\setlength\itemsep{0.1cm}
\item[] (IR)~~$V_P(\mu_0;m)\ge V^0(\mu_0;t_b)$;
    \item[]  (OB)~~$V_P(\mu_0;m)= \mu_0 v-t_b.$
\end{enumerate}
\end{defn}

In the case $\beta\neq \mu_0$, the obedience constraint (OB) for a mechanism to be $G^{\beta}$-implementable  consists of two parts: the buyer must be willing to learn at the prior belief, and he must be willing to stop learning if his posterior belief reaches either $\beta$ or $1$. Incentive compatibility (IC) requires that when the buyer stops learning (at belief $\beta$ or $1$), he is willing to truthfully report his posterior belief by choosing the correct allocation rule, i.e., by requesting a return at belief $\beta$ and consuming (i.e., deciding not to request a return) at belief $1$. Interestingly, individual rationality (IR) and the obedience constraint (OB)  together imply IC. If, at stopping, the buyer preferred the same allocation rule regardless of his belief, then learning would not be necessary. For a $G^{\mu_0}$-implementable mechanism, the obedience constraint (OB) requires the buyer to be willing to consume without learning. 

For a given $\beta$, there could exist multiple $G^{\beta}$-implementable mechanisms. To simplify the problem, first note that IR binds under any optimal refund mechanism. 
\begin{lem}\label{newlem1}
   Under any optimal refund mechanism $m$, IR binds, i.e., $V_P(\mu_0;m)= V^0(\mu_0;t_b)$. 
\end{lem}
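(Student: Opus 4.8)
The plan is to prove Lemma \ref{newlem1} by a contradiction argument: suppose an optimal mechanism $m=\{t_b,(x_r,t_r)\}$ has IR slack, so $V_P(\mu_0;m)>V^0(\mu_0;t_b)$, and exhibit a feasible perturbation that strictly raises the seller's revenue while preserving implementability. The natural lever is the price $t_b$, since the buyer's option value $V^0(\mu_0;t_b)$ is the binding outside option. First I would recall from the structure of $V_P$ that for a $G^\beta$-implementable mechanism the buyer's continuation value is pinned down by value-matching at the stopping belief $\beta$ (where he requests a return, earning $\beta v x_r - t_r$) and at belief $1$ (where he keeps the product, earning $v - t_b$). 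The key observation is that raising $t_b$ by a small amount $\varepsilon$ while simultaneously lowering the refund (equivalently raising $t_r$ appropriately, keeping the net return utility fixed) leaves the return branch untouched but lowers the buyer's keep-payoff. I would track how $V_P(\mu_0;m)$ responds to such a perturbation relative to how $V^0(\mu_0;t_b)$ responds.

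The crux is that an increase in $t_b$ reduces both $V_P(\mu_0;m)$ and $V^0(\mu_0;t_b)$, but the seller's revenue depends on the realized transfers, not directly on these values. I would argue as follows: if IR is slack, there is room to increase $t_b$ by some $\varepsilon>0$ small enough that (IR) still holds, i.e., $V_P(\mu_0;m)\ge V^0(\mu_0;t_b+\varepsilon)$ remains satisfied, because both sides are continuous in $t_b$ and the inequality is strict at $\varepsilon=0$. Simultaneously I must check that (OB) and (IC) survive the perturbation—since the stopping belief $\beta$ and the associated return branch can be held fixed by an appropriate joint adjustment of $(t_b,t_r)$, the learning strategy $G^\beta$ is unchanged, so obedience at $\beta$ and at $1$ is preserved. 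The seller's ex-ante revenue, which is a weighted combination of $t_b$ (received from a high-type buyer who keeps) and $t_r$ (received on return), strictly increases because at least one of the transfers has gone up while the distribution of stopping beliefs is unchanged. This yields the desired contradiction with optimality.

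The main obstacle I anticipate is handling the perturbation cleanly across the two regimes—the $\beta\neq\mu_0$ case (returns allowed, buyer learns after purchase) and the degenerate $\beta=\mu_0$ case (learning deterrence, no returns). In the deterrence regime the only transfer is $t_b$, and the binding (OB) constraint is the equality $V_P(\mu_0;m)=\mu_0 v - t_b$; here I would need to show that a slack IR is simply incompatible with this (OB) equality once one recognizes that $V^0(\mu_0;t_b)\le \mu_0 v - t_b$ always, so slack IR would force a contradiction or directly permit a price increase. The more delicate case is the interior one, where I must verify that increasing $t_b$ does not inadvertently push the induced stopping belief outside $[\beta_0,\mu_0]$ or violate the strict inequalities in (IC); by Lemma \ref{set} and the continuity of the value function in the parameters, a sufficiently small $\varepsilon$ keeps everything in the feasible region. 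A careful treatment would also confirm that the perturbation does not merely shift surplus but genuinely increases expected revenue, which follows because the buyer's slack in IR represents rent that the seller can extract without changing the allocation.
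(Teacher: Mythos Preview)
Your contradiction strategy is right, but the paper's execution is both different and cleaner. The paper holds the price $t_b$ \emph{fixed} and perturbs only the return policy $(x_r,t_r)$. Because $V_P(\cdot;m)$ and $V^0(\cdot;t_b)$ solve the \emph{same} linear ODE on the learning region, slack at $\mu_0$ propagates to slack at $\beta$: $V_P(\beta;m)>V^0(\beta;t_b)$. Value-matching and smooth-pasting give $t_r=\beta V_P'(\beta)-V_P(\beta)$, and the ODE at $\beta$ ties $V_P'(\beta)$ and $V_P(\beta)$ linearly. So the seller can slide down $V_P(\beta)$ (raising $V_P'(\beta)$, hence $x_r$) while keeping $\beta$ the stopping belief; $t_r$ goes up, $t_b$ and the distribution $G^\beta$ are untouched, and revenue strictly rises. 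That is the entire proof.

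Your route through $t_b$ is workable but your description has a gap. Raising $t_b$ changes the ODE itself, so $V_P$, $V^0$, and the smooth-pasting slope $V_P'(\beta)$ all move; to keep $\beta$ a stopping point you must in general adjust $x_r$, not just $t_r$. Your sentence ``raising $t_r$ appropriately, keeping the net return utility fixed'' is internally inconsistent: with $x_r$ fixed, raising $t_r$ lowers the return utility $\beta v x_r-t_r$, it does not leave it fixed. You also never verify that revenue increases under your perturbation---you assert ``at least one of the transfers has gone up'' without checking whether the required adjustments push $t_r$ down. A clean price-based variant does exist (raise $t_b$ and $t_r$ by the same $\varepsilon$, keep $x_r$; then $V_{P,\text{new}}=V_{P,\text{old}}-\varepsilon$ everywhere, smooth-pasting at $\beta$ survives, both transfers rise, and IR holds for small $\varepsilon$ since $V^0$ is nonincreasing in $t_b$), but that is not the perturbation you wrote down.

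One concrete error in the deterrence case: you claim ``$V^0(\mu_0;t_b)\le \mu_0 v - t_b$ always.'' The inequality goes the other way, since immediate purchase is one option in the maximum defining $V^0$; hence $V^0(\mu_0;t_b)\ge \mu_0 v-t_b$. With the correct sign, (OB) gives $V_P=\mu_0 v-t_b\le V^0$, and combined with (IR) $V_P\ge V^0$ this forces equality immediately---no perturbation needed.
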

That is, under any optimal refund mechanism, the buyer obtains the same continuation value as he would if the mechanism prohibited returns.\footnote{To see this, suppose the mechanism is $G^{\beta}$-implementable and IR is slack.  Then the seller can increase the return transfer $t_r$ and adjust the return allocation probability $x_r$ accordingly without violating any of the conditions for $G^{\beta}$-implementability. }
This lemma implies that if $\beta\neq \mu_0$, then there is only one $G^{\beta}$-implementable mechanism for each fixed price. 

Before explicitly characterizing the implementable mechanisms, we define the buyer's marginal cost of information. Recall that, conditional on learning and as long as no news arrives, the buyer's posterior belief is a deterministic function of time $\mu(\tau)$, determined by the law of motion specified in equation (\ref{lawofmotion}).\footnote{We provide its characterization in Appendix B. }  With slight abuse of notation, let $\tau(\mu)=(\mu)^{-1}(\mu)$ be inverse of $\mu(\tau)$, which represents the time it takes the buyer to reach the belief $\mu$.  
Then $k|\tau'(\mu)|$ represents the marginal time cost per unit change in stopping belief.\footnote{Note that $|\tau'(\mu)|=1/|\mu'(\tau)|=1/(\lambda \mu(1-\mu))$. }   
We call this quantity the \textit{marginal cost of information} and denote it by $MC(\mu)$:
\[
MC(\mu):=k|\tau'(\mu)|=\frac{k}{\lambda \mu(1-\mu)}. 
\]

Define the return policy
$(x_r, t_r): [q(t_b),\mu_0)\times[0,v] \rightarrow [0,1]\times [0,t_b]$
as a mapping taking the buyer's privately reported stopping belief and the price to an allocation rule. 

\begin{prop}\label{implementable}
    A refund mechanism with price $t_b$ and return policy $(x_r(\beta,t_b),t_r(\beta,t_b))$ is $G^{\beta}$-implementable, where $\beta\in[q(t_b),\mu_0)$, if 
    \begin{equation}\label{smooth}
     x_r(\beta,t_b)=V^0_1(\beta;t_b)/v,
 \end{equation}
  \begin{equation}\label{value}
     t_r(\beta,t_b)=\beta V_1^0(\beta;t_b)-V^0(\beta;t_b)=\int_{q(t_b)}^{\beta} MC(\mu)\  d\mu.
 \end{equation}
  A  mechanism with price $t_b\le Q^{-1}(\mu_0)$ and a no-return policy $(1,t_b)$ is $G^{\mu_0}$-implementable. 
\end{prop}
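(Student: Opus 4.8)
The plan is to treat $G^\beta$-implementability as a verification problem for the buyer's post-purchase optimal-stopping problem, the organizing idea being that equations (\ref{smooth}) and (\ref{value}) are engineered to make the return-utility line $r(\mu) := \mu v x_r - t_r$ \emph{tangent} to the benchmark value $V^0(\cdot; t_b)$ at the target stopping belief $\beta$. First I would record this tangency directly from the formulas: since $x_r = V_1^0(\beta; t_b)/v$, the line has slope $r'(\mu) = v x_r = V_1^0(\beta; t_b)$, and substituting $t_r = \beta V_1^0(\beta; t_b) - V^0(\beta; t_b)$ gives $r(\beta) = \beta V_1^0(\beta;t_b) - t_r = V^0(\beta; t_b)$. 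To justify the second equality in (\ref{value}), I would differentiate (ODE) once in $\mu$ and combine it with (ODE) to obtain $V_{11}^0(\mu;t_b) = k/(\lambda\mu^2(1-\mu))$, so that $\tfrac{d}{d\beta}[\beta V_1^0(\beta;t_b) - V^0(\beta;t_b)] = \beta V_{11}^0(\beta;t_b) = MC(\beta)$; integrating from the base point $q(t_b)$, where value matching and smooth pasting give $V^0(q;t_b) = V_1^0(q;t_b) = 0$, yields $t_r = \int_{q(t_b)}^\beta MC(\mu)\,d\mu$. The same computation shows $V^0(\cdot;t_b)$ is strictly convex on the learning region.

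Next I would construct and verify the candidate post-purchase value against the optimal-stopping variational inequality. Because the learning dynamics are identical to those defining $V^0$, the Bellman equation for $V_P$ yields the \emph{same} (ODE) wherever the buyer optimally learns; hence on the continuation region the candidate is the ODE solution pinned by the single value-matching condition $V_P(\beta;m) = r(\beta) = V^0(\beta;t_b)$, which, as (ODE) is first order, forces $V_P(\mu;m) = V^0(\mu;t_b)$ on $[\beta, Q(t_b))$. I would then propose $V_P(\mu;m) = r(\mu)$ for $\mu \le \beta$, $V_P(\mu;m) = V^0(\mu;t_b)$ for $\beta \le \mu < Q(t_b)$, and $V_P(\mu;m) = \mu v - t_b$ for $\mu \ge Q(t_b)$. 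On the learning region, convexity gives $V^0 \ge r$ with equality only at $\beta$, and $V^0 \ge \mu v - t_b$ by the definition of $Q(t_b)$, so continuation is optimal and both stopping options are strictly dominated there.

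The main obstacle is to confirm the \emph{threshold} structure, i.e.\ that returning is optimal for all $\mu < \beta$, which is what pins down $\beta$ as $\inf\{\mu : V_P(\mu;m) > \mu v x_r - t_r\}$. For this I would introduce the continuation-gain operator $\mathcal{A}V := -k + \mu\lambda(v - t_b) - \mu\lambda V - \mu(1-\mu)\lambda V_1$, so that learning is strictly beneficial exactly where $\mathcal{A}V_P > 0$. A direct computation gives the linear-in-$\mu$ expression $\mathcal{A}r = -k + \mu\lambda\bigl[(v - t_b) - (v x_r - t_r)\bigr]$, which vanishes at $\mu=\beta$ by the tangency (there $r$ and $V^0$ share value and slope, and $\mathcal{A}V^0(\beta)=0$). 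Its slope in $\mu$ equals $\lambda\bigl[(v - t_b) - r(1)\bigr]$, and because $r$ lies strictly below the convex $V^0$ at $\mu = 1$ we have $r(1) = v x_r - t_r < V^0(1;t_b) = v - t_b$; the slope is therefore positive, so $\mathcal{A}r < 0$ for all $\mu < \beta$, establishing that returning dominates learning below $\beta$. This is exactly the place where the high-type incentive condition $v - t_b > v x_r - t_r$ does the work.

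It then remains to collect the constraints. IR binds because $V_P(\mu_0;m) = V^0(\mu_0;t_b)$ (as $\mu_0$ lies in the learning region), matching Lemma \ref{newlem1}; OB holds since $V_P(\mu_0;m) = V^0(\mu_0;t_b) > \max\{\mu_0 v - t_b, r(\mu_0)\}$ and the stopping set is precisely $(\beta,1]$; and IC holds because $r(\beta) = V^0(\beta;t_b) > \beta v - t_b$ (as $\beta < Q(t_b)$) and $v - t_b > r(1)$. Feasibility $x_r \in [0,1)$ follows from $0 \le V_1^0(\beta;t_b) < v$ (convexity and $V_1^0(Q;t_b)=v$), while $0 < t_r < t_b$ follows by reading $t_r = -r(0)$ off the tangent line and comparing its intercept with that of $\mu v - t_b$ using convexity. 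Finally, for the $G^{\mu_0}$ claim, the no-return policy $(1, t_b)$ makes the return utility coincide with the keep utility; when $t_b \le Q^{-1}(\mu_0)$ we have $\mu_0 \ge Q(t_b)$, so by Proposition \ref{benchprop} immediate consumption is optimal and $V_P(\mu_0;m) = \mu_0 v - t_b = V^0(\mu_0;t_b)$, which verifies both the binding IR and the OB equality defining $G^{\mu_0}$-implementability.
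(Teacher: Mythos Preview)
Your proposal is correct and follows the same core approach as the paper: both hinge on the tangency (smooth-pasting and value-matching) of the return line to $V^0$ at $\beta$, and both derive the integral identity in (\ref{value}) by differentiating (ODE) to obtain $\mu V^0_{11}(\mu;t_b)=MC(\mu)$ and then integrating from the base point $q(t_b)$. The only difference is scope: the paper's proof invokes Lemma~\ref{newlem1} and then dispatches implementability in one line as ``standard smooth-pasting and value-matching,'' devoting its effort solely to the second equality in (\ref{value}), whereas you carry out the full verification of the post-purchase optimal-stopping problem (the $\mathcal{A}r$ analysis establishing the threshold structure, and the explicit checks of IR, OB, IC, and feasibility of $(x_r,t_r)$), which the paper leaves implicit.
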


Note that for a price $t_b$, $G^{q(t_b)}$ represents the longest (most informative) learning process possible: when the buyer stops learning at the quitting belief $q(t_b)$, he obtains zero continuation value. Therefore, to implement a shorter (less informative) learning process, with a stopping belief $\beta > q(t_b)$, the seller must offer the buyer a positive probability of keeping the product upon return. This  creates a positive allocation surplus upon return, so that the buyer prefers requesting a return over continuing to learn.
The implementability of $G^{\beta}$ requires the allocation rate $x_r$ to be proportional to the slope of the buyer's continuation value at $\beta$ (this is known as the smooth-pasting condition). Additionally, the return transfer $t_r$ must make the buyer indifferent between requesting a return and continuing to learn  (this is known as the value-matching condition). Therefore, $t_r$ is the difference between the allocation surplus upon return and the buyer's continuation value at belief $\beta$. This is the first equality in (\ref{value}).

Moreover, there is a deeper connection between the return transfer $t_r$ and the buyer's information cost. 
This is captured by the second equality in (\ref{value}), which explains how the seller extracts surplus from a return policy. 
If the buyer learns until his belief reaches the quitting belief, $\beta=q(t_b)$, he incurs the maximal information cost of $\int_{q(t_b)}^{\mu_0} MC(\mu) d\mu$. In this case, $t_r=\int_{q(t_b)}^{q(t_b)} MC(\mu) d\mu=0$; that is, the seller cannot generate a positive return transfer if the buyer incurs the maximal information cost. 
Conversely, if the buyer stops learning at some belief $\beta>q(t_b)$, he can lower his information cost by the amount of $\int_{q(t_b)}^{\beta} MC(\mu) d\mu$. However, these savings are entirely extracted by the seller through the return policy defined in Proposition \ref{implementable}.

We now define the \emph{transfer function} $t:[0,1]\times[0,v] \rightarrow [0,v]$, which maps the pair consisting of the buyer's reported stopping belief and the price to a transfer: 
 \begin{equation}
		  t(\mu;t_b) = \begin{cases}
			0 &\text{if } \mu \in [0,q(t_b)) ,\\
			t_r(\mu,t_b) &\text{if } \mu \in [q(t_b),\mu_0) , \\
			t_b &\text{if } \mu \in [\mu_0,1].
		\end{cases}
		\label{tmu}
	\end{equation}
Figure \ref{transferfunction}(a) shows a graph of the transfer function; note that it has a jump at $\mu_0$. 

We are now ready to write down the seller's program. We distinguish between two cases: (a) learning deterrence, which implements $G^{\mu_0}$, the Dirac measure on the prior belief, and (b) learning encouragement, which implements $G^{\beta}\in \mathcal{G}^{\mu_0}\setminus G^{\mu_0}$ (i.e., implementable distributions other than $G^{\mu_0}$). 

\textbf{Learning deterrence:} In this case, the seller's program is
\begin{equation}\label{D}\tag{D}
\begin{aligned}
  \sup_{t_b} &\ \int_0^1 t(\mu;t_b) \ dG^{\mu_0} (\mu)      \\
  &\textnormal{s.t. } \ \ t_b\le Q^{-1}(\mu_0) .
\end{aligned}
\end{equation}
Since $G^{\mu_0}$ assigns probability one to the prior belief, the optimization above maximizes the price, subject to the condition that $G^{\mu_0}$ is implementable, i.e., $t_b\le Q^{-1}(\mu_0)$.
Obviously, $t_b=Q^{-1}(\mu_0)$ is the highest price under which the buyer will forgo private learning. 
We define \textit{learning deterrence} as a mechanism with the learning-deterrence  price $t_b= Q^{-1}(\mu_0)$ and a no-return policy.  Such a mechanism is always feasible, so the revenue from it is a lower bound on the revenue from the optimal refund mechanism.

\textbf{Learning encouragement:} In this case, the seller's program is
\begin{equation}\label{P}\tag{P}
\begin{aligned}
     \sup_{t_b\in (Q^{-1}(\mu_0),q^{-1}(\mu_0)]} &\left\{\quad \sup_{G^{\beta}}\ \int_0^1 t(\mu;t_b) \ dG^{\beta} (\mu) \quad \right\}\\
     &\qquad\textnormal{s.t. } \ \ \ G^{\beta} \in \mathcal{G}^{\mu_0}|t_b,
\end{aligned}
\end{equation}
where $\mathcal{G}^{\mu_0}|t_b=\{G^{\beta}\in \mathcal{G}^{\mu_0}: \beta\in [q(t_b),\mu_0) \}$ is the set of implementable distributions under price $t_b$, excluding  $G^{\mu_0}$.  

The inner optimization of (\ref{P}) is a typical information design problem, in which the seller maximizes the expected value of a function (in this case the transfer function) over a set of distributions (in this case the implementable distributions).\footnote{In standard information design, the information designer's problem is $\max_{G} \int t dG$ subject to the constraint that $G$ is a mean-preserving spread of the prior belief $\mu_0$ (see  \cite{KamenicaGentzkow2011}).  Here, we restrict the maximization to the set of implementable distributions, and the transfer function is constructed through implementability.  } 
Since the inner optimization is carried out for a fixed price, in the second step (the outer optimization) the seller  optimizes with respect to the price. 
Note that in the outer optimization, we consider only prices higher than the learning-deterrence price $Q^{-1}(\mu_0)$.
This is because the seller's expected revenue from implementing $G^{\beta}$ with the learning-deterrence price is strictly below her revenue from the learning-deterrence mechanism (i.e., $Q^{-1}(\mu_0)$), even as $\beta\rightarrow \mu_0$.\footnote{Recall that the transfer function has a jump at $\mu_0$. } 
Thus, at the learning-deterrence price, encouraging learning is always suboptimal.

In Section \ref{Exogenousprice}, we study the optimal learning process $G^{\beta}$ for various exogenously fixed prices. 
In Section \ref{endogenousprice} and \ref{opm}, we endogenize the price and characterize the optimal refund mechanism, which is the solution to the maximum of the two programs (\ref{D}) and (\ref{P}).

\section{Learning processes for exogenously fixed prices}\label{Exogenousprice}

In this section we discuss how the optimal learning process $G^{\beta}$ changes with the price and show how the seller might benefit from implementing an interior stopping belief $\beta>q(t_b)$. 
Let $\beta^*(t_b)$ be the optimal stopping belief for a given price $t_b$. 
Recall that for a given prior belief  $\mu_0$, the relevant range of prices is $[Q^{-1}(\mu_0),q^{-1}(\mu_0)]$. 
Obviously, with the learning-deterrence price, $t_b=Q^{-1}(\mu_0)$, the optimal learning process is the one with no learning, $G^{\mu_0}$. 
For $t_b>Q^{-1}(\mu_0)$, the optimal learning process is the solution to the inner optimization of (\ref{P}) across all implementable distributions (i.e.,  all $G^{\beta}$ with $\beta\in [q(t_b),\mu_0)$).

\begin{figure}[t]
        \centering
        \includegraphics[width=15cm]{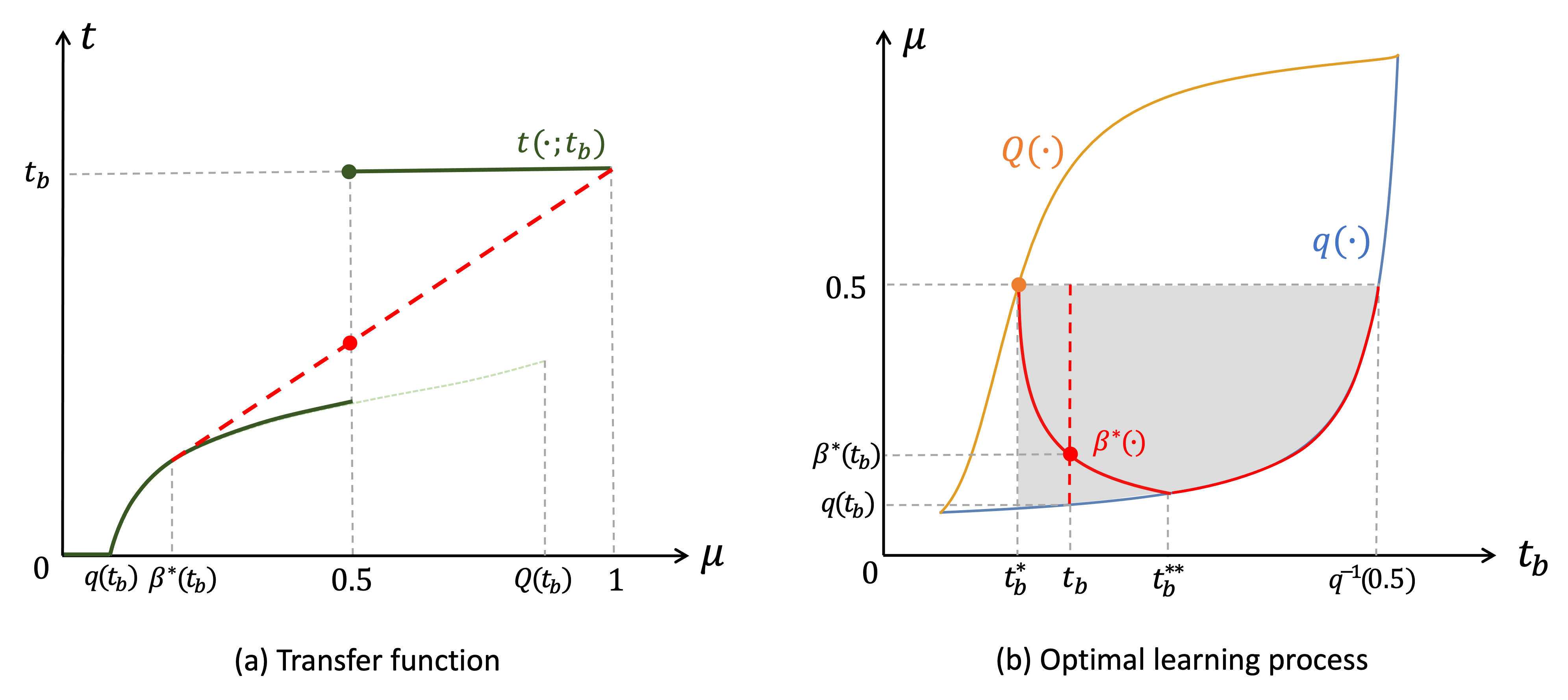}
        \caption{Optimal information process for a fixed price. \small In (b), the orange dot and the red curve represent the optimal stopping belief $\beta^{*}$ as a function of $t_b$. When $t_b\in(t^*_b,t^{**}_b)$, the solution is interior, i.e., $\beta^{*}(t_b)=\beta^o(t_b)$. 
        }
        \label{transferfunction}
         \end{figure} 

For some prices, an intermediate (interior) learning process may be optimal. To see why, note that the transfer function $t(\cdot;t_b)$ given by (\ref{tmu}) coincides with the return transfer $t_r(\cdot,t_b)$ when $\mu\in[q(t_b),\mu_0)$.  Hence, when $\mu\in [q(t_b),\mu_0)$, the slope of the transfer function equals the marginal cost of information: 
\[
t'(\mu;t_b)=t_r'(\mu;t_b)=MC(\mu)=\frac{k}{\lambda \mu(1-\mu)}.
\]
That is, if the buyer acquires one more unit of information (as measured by the decrease in his stopping belief), the seller's return transfer falls by a margin of $MC(\mu)$. Notably, the marginal cost of information exhibits symmetry over beliefs, i.e.,  $MC(\mu)$ is symmetric about $\mu=0.5$, and it is higher when the belief is more extreme.\footnote{It indicates that it takes longer for the belief to move from $\mu$ to $\mu-d$ for small $d$ if $\mu$ is close to 0 or 1. }
Consequently,  the marginal loss on the return transfer is greater when the buyer's stopping belief is closer to the quitting belief (if the quitting belief is small). This could plausibly imply that an intermediate level of learning is optimal.

For the sake of exposition, consider the special case where the prior belief is $0.5$.
Since the marginal cost of information is decreasing on $[0,0.5]$, the transfer function $t$ is increasing and concave on $[q(t_b),0.5)$. 
 Figure \ref{transferfunction}(a) presents a transfer function with price $t_b$ and $\mu_0=0.5$. By a slight modification of the concavification argument in information design,\footnote{The concavification is modified as follows: it is given by $\sup_{\beta, \gamma \in[0,1]} \gamma t(\beta;t_b)+(1-\gamma) t(1;t_b) $, subject to Bayesian plausibility  $\gamma \beta+(1-\gamma) 1=\mu_0$ and an additional restriction on the distributions, namely $\beta\in[q(t_b),\mu_0)$.  
 The red dashed line in Figure \ref{transferfunction}(a) is the concavification on $\mu_0\in[\beta^*(t_b),1]$ where $\beta^*(t_b)$ is the solution to the concavification for those values of $\mu_0$. } 
 we have that the red dot in Figure \ref{transferfunction}(a) indicates the maximum expected value of the transfer function across the set of implementable distributions under the price $t_b$.\footnote{All distributions with a binary support $\{\beta,1\}$ where $\beta \in [q(t_b),0.5)$.}
 It is a convex combination of $t(\beta^*(t_b);t_b)$ and $t(1;t_b)$ with respect to the distribution $G^{\beta^*(t_b)}$. 
The figure thus shows that $G^{\beta^*(t_b)}$ induces a local maximum. Hence, if $ \beta^*(t_b)\in(q(t_b),\mu_0)$ is an interior stopping belief, then it is determined by the fact that the marginal revenue of information equals zero,
\[
  \lim_{d \rightarrow 0}\  \frac{\int_0^1 t d G^{\beta+d}-\int_0^1 t dG^{\beta}}{d}=0 . 
\]
 Let $\gamma(\beta):= G^{\beta}(\beta)$ be the probability that the belief $\beta$ is realized; it is also the probability that the buyer will request a return, so we sometimes refer to it as the \emph{return rate}. The above limit condition reduces to the first-order condition 
\begin{equation}\label{foc}
     \frac{t'(\beta;t_b)}{t(1;t_b)-t(\beta;t_b)}=\frac{\gamma'(\beta)}{\gamma(\beta)}\equiv \frac{1}{1-\beta}.
 \end{equation}
 The left-hand side is the proportional change in transfer, while the right-hand side is the proportional change in return rate. 
Let $\beta^{o}(t_b)$ be the solution to equation (\ref{foc}). It is decreasing in $t_b$,\footnote{Implicit differentiation of equation (\ref{foc}) with respect to $t_b$ yields $(\beta^o)'=\frac{(\beta^o)^2 (1-\beta^o)}{(2 \beta^o-1)(1-q(t_b))k/\lambda}\le 0$. \label{footnote} } indicating that when the price is increased, the seller leverages a stochastic return policy to prolong the buyer's learning process and make a successful sale more likely. In other words, with the optimally designed stochastic return policy, the seller can simultaneously increase the price and the probability of a successful sale.

The monotonicity of $\beta^{o}(t_b)$ implies that there exist two thresholds $t^*_b<t^{**}_b$, plotted in Figure \ref{transferfunction}(b), such that
\begin{equation}\label{boundary}
    \beta^o(t^*_b)=Q(t^*_b)\quad \text{and} \quad \beta^o(t^{**}_b)=q(t^{**}_b).
\end{equation}
Substituting them in the first-order condition (\ref{foc}), we obtain $t^*_b=Q^{-1}(0.5)$ and $t^{**}_b=v/2$,\footnote{A detailed calculation can be found in the proof of Proposition \ref{step1} in Appendix A. 
}
both of which are invariant to the prior belief.\footnote{This is because  $\beta^o(t_b)$ is invariant to the prior belief by the typical concavification argument: the support of the optimal information structure does not depend on the prior if the prior belongs to a region where the concavification is strictly above the transfer (value) function.} 
In particular, $t^*_b$ happens to be the learning-deterrence price for $\mu_0=0.5$. 

We are now ready to summarize  the optimal learning process and the optimal return policy for an exogenously fixed price. 
\begin{prop}\label{step1}
There exist two thresholds, $t^*_b<t^{**}_b$, such that for $\mu_0=0.5$ the optimal learning process $G^{\beta^*}$ is as follows:
\begin{enumerate}\setlength\itemsep{0cm}
 \item If $t_b=t^*_b$, then   $\beta^*(t_b)=\mu_0$ (no learning).
    \item If $t_b\in (t^*_b,t^{**}_b)$, then   $\beta^*(t_b)=\beta^o(t_b)$ (interior learning).
    \item If $t_b\in [t^{**}_b,q^{-1}(\mu_0)]$, then  $\beta^*(t_b)=q(t_b)$ (full learning). 
\end{enumerate}
The optimal return policy for the price $t_b$ is determined by the implementation of $G^{\beta^*(t_b)}$. 
\end{prop}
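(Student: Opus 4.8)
The plan is to solve the inner optimization of program~(\ref{P}) for a fixed price as a restricted concavification problem, and then to trace how its solution moves as $t_b$ varies. By Lemma~\ref{set} and Proposition~\ref{implementable}, for fixed $t_b$ the feasible learning processes are exactly $\{G^\beta:\beta\in[q(t_b),\mu_0)\}$ together with $G^{\mu_0}$, and implementing $G^\beta$ yields expected revenue
\[
\Pi(\beta;t_b)=\gamma(\beta)\,t(\beta;t_b)+(1-\gamma(\beta))\,t(1;t_b)=\gamma(\beta)\,t_r(\beta;t_b)+(1-\gamma(\beta))\,t_b,
\]
where $\gamma(\beta)=(1-\mu_0)/(1-\beta)$ is pinned down by Bayesian plausibility. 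Writing $\Pi(\beta;t_b)=t_b-(1-\mu_0)\,s(\beta)$ with $s(\beta)=(t_b-t_r(\beta;t_b))/(1-\beta)$ the slope of the chord joining $(\beta,t_r(\beta;t_b))$ to $(1,t_b)$, maximizing revenue over $\beta$ is equivalent to minimizing the chord slope $s$ over $[q(t_b),\mu_0)$. This reformulation is the workhorse of the argument.

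First I would establish the shape of the transfer function. For $\mu_0=0.5$ the marginal cost of information $MC(\mu)=k/(\lambda\mu(1-\mu))$ is strictly decreasing on $(0,0.5]$; since $t_r'(\mu;t_b)=MC(\mu)$ by~(\ref{value}), the transfer $t_r(\cdot;t_b)$ is strictly increasing and strictly concave on $[q(t_b),\mu_0)$, with $t_r(q(t_b);t_b)=0$. Differentiating gives $s'(\beta)=(s(\beta)-t_r'(\beta))/(1-\beta)$, so interior critical points satisfy the tangency condition $s(\beta)=t_r'(\beta)$, which is precisely the first-order condition~(\ref{foc}). Because $\tfrac{d}{d\beta}[s(\beta)-t_r'(\beta)]=s'(\beta)-t_r''(\beta)$ equals $-t_r''(\beta)>0$ at any critical point, each critical point is a strict local minimum of $s$; hence it is unique and is the global minimizer $\beta^o(t_b)$ of $s$ on the real line — the standard concavification picture in which the tangent from $(1,t_b)$ touches the concave branch $t_r$.

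Next I would convert the location of $\beta^o(t_b)$ relative to $[q(t_b),\mu_0)$ into the three price regimes. Using the monotonicity $(\beta^o)'(t_b)\le 0$ (obtained by implicit differentiation of~(\ref{foc})) together with the fact that $q(t_b)$ and $Q(t_b)$ are increasing in $t_b$ (Proposition~\ref{benchprop}), the equations in~(\ref{boundary}), $\beta^o(t^*_b)=Q(t^*_b)$ and $\beta^o(t^{**}_b)=q(t^{**}_b)$, have unique solutions with $t^*_b<t^{**}_b$. For $t_b\in(t^*_b,t^{**}_b)$ the minimizer is interior, so $\beta^*(t_b)=\beta^o(t_b)$; for $t_b\ge t^{**}_b$ the unconstrained minimizer has fallen below $q(t_b)$, $s$ is increasing on the feasible interval, and the constraint binds at $\beta^*(t_b)=q(t_b)$ (full learning). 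To pin down the explicit thresholds I would substitute the boundary conditions into~(\ref{foc}): at $\beta^o=q(t_b)$ one has $t_r=0$, and tangency collapses (using $q(t_b)=k/(\lambda(v-t_b))$) to $t_b=v-t_b$, giving $t^{**}_b=v/2$; the condition $\beta^o(t^*_b)=Q(t^*_b)$ forces the tangency point to the prior $\mu_0=0.5$, which yields $t^*_b=Q^{-1}(0.5)$, the learning-deterrence price.

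The delicate point, which I would treat carefully rather than by the clean concavification heuristic, is the left endpoint. At $t_b=t^*_b$ the minimizer $\beta^o$ sits at $\mu_0$, so the supremum of $\Pi$ over $\beta\in[q(t^*_b),\mu_0)$ is only approached as $\beta\to\mu_0^-$ and equals $\lim t_r(\mu;t^*_b)<t_b$; because the jump of the transfer function~(\ref{tmu}) at $\mu_0$ makes the no-learning distribution $G^{\mu_0}$ — implementable exactly when $t_b\le Q^{-1}(\mu_0)=t^*_b$ — deliver the strictly higher revenue $t_b$, the optimum is $\beta^*(t^*_b)=\mu_0$ (no learning). This comparison across the discontinuity, rather than the interior calculus, is the main obstacle; one must also verify that on $(t^*_b,t^{**}_b)$ the interior value strictly dominates both the $\beta\to\mu_0^-$ limit and the endpoint value at $q(t_b)$, which again follows from the strict concavity of $t_r$ and the tangency property. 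Once $\beta^*(t_b)$ is identified in each regime, the optimal return policy is read off directly from Proposition~\ref{implementable} by setting $(x_r,t_r)=(V_1^0(\beta^*;t_b)/v,\ t_r(\beta^*,t_b))$.
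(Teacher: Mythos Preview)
Your proposal is correct and follows essentially the same approach as the paper: both reduce the inner problem to the concavification/first-order condition~(\ref{foc}), use the monotonicity of $\beta^o(t_b)$ together with that of $q(\cdot)$ and $Q(\cdot)$ to locate the thresholds via~(\ref{boundary}), and compute $t_b^{**}=v/2$ and $t_b^*=Q^{-1}(0.5)$ by substituting the boundary conditions back into the FOC. Your chord-slope reformulation $s(\beta)=(t_b-t_r)/(1-\beta)$ is just an equivalent presentation of the same concavification picture; the one place where you are slightly looser than the paper is the step ``$\beta^o(t^*_b)=Q(t^*_b)$ forces $\beta^o=0.5$,'' which the paper verifies by explicit algebra (using $t_b=\beta^o v-V^0$ at $\beta^o=Q$ together with the ODE to reduce~(\ref{foc}) to $1/\beta^o=1/(1-\beta^o)$), and which you should fill in rather than assert.
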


The orange dot and the red curve in Figure \ref{transferfunction}(b) represent the optimal stopping belief $\beta^*(t_b)$ when $\mu_0=0.5$. 
The orange dot corresponds to case 1 of Proposition \ref{step1}: at the learning-deterrence price $t_b=t^*_b=Q^{-1}(0.5)$,   a no-return policy is optimal. 
The decreasing part of the red curve corresponds to case 2: at a moderate price $t_b\in (t^*_b,t^{**}_b)$, the optimal return policy is stochastic and is determined by the implementation of $G^{\beta^o(t_b)}$. In this case, the buyer's stopping belief under the optimal return policy is decreasing in price. 
Finally, the increasing part of the red curve corresponds to case 3: at a price higher than $t^{**}_b$, a free-return policy is optimal. In this case, the buyer's stopping belief under the optimal return policy is increasing in price.

For $\mu_0\neq 0.5$, the structure of the optimal learning process is similar to that described in Proposition \ref{step1}, but with additional considerations related to the relevant ranges of prices. 
We provide a complete statement in Corollary \ref{fullcharacterization} in Appendix A.

\section{The optimal return policies}\label{endogenousprice}
In this section, we study the seller's optimization over prices, i.e., the outer optimization of (\ref{P}). We restrict our attention to the case $t_b\in(t^*_b,t^{**}_b)$, because this is the range of prices for which the optimal refund mechanism could involve a stochastic return policy. In this case, the induced learning process $G^{\beta}$ must satisfy $\beta=\beta^o(t_b)$, and so (\ref{P}) can be simplified to the following:  
\begin{equation*}\label{PR}\tag{P-R}
\begin{aligned}
     \sup_{t_b\in(Q^{-1}(\mu_0),q^{-1}(\mu_0)]}   \quad &\int_0^1 t(\mu;t_b) d G^{\beta^o(t_b)}(\mu)\\
    \textnormal{s.t. }\qquad \quad & \qquad t^*_b\le t_b \le t^{**}_b.
\end{aligned}
\end{equation*}
If the solution to this problem is interior with respect to both constraints on $t_b$, then the optimal return policy may be stochastic. If the solution lies at the boundary of either constraint, then the optimal return policy has to be deterministic.  

To study this problem, we first define the 
\textit{price elasticity of the return rate}:
\[
\mathcal{E}_r(\beta):=\frac{d\gamma(\beta)/ \gamma(\beta)}{d t_b/t_b}.
\]
Recall that $\gamma(\beta)$ denotes the return rate.
The elasticity $\mathcal{E}_r(\beta)$ measures how sensitive the return rate is to the price; it depends on the buyer's learning process and how it changes with the price. 
The next lemma shows that when the learning process is optimally adjusted to the price, i.e., $\beta=\beta^o(t_b)$, the price elasticity of the return rate is larger (in absolute value) than it is when the buyer's learning process is not manipulated (i.e., when $\beta=q(t_b)$). 

\begin{lem}\label{elasticity}
    We have $|\mathcal{E}_r(\beta^o(t_b))|>\mathcal{E}_r(q(t_b)) $. 
\end{lem}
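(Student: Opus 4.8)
The plan is to reduce both elasticities to closed-form expressions and then compare them directly at a common price $t_b$. The starting observation is that the return rate $\gamma(\beta)=\frac{1-\mu_0}{1-\beta}$ depends on the price only through the stopping belief $\beta$, so by the chain rule
\[
\mathcal{E}_r(\beta)=\frac{d\gamma/\gamma}{dt_b/t_b}=t_b\,\frac{\gamma'(\beta)}{\gamma(\beta)}\,\frac{d\beta}{dt_b}=\frac{t_b}{1-\beta}\,\frac{d\beta}{dt_b},
\]
where I have used $\gamma'(\beta)/\gamma(\beta)=1/(1-\beta)$, the very quantity appearing on the right-hand side of the first-order condition \eqref{foc}. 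The two elasticities in the statement differ only in which belief path is substituted for $\beta$: the unmanipulated path $\beta=q(t_b)$ and the optimally adjusted path $\beta=\beta^o(t_b)$.

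For the unmanipulated path I would differentiate the closed form \eqref{quitbelief} to get $q'(t_b)=q/(v-t_b)>0$, and then eliminate $v-t_b$ via $v-t_b=k/(\lambda q)$, obtaining
\[
\mathcal{E}_r(q(t_b))=\frac{t_b}{1-q}\,\frac{q}{v-t_b}=\frac{\lambda\,t_b\,q^2}{(1-q)\,k}>0.
\]
For the optimally adjusted path I would insert the implicit-differentiation formula for $(\beta^o)'$ recorded in the footnote to \eqref{foc}, namely $(\beta^o)'=\frac{(\beta^o)^2(1-\beta^o)}{(2\beta^o-1)(1-q)k/\lambda}$. The factor $(1-\beta^o)$ cancels against the $1/(1-\beta^o)$ in the elasticity formula, leaving
\[
\mathcal{E}_r(\beta^o(t_b))=\frac{\lambda\,t_b\,(\beta^o)^2}{(2\beta^o-1)(1-q)\,k}<0,
\]
the sign being negative because $\beta^o<\tfrac12$ throughout the interior regime $t_b\in(t^*_b,t^{**}_b)$.

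The comparison is then immediate. Both expressions carry the common positive factor $\lambda t_b/((1-q)k)$, so after taking absolute values the inequality $|\mathcal{E}_r(\beta^o)|>\mathcal{E}_r(q)$ reduces to
\[
\frac{(\beta^o)^2}{1-2\beta^o}>q^2,\qquad\text{equivalently}\qquad (\beta^o)^2>(1-2\beta^o)\,q^2.
\]
Since the interior regime guarantees $q(t_b)<\beta^o(t_b)<\tfrac12$, we have both $0<q<\beta^o$ and $0<1-2\beta^o<1$, whence $(1-2\beta^o)q^2<q^2<(\beta^o)^2$, establishing the claim.

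I expect the only genuinely delicate point to be the bookkeeping of signs together with the standing facts $q<\beta^o<\tfrac12$: the two elasticities have opposite signs (the unmanipulated return rate rises with price, while the manipulated one falls), so the absolute value must be tracked carefully, and one must invoke $\beta^o<\tfrac12$ — valid precisely because an interior optimum lives in the concave region of the transfer function, below the symmetry point of $MC(\cdot)$ — both to sign $2\beta^o-1$ and to guarantee $1-2\beta^o\in(0,1)$. Everything else is routine substitution using \eqref{quitbelief} and the footnote formula.
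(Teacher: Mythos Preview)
Your proof is correct and follows essentially the same route as the paper: both write $\mathcal{E}_r(\beta)=\frac{t_b}{1-\beta}\frac{d\beta}{dt_b}$, substitute $q'(t_b)=q/(v-t_b)$ and the footnote formula for $(\beta^o)'$, and then observe that the ratio $|\mathcal{E}_r(\beta^o)|/\mathcal{E}_r(q)$ reduces to $(\beta^o)^2/\bigl(q^2(1-2\beta^o)\bigr)$, which exceeds $1$ because $q\le\beta^o\le\tfrac12$. Your extra step of eliminating $v-t_b$ via $v-t_b=k/(\lambda q)$ to exhibit the common factor $\lambda t_b/((1-q)k)$ is a nice touch that makes the cancellation transparent, but the underlying argument is identical.
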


Intuitively, suppose we view $1-\gamma(\beta)$, the probability of a successful sale, as the demand. Then, by the law of demand, a higher price should imply lower  demand and thus a higher return rate; hence the price elasticity of the return rate should be positive.  This is indeed the case under full learning, when the buyer stops learning at the quitting belief $q(t_b)$. 
However, when the seller uses a stochastic return policy to optimize the buyer's learning process as in Section \ref{Exogenousprice}, an increase in price can decrease the return rate. Hence the price elasticity of the return rate becomes negative, creating an exception to the law of demand. In this case, higher elasticity (in absolute value) is actually beneficial, because the demand moves in the same direction as the price.

\begin{thm}\label{deterministic}
   The optimal mechanism involves a deterministic return policy, i.e., $x_r\in\{0,1\}$. 
\end{thm}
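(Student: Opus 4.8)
The plan is to reduce the theorem to a quasi-convexity statement about the seller's revenue as a function of price and then read that statement off directly from Lemma \ref{elasticity}. By Proposition \ref{step1} (and Corollary \ref{fullcharacterization} for $\mu_0\neq 0.5$), a stochastic return policy can be optimal only for prices in the interior region, where the induced stopping belief equals $\beta^o(t_b)$ and the seller's revenue is the value $R(t_b)$ of program (\ref{PR}); outside this region the optimal policy is already deterministic. The two boundaries of the interior region are themselves deterministic: by (\ref{boundary}) the upper boundary has $\beta^o(t^{**}_b)=q(t^{**}_b)$, so smooth-pasting forces $V^0_1(q(t_b);t_b)=0$ and hence $x_r=0$ (free returns) via (\ref{smooth}); the lower boundary has $\beta^o(t^*_b)=Q(t^*_b)$, so learning degenerates to the Dirac regime $G^{\mu_0}$ and the mechanism collapses to a no-return policy $x_r=1$. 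It therefore suffices to show that $R$ attains its maximum on this interval at an endpoint, which follows once $R$ is shown to be quasi-convex.

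First I would compute $R'$. Since $\beta^o(t_b)$ solves the inner maximization in (\ref{PR}), writing the revenue as $\gamma(\beta)\,t_r(\beta,t_b)+(1-\gamma(\beta))\,t_b$ and invoking the envelope theorem eliminates the indirect dependence through $\beta$, leaving
\[
R'(t_b)=\bigl(1-\gamma(\beta^o)\bigr)+\gamma(\beta^o)\,\partial_{t_b}t_r(\beta^o,t_b).
\]
Differentiating the closed form $t_r(\beta,t_b)=\int_{q(t_b)}^{\beta}MC(\mu)\,d\mu$ of (\ref{value}), only the lower limit depends on $t_b$, and substituting the expressions for $q(t_b)$ from (\ref{quitbelief}) and for $MC$ collapses this term to $\partial_{t_b}t_r=-MC(q(t_b))\,q'(t_b)=-\tfrac{q(t_b)}{1-q(t_b)}$. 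This yields the clean form
\[
R'(t_b)=\frac{\bigl(1-q(t_b)\bigr)-\gamma(\beta^o(t_b))}{1-q(t_b)},
\]
so that the sign of $R'$ equals the sign of $h(t_b):=\bigl(1-q(t_b)\bigr)-\gamma(\beta^o(t_b))$, and every interior critical point is characterized by $\gamma(\beta^o)=1-q$.

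The key step is to show that $h$ vanishes at most once and only with positive slope, making $R$ decreasing-then-increasing (hence quasi-convex), so that its maximum on the interval sits at an endpoint. Since $q$ is increasing and $\beta^o$ decreasing in $t_b$ (the latter by the footnote to (\ref{foc})), the return rate $\gamma(\beta^o)$ is decreasing, both terms of $h$ are decreasing, and $h'(t_b)=-q'(t_b)+\bigl|\tfrac{d}{dt_b}\gamma(\beta^o)\bigr|$. At any zero of $h$ one has $\gamma(\beta^o)=1-q$, and using $\gamma(\beta)=\tfrac{1-\mu_0}{1-\beta}$ to evaluate the two derivatives, the desired inequality $h'>0$, i.e. $\bigl|\tfrac{d}{dt_b}\gamma(\beta^o)\bigr|>q'(t_b)$, rewrites in elasticity form as $|\mathcal{E}_r(\beta^o(t_b))|>\mathcal{E}_r(q(t_b))$ --- precisely Lemma \ref{elasticity}. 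A continuous function that is strictly increasing at each of its zeros can vanish at most once, so $h$ has single-crossing from below, establishing quasi-convexity and completing the argument.

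I expect the main obstacle to be the bookkeeping that turns the single-crossing condition into the \emph{exact} elasticity inequality of Lemma \ref{elasticity}: one must verify, using the boundary identity $\gamma(\beta^o)=1-q$ at the crossing, that the $t_b/\gamma$ normalizations built into $\mathcal{E}_r$ cancel so that $h'>0$ is equivalent to, and not merely implied by, the lemma. A secondary concern is confirming that for $\mu_0\neq 0.5$ the non-monotonicity of $MC$ on $[q(t_b),\mu_0)$ does not disrupt the interior optimum $\beta^o$ or the identification of the deterministic endpoints; this is exactly where Corollary \ref{fullcharacterization} is needed to guarantee that the relevant price region remains an interval bounded by the no-return and free-return regimes.
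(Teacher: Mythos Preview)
Your proposal is correct and follows essentially the same route as the paper: envelope theorem to obtain $R'(t_b)=(1-\gamma(\beta^o))-\gamma(\beta^o)\tfrac{q}{1-q}$, then single-crossing of the sign function via Lemma~\ref{elasticity} to conclude quasi-convexity and hence a boundary (deterministic) solution. The only cosmetic difference is that the paper works with the expression $\tfrac{1-\gamma(\beta^o)}{\gamma(\beta^o)}-\tfrac{q}{1-q}$ rather than your $h(t_b)=(1-q)-\gamma(\beta^o)$; these differ by the positive factor $\gamma(\beta^o)(1-q)$, and at a zero the identity $\gamma(\beta^o)=1-q$ makes the slope condition reduce to exactly the elasticity inequality in both formulations, so the bookkeeping concern you flag resolves cleanly.
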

\begin{proof}
    
Since $G^{\beta^o(t_b)}$ has binary support $\{\beta^o(t_b),1\}$, the expected revenue is a convex combination of the return transfer and the price,
\[\gamma(\beta^o) \times t_r(\beta^o;t_b)+(1-\gamma(\beta^o)) \times t_b. 
\]
For simplicity, we write $\beta^o$ for $\beta^o(t_b)$. 
Given that $\beta^o(t_b)$ is determined via the first-order condition (\ref{foc}), the envelope theorem implies that the derivative of the expected revenue with respect to the price is
\[
\gamma(\beta^o )\frac{\partial t_r(\beta^o ;t_b)}{\partial t_b} + 1-\gamma(\beta^o ),
\]
where $\frac{\partial t_r(\beta^o ;t_b)}{\partial t_b}=-\frac{q(t_b)}{1-q(t_b)}$, since the price affects the return transfer $t_r$ only through the starting point of the integration. 
Hence, the marginal revenue from an increase in price is positive if and only if 
 \begin{equation}\label{focprice}
     \frac{1-\gamma(\beta^o )}{\gamma(\beta^o)}-\frac{q(t_b)}{1-q(t_b)}\ge 0. 
 \end{equation}

 It turns out that the left-hand side of the above inequality, as a function of $t_b$, single-crosses 0 from below (if it ever crosses 0 at all). To see this, 
 let (\ref{focprice}) hold with equality; then we obtain a critical price at which
the slope of the left-hand side equals\footnote{Take the derivative of the left-hand side with respect to $t_b$ and multiply it by $\gamma(\beta^o)$; then, conditional on equality in (\ref{focprice}), we obtain
\[
 \frac{-\gamma'(\beta^o) (\beta^o)'}{\gamma(\beta^o)}-\frac{q'(t_b)}{1-q(t_b)}= \frac{-\gamma'(\beta^o) (\beta^o)'}{\gamma(\beta^o)}-\frac{\gamma'(q) q'}{\gamma(q)}=\frac{-d\gamma(\beta^o)/d t_b}{\gamma(\beta^o)}-\frac{d\gamma(q)/d t_b}{\gamma(q)}.
\]
The first equality comes from $\frac{\gamma'(q)}{\gamma(q)}=\frac{1}{1-q}$, and the second comes from simple rearrangement.  Multiplying the above expressions by $t_b$, we obtain the elasticity. } 
\[
\frac{|\mathcal{E}_r(\beta^o(t_b))|-\mathcal{E}_r(q(t_b))}{\gamma(\beta^o) t_b} >0.
\]
The numerator is positive by Lemma \ref{elasticity}. Hence, the left-hand side of (\ref{focprice}) single-crosses 0 from below.  

 This means that the marginal revenue with respect to the change in price single-crosses 0 from below (if it ever crosses), which implies that the expected revenue is quasi-convex in price; hence the optimal price lies at the boundary of one of the constraints in (\ref{PR}). 
 For example, in the case $\mu_0=0.5$, the solution to (\ref{PR}) is either $t^*_b$ or $t^{**}_b$. By Proposition \ref{step1}, this shows that the optimal return policy is deterministic.  
\end{proof}

At first glance, it may be surprising that the optimal refund mechanism is always deterministic, especially because stochastic returns greatly 
enlarge the set of learning strategies that the seller can implement.  
The deeper reasoning here is that a strategically designed stochastic return policy can generate an exception to the law of demand. To elaborate, let us decompose the seller's expected revenue as follows:
\[
 t_r +(1-\gamma) \times (t_b- t_r).
\]
The first component represents the return transfer, which is the revenue the seller is guaranteed to obtain from either type of buyer. The second component is the additional revenue she can hope for if the high-type buyer receives good news. The seller can follow either a niche-marketing strategy (to maximize her additional revenue from the high-type buyer) or a mass-marketing strategy (to maximize the return transfer). If she follows the niche-marketing strategy, then she increases the price $t_b$ so as to gain more from the high-type buyer (note that $t_b- t_r(\beta^o(t_b);t_b)$ is increasing in $t_b$).
Simultaneously, by choosing the optimal stochastic return policy, she increases the probability of a successful sale. This amplifies the second component of her revenue and thus  reinforces her incentive to follow the niche-marketing strategy. 
On the other hand, if she follows the mass-marketing strategy, then she reduces the price to obtain a higher return transfer (note that the return transfer is decreasing in price). This amplifies the first component (as now the return transfer accounts for a larger proportion of the total revenue) and thus reinforces her incentive to follow the mass-marketing strategy. 
This explains why the seller's expected revenue is quasi-convex in price, conditional on the optimality of the return policy.

In our example with $\mu_0=0.5$, this quasi-convexity implies that the solution to (\ref{PR}) is either $t^*_b$, the learning-deterrence price, or $t^{**}_b$, a price high enough to render free returns optimal. That is, the seller's expected revenue from implementing any point along the interior of the curve $\beta^o(t_b)$ (the decreasing part of the red curve in Figure \ref{transferfunction}(b)) is smaller than her revenue at the endpoints of this curve. Nevertheless, $t^{**}_b$ is just one possible price for which a free-return policy is optimal; there could be another price $t_b\in [t^{**}_b,q^{-1}(0.5))$ that leads to an even higher expected revenue. 
To find the optimal refund mechanism, we must  compare the learning-deterrence revenue and the highest free-return revenue. The relationship between these two quantities depends on the buyer's prior belief. We discuss this topic in the next section.

\section{The optimal refund mechanism}\label{opm}

We now identify the optimal refund mechanism for various prior beliefs.

\begin{thm}\label{two} 
There is a closed interval $F \subset [\underline\mu,\overline\mu]$ such that the optimal refund mechanism takes the following form:
\vspace{0cm}
\begin{enumerate}\setlength{\itemsep}{0cm}
\item If $\mu_0 \notin[\underline\mu,\overline\mu]$, then the mechanism offers a no-return policy with price $\mu_0 v$.
\item If $\mu_0 \in [\underline\mu,\overline\mu]$ and $\mu_0 \notin F$, then the mechanism offers a no-return policy with the learning deterrence price $Q^{-1}(\mu_0)$.
\item If $\mu_0 \in F$, then the mechanism offers a free-return policy with price $t^{F}_b(\mu_0)$,
where 
\[
t^F_b(\mu_0):=\argmax_{t_b} \int_0^1 t(\mu;t_b) d G^{q(t_b)}(\mu).
\]
\end{enumerate}
\vspace{-0.3cm}
Moreover, there exists a number $c^{*}$ such that the following hold: if $\frac{k}{\lambda}\le c^{*}$,  then $F$ is non-empty and the optimal mechanism takes all three forms; if $\frac{k}{\lambda}>c^{*}$, then $F=\varnothing$ and the optimal mechanism takes the forms in items 1 and 2;  if $\frac{k}{\lambda}\rightarrow 0$, then $F=[0,1]$ and the optimal mechanism takes the form in item 3.  
\end{thm}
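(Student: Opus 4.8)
The plan is to leverage Theorem~\ref{deterministic} to collapse the problem to a pairwise comparison, and then to study the sign of a single revenue difference as a function of $\mu_0$ and of $c:=k/\lambda$. Since the optimal return policy is deterministic, Proposition~\ref{implementable} tells me the only candidates are $x_r=1$ and $x_r=0$. The choice $x_r=1$ induces no learning, so the best no-return mechanism charges the learning-deterrence price and earns $R_D(\mu_0):=Q^{-1}(\mu_0)$; the choice $x_r=0$ forces learning down to the quitting belief with $t_r=0$, so the best free-return mechanism earns $R_F(\mu_0):=\max_{t_b}\frac{\mu_0-q(t_b)}{1-q(t_b)}\,t_b$, attained at $t^F_b(\mu_0)$. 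For $\mu_0\notin[\underline\mu,\overline\mu]$, Proposition~\ref{benchprop} shows learning is never worthwhile, so the seller extracts the full surplus $\mu_0 v$; this is item~1. For $\mu_0\in[\underline\mu,\overline\mu]$ I would define $F:=\{\mu_0:R_F(\mu_0)\ge R_D(\mu_0)\}$, so that items~2 and~3 hold by construction. Everything then reduces to the sign of $\Delta(\mu_0):=R_F(\mu_0)-R_D(\mu_0)$.

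The main structural step is to show $F$ is a closed interval, and here I would use the surplus--rent decomposition implied by Lemma~\ref{newlem1} (IR binds). Under deterrence the total surplus is $\mu_0 v$ and the buyer's rent is his option value $V^0(\mu_0;Q^{-1}(\mu_0))=\mu_0 v-Q^{-1}(\mu_0)$; under free returns the surplus is $\mu_0 v-L_F(\mu_0)$, where $L_F\ge 0$ is the welfare loss from learning, and the rent is $V^0(\mu_0;t^F_b)$. Hence
\[
\Delta(\mu_0)=\big[V^0(\mu_0;Q^{-1}(\mu_0))-V^0(\mu_0;t^F_b)\big]-L_F(\mu_0),
\]
so free returns are preferred exactly when the \emph{rent reduction} (positive because $V^0$ is decreasing in $t_b$ and $t^F_b\ge Q^{-1}(\mu_0)$) outweighs the welfare loss. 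I would then establish that $R_F$ is convex and increasing, since it is the upper envelope of the maps $\mu_0\mapsto\frac{\mu_0-q(t_b)}{1-q(t_b)}t_b$, which are affine in $\mu_0$; the envelope theorem gives $R_F'(\mu_0)=t^F_b/(1-q(t^F_b))$. Controlling the curvature of $R_D=Q^{-1}$ through the explicit $q(t_b)=k/(\lambda(v-t_b))$ and the properties of $Q$ in Proposition~\ref{benchprop}, I would argue that $\Delta$ is strictly negative at the endpoints $\underline\mu,\overline\mu$ (where the option value, hence the rent reduction, is small) and single-peaked in between, so that $\{\Delta\ge 0\}$ is a closed subinterval lying strictly inside $[\underline\mu,\overline\mu]$.

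For the ``Moreover'' claims I would make the dependence on $c$ explicit and run a comparative static. As $c\to 0$ the quitting belief $q(t_b)\to 0$ for every $t_b<v$, so $R_F(\mu_0)\to\mu_0 v$ while $R_D(\mu_0)=Q^{-1}(\mu_0)\to 0$, and $[\underline\mu,\overline\mu]\to[0,1]$; thus $\Delta>0$ throughout and $F=[0,1]$. To isolate the threshold I would set $h(c):=\max_{\mu_0\in[\underline\mu,\overline\mu]}\Delta(\mu_0;c)$, show it is continuous and strictly decreasing (as $c$ rises the learning-cost component of $L_F$ grows while the rent reduction shrinks), and define $c^*$ by $h(c^*)=0$. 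Then $F\neq\varnothing$ iff $c\le c^*$; and when $0<c\le c^*$ all three forms arise, because $F$ is a nonempty interval strictly interior to $[\underline\mu,\overline\mu]\subsetneq[0,1]$, whereas for $c>c^*$ only items~1 and~2 occur.

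I expect the delicate part to be the single-crossing/quasiconcavity argument for $\Delta$ in $\mu_0$---equivalently, bounding the curvature of $Q^{-1}$ against the convex $R_F$---since $R_D$ and $R_F$ are genuinely asymmetric in $\mu_0$ and the symmetry of the marginal cost of information about $1/2$ does not transfer to the revenue functions. The second point requiring care is the monotonicity of $h(c)$ in $c$ that pins down a single threshold $c^*$, because both $L_F$ and the rent reduction move with $c$ and one must sign their net effect uniformly across $\mu_0$.
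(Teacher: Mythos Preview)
Your reduction is exactly right: by Theorem~\ref{deterministic} the problem collapses to the comparison $\Delta(\mu_0)=R_F(\mu_0)-R_D(\mu_0)$, and your endpoint argument ($\Delta<0$ at $\underline\mu,\overline\mu$ because deterrence extracts the full surplus there) is correct. The observation that $R_F$ is convex as a pointwise supremum of affine maps in $\mu_0$ is also correct. But this is precisely where the gap lies: convexity of $R_F$ tells you nothing about the shape of $\Delta$ unless you control the curvature of $R_D=Q^{-1}$, and $Q^{-1}$ is neither globally concave nor globally convex. Your claim that $\Delta$ is ``single-peaked'' is therefore unsupported, and the paper's own analysis suggests it need not be: what is actually shown is only that $\Delta$ has \emph{at most two zeros}, which is strictly weaker than quasiconcavity and is all that is needed (together with $\Delta<0$ at the endpoints) to conclude $F$ is an interval.

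The paper's route to ``at most two zeros'' is entirely computational and quite different from your conceptual decomposition. First, the free-return price is solved in closed form, $t_b^F=v-c-\sqrt{c(v-c)(1-\mu_0)/\mu_0}$ with $c=k/\lambda$, so $q(\Pi^F(\mu_0))$ is explicit. Second, the crossing condition $\Pi^F(\mu_0)=Q^{-1}(\mu_0)$ is rewritten using the implicit relation (\ref{mutilde}) between $\mu$ and $\tilde\mu=q(Q^{-1}(\mu))$ obtained in the proof of Proposition~\ref{benchprop}; this yields a single equation in $\mu_0$ whose derivative, after the substitution $x=\sqrt{\mu_0/(1-\mu_0)}$, is governed by a rational function $m(x)$ that can be shown to cross the level $-1$ at most twice. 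Hence the crossing equation has at most two roots. Your surplus--rent decomposition is a nice way to interpret the sign of $\Delta$, but it does not substitute for this curvature analysis.

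For the threshold $c^*$, the cleanest argument is not via your composite $h(c)$ but via separate monotonicity: the envelope theorem applied to (\ref{freereturn}) gives $\partial \Pi^F/\partial c<0$, and the implicit function theorem applied to $\mu_0 v-Q^{-1}(\mu_0)=V^0(\mu_0;Q^{-1}(\mu_0))$ gives $\partial Q^{-1}/\partial c>0$. Hence $\Delta(\mu_0;c)$ is strictly decreasing in $c$ \emph{pointwise}, which immediately yields a unique threshold and avoids having to sign the net effect through the decomposition $\Delta=[\text{rent reduction}]-L_F$. Your limiting argument for $c\to 0$ is correct.
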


Note that when the optimal refund mechanism takes the first or second form, the buyer purchases the product immediately without learning. Hence the optimal mechanism maximizes both welfare and profits. By contrast, when the optimal mechanism takes the third form, the buyer learns until his belief reaches the corresponding quitting belief. Hence the probability that he will eventually return the product is positive. In this case, there is a conflict between profit maximization and welfare maximization. This conflict is triggered by the magnitude of the buyer's option value from learning.

To elaborate,  consider the second diagram in Figure \ref{opmgraph}, where the effective learning cost is not very high, i.e., $\frac{k}{\lambda}\le c^*$. 
As we saw in Proposition \ref{benchprop}, if the buyer's prior belief is very close to 0 or 1 (i.e., if $\mu_0\notin [\underline\mu,\overline\mu]$, as indicated by the two black intervals at the boundaries of the diagram), then the buyer is already well informed up front, and learning is not valuable to him. 
Consequently, his option value from learning is zero. Thus, if the seller offers a non-refundable price equal to $\mu_0 v$ (as in case 1 of Theorem \ref{two}), then the buyer will make the purchase, and the seller will obtain the entire first-best surplus.

The buyer's option value from learning becomes positive when learning becomes valuable to him, i.e., when $\mu_0 \in [\underline\mu,\overline\mu]$. Nevertheless, for $\mu_0$ close to $\underline\mu$ or $\overline\mu$,  learning is not that valuable. 
In such cases, the seller can deter learning and induce immediate consumption using only a small decrease in the price (deducted from the ex-ante expected valuation).
However, when the buyer's prior belief is more uncertain, learning is highly valuable, which makes learning deterrence considerably less profitable (as a significant price reduction is needed to compensate for the buyer's option value).
Alternatively, the seller may increase his expected revenue by setting a relatively high price and simultaneously allowing free returns. Although a free-return policy leads to welfare loss, as the buyer may eventually return the product, he is unlikely to do so if his prior belief is high.  
Consequently, the optimal refund mechanism allows free returns only if the buyer's prior belief is not too extreme and relatively optimistic (as indicated by the moderately rightward position of the blue interval in the second diagram of Figure \ref{opmgraph}).\footnote{This result is in contrast to \cite{pease2018}. There, buyer learning is modeled through Brownian motion, with high (low) valuation inducing an upward (downward) drift. There is a subtle difference between learning through Brownian motion and learning through a Poisson process. Suppose the seller sets the price to prevent buyer learning. Under Brownian learning, the buyer's option value from learning is 0 if and only if the prior belief is exactly 0 or 1. In contrast, under Poisson learning, his option value is 0 as long as the prior belief is close to 0 or 1. This is because, with Brownian motion, signals are unbounded; hence the value of learning is positive even for extreme beliefs close to 0 or 1. Under Poisson learning, on the other hand, learning is not valuable for extreme beliefs close to 0 or 1 (because the expected time needed to reach a decision is bounded below by $1 / \lambda$). Hence we obtain a  different prediction: learning deterrence is optimal for both sufficiently high and sufficiently low beliefs.
 }

\begin{figure}[t]
        \centering
        \includegraphics[width=13cm]{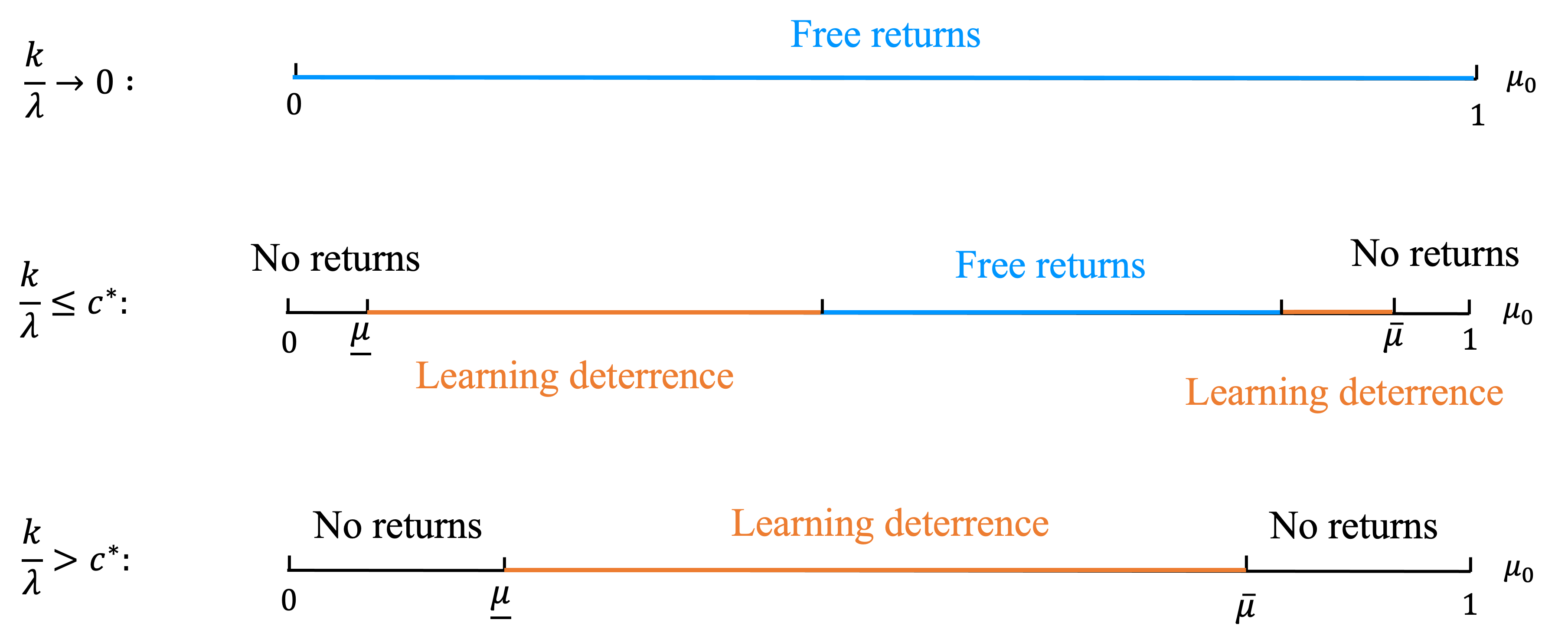}
        \caption{Optimal refund mechanism.}
        \label{opmgraph}
        \end{figure}

For sufficiently high learning costs (i.e., for $\frac{k}{\lambda}>c^*$), the buyer's option value from learning is minimal, and the optimal mechanism therefore takes only the first and second forms from Theorem \ref{two}. Thus, profit maximization and welfare maximization coincide across the entire range of prior beliefs. 
The third diagram of Figure \ref{opmgraph}) represents this case. 
On the other hand, as the learning cost converges to 0, the buyer acquires nearly perfect information. In this case, the seller's optimal strategy is to allow free returns, set the price equal to $v$, and sell exclusively to high-valuation buyers,  
as shown in the first diagram of Figure \ref{opmgraph}.\footnote{This outcome is the same as that of the robust refund mechanism of \cite{hinnosaar2020} when the restocking fee is zero. }

A free-return policy is always bundled with price discrimination. While the buyer benefits from being able to return the product, such a contract lowers his overall welfare, as the seller captures a major portion of the surplus if he decides to keep the product. Thus, learning costs have a non-monotonic effect on buyer welfare. For example, reducing the learning cost increases the buyer's option value from learning, thereby amplifying the seller's incentive to encourage learning. Paradoxically, this could lead to a scenario in which a slight decrease in the learning cost significantly reduces the buyer's welfare, as it leads the seller to permit returns and ultimately escalates price discrimination. 
\begin{figure}[t]
        \centering
        \includegraphics[width=18cm]{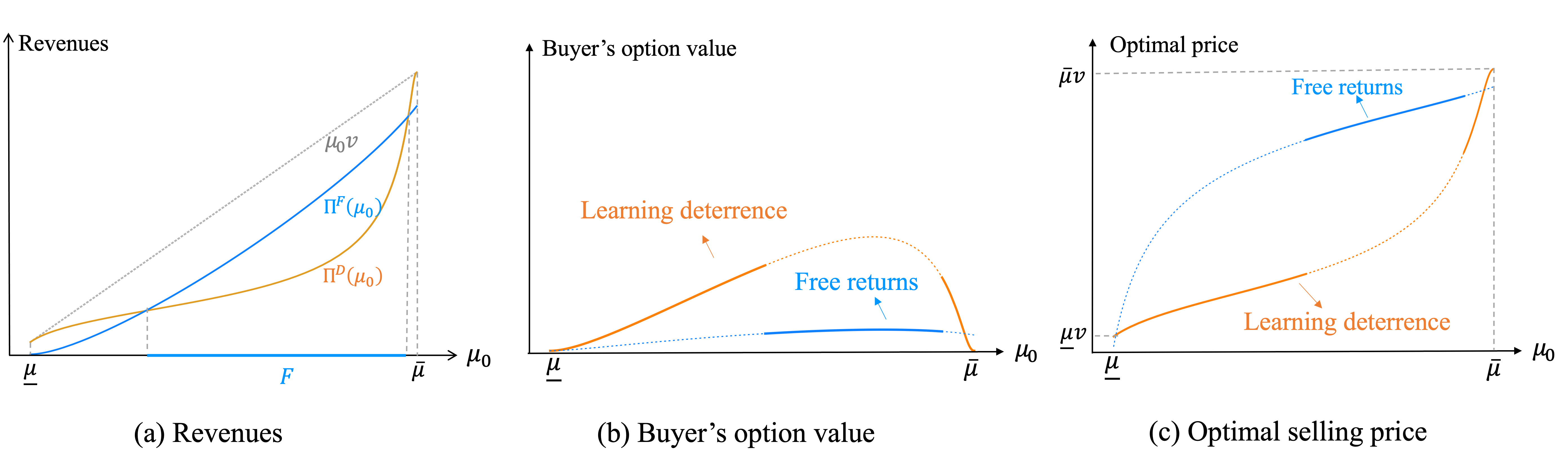}
        \caption{Comparisons between learning deterrence and free returns. \footnotesize 
        The graphs are constrained to $\mu_0\in[\underline\mu,\overline\mu]$, with learning deterrence and free returns depicted in orange and blue, respectively. Solid curves represent optimal mechanisms. Panel (a) shows the seller's revenue from the optimal mechanism, which is continuous in priors despite variations in the form of the optimal mechanism. In Panel (b), the buyer's option value (also interpreted as his participation value) is significantly reduced when the optimal mechanism allows for free returns. Panel (c) illustrates how the gap in the buyer's surplus can be attributed to the significantly higher selling price under a free-return mechanism.}
        \label{buyersurplus}
        \end{figure}

\section{More efficient post-purchase learning}\label{posteff}

If the learning process is more efficient after purchase, i.e., $\lambda_P>\lambda_B=\lambda$ (rather than $\lambda_P=\lambda_B=\lambda$ as we have been assuming so far), 
then the purchase itself generates extra information value. However, the seller can fully extract this extra information value by charging a cancellation fee to make the buyer indifferent between pre-purchase and post-purchase learning. 
Let $t_c$ denote the cancellation fee, i.e., the additional transfer the buyer has to pay for a return. A refund mechanism with a cancellation fee can be represented as $m:= \{t_b,(x_r,t_r+t_c)\}$. Hence, the mechanism space remains unchanged. However, the return transfer can now be decomposed into two components, each calculated through different constraints. 

As in the main model,  we use $V^0(\mu;t_b)$ to represent the buyer's option value from pre-purchase learning. We let $V_P(\mu;m)$ denote the buyer's continuation value from post-purchase learning with the post-purchase learning rate  $\lambda_P$.

\begin{cor}\label{post}
    Under the optimal mechanism $m$, the buyer's IR binds: $V_P(\mu_0;m)=V^0(\mu_0;t_b)$. 
\end{cor}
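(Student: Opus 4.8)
The plan is to prove this corollary as a direct extension of Lemma \ref{newlem1}, now using the cancellation fee $t_c$ as the instrument that absorbs any slack in IR. Since the mechanism space is unchanged---a mechanism is still parametrized by the price $t_b$, the keep probability $x_r$, and a total return transfer $t_r+t_c$---the same extraction logic should apply. I would argue by contradiction: suppose that under some optimal mechanism $m=\{t_b,(x_r,t_r+t_c)\}$ the participation constraint is slack, $V_P(\mu_0;m)>V^0(\mu_0;t_b)$, where $V^0$ is computed with the pre-purchase rate $\lambda_B=\lambda$ while $V_P$ uses the more efficient post-purchase rate $\lambda_P$. This slack is precisely the extra information value that the purchase generates.

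The perturbation I would use is to raise the cancellation fee from $t_c$ to $t_c+\epsilon$ for small $\epsilon>0$, holding $t_b$ and $x_r$ fixed. This lowers the buyer's return payoff $\mu v x_r-(t_r+t_c)$ uniformly in $\mu$, so $V_P(\mu;m)$ falls continuously in $\epsilon$. I then check three things. (i) IR is preserved: immediate for small $\epsilon$, since the slack was strict and $V_P(\mu_0;\cdot)$ is continuous in $t_c$. (ii) Implementability is preserved: the induced stopping belief $\beta$ weakly decreases, because lowering the return payoff makes returning relatively less attractive than continuing (the immediate return payoff drops by exactly $\epsilon$, while the continuation value drops by at most $\epsilon$), so the buyer learns weakly longer; for small $\epsilon$ the belief stays in the implementable range and does not jump below $q(t_b)$. (iii) Revenue strictly increases: writing the revenue as $t_b-\gamma(\beta)\bigl(t_b-(t_r+t_c)\bigr)$ with $\gamma(\beta)=(1-\mu_0)/(1-\beta)$, the perturbation raises $t_c$ (so $t_b-(t_r+t_c)$ falls) and weakly lowers $\beta$ (so $\gamma(\beta)$ falls); since a return pays no more than a sale, both factors of the subtracted term shrink and revenue rises. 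This contradicts optimality, so IR must bind.

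The main obstacle is controlling the endogenous response of the stopping belief to the fee under the post-purchase rate $\lambda_P$, and confirming this response never reverses the revenue comparison; the decomposition $t_r+t_c$ announced in the text is what resolves it cleanly. I would fix the target belief $\beta$ and let the value-matching and smooth-pasting conditions---taken with respect to the $\lambda_P$-continuation value, exactly as in Proposition \ref{implementable}---pin down $x_r$ and the component $t_r$, while $t_c$ is the residual fee chosen purely to equate $V_P(\mu_0;m)$ with $V^0(\mu_0;t_b)$. Because $t_c$ enters only the buyer's return payoff and the seller's return revenue, and enters $V_P(\mu_0;\cdot)$ monotonically, such a fee exists, is unique, leaves the implementation of $\beta$ intact, and transfers the entire efficiency rent to the seller. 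This reproduces the conclusion of Lemma \ref{newlem1} in the extended environment.
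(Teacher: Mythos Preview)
Your argument is correct and follows the paper's high-level strategy: the paper's proof is the single line ``This is proved in the same way as Lemma~\ref{newlem1},'' i.e.\ a profitable-deviation argument when IR is slack. Where the approaches diverge is in the choice of perturbation. The paper (via Lemma~\ref{newlem1}) holds the stopping belief $\beta$ fixed and slides $(x_r,t_r)$ along the ODE constraint at $\beta$---reducing $V_P(\beta;m)$ and raising $V_P'(\beta;m)$ simultaneously---so that $t_r=\beta V_P'-V_P$ rises while the return rate $\gamma(\beta)$ is unchanged. You instead hold $x_r$ fixed, bump the fee $t_c$ by $\epsilon$, and let $\beta$ re-optimize downward; your revenue gain then comes from two channels (higher return transfer \emph{and} lower return rate), and you correctly verify both move the right way.

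Both routes work. The paper's has the advantage that nothing endogenous moves, so the revenue comparison is a one-liner. Yours is arguably more natural in the Section~\ref{posteff} context, since the cancellation fee is exactly the instrument that section introduces, and your third paragraph in fact anticipates Corollary~\ref{postimplementable}. Two minor clean-ups: the lower bound on the implementable stopping belief here is $q_P(t_b,t_c)$ (which \emph{decreases} as $t_c$ rises, so your range concern is moot), not $q(t_b)$; and your claim that ``the continuation value drops by at most $\epsilon$'' can be made sharp---keeping the old stopping rule shows it drops by at most $\gamma(\beta)\epsilon<\epsilon$.
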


\begin{proof}
    This is proved in the same way as Lemma \ref{newlem1}. 
\end{proof}

We refer to a return policy $(x_r,t_r+t_c)$ as a \textit{return policy with cancellation fee} if $x_r=0$, $t_r=0$, and $t_c>0$. We use $m^c(t_b,t_c)$ to denote a mechanism with such a return policy. 
Under such a mechanism, the buyer conducts the maximum amount of learning; thus, his stopping belief in this case can be interpreted as the quitting belief for post-purchase learning.
Denoting the latter by $q_P(t_b,t_c)$, we have
\[
q_P(t_b,t_c)=\frac{k}{\lambda_P(v-t_b+t_c)}.
\]
Hence, $V_P(\mu_0;m^c)$ is the solution to (\ref{odeb}) with the boundary point $(q_P(t_b,t_c),-t_c)$ and the post-purchase learning rate $\lambda_P$.

 \begin{cor}\label{postimplementable} 
    A  mechanism with price $t_b$ and return policy $(x_r(\beta,t_b),t_r(\beta,t_b)+t_c(t_b))$ is $G^{\beta}$-implementable, for some $q_P(t_b,t_c(t_b))\le \beta<\mu_0$, if the following hold:
    \begin{enumerate}
        \item The cancellation fee $ t_c(t_b)$ is the solution to $V_P(\mu_0;m^c(t_b,t_c))=V^0(\mu_0;t_b)$. 
        \item We have $ x_r(\beta,t_b)=V'_P\left(\beta;m^c(t_b,t_c(t_b))\right)/v$ and $t_r(\beta,t_b)=\int_{q_P(t_b,t_c(t_b))}^{\beta} MC_P(\mu) d\mu.$
    \end{enumerate}
  A mechanism with price $t_b\le Q^{-1}(\mu_0)$ and a no-return policy $(1,t_b)$ is $G^{\mu_0}$-implementable. 
\end{cor}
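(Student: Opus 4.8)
The plan is to mirror the two-stage construction behind Proposition \ref{implementable}, now decomposing the return transfer into a baseline cancellation fee $t_c$ that extracts the extra informational rent created by the faster post-purchase arrival rate, and an incremental stochastic-return transfer $t_r$ that, layered on top, implements an interior stopping belief. The structural fact that makes this work is that the post-purchase continuation value solves the same differential equation as (\ref{odeb}) with $\lambda_P$ in place of $\lambda$, and is therefore convex in the belief; this lets the tangent-line logic of the main proof carry over essentially verbatim.

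First I would show that the cancellation fee $t_c(t_b)$ in item 1 is well-defined. Under $m^c(t_b,t_c)$ the buyer keeps nothing and pays the penalty $t_c$ upon return, so his stopping payoff is the constant $-t_c$; value-matching ($V_P(q_P)=-t_c$) and smooth-pasting ($V_P'(q_P)=0$) at the post-purchase quitting belief then reproduce $q_P(t_b,t_c)=k/[\lambda_P(v-t_b+t_c)]$. The map $t_c\mapsto V_P(\mu_0;m^c(t_b,t_c))$ is continuous and, by the envelope theorem, strictly decreasing, with derivative equal to minus the return probability. At $t_c=0$ the programs defining $V_P(\mu_0;m^c(t_b,0))$ and $V^0(\mu_0;t_b)$ share identical payoffs (consumption $v-t_b$ on good news, zero on stopping) and differ only in the arrival rate, giving the slack $V_P(\mu_0;m^c(t_b,0))>V^0(\mu_0;t_b)$; for $t_c$ large the return option becomes worthless and $V_P$ collapses below $V^0(\mu_0;t_b)$. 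The intermediate value theorem then yields a unique $t_c(t_b)$ solving the IR-binding equation of Corollary \ref{post}.

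Second, fixing $t_c(t_b)$, I would implement any $\beta>q_P(t_b,t_c(t_b))$ by layering a stochastic return on top of the fee. The return payoff $\mu v x_r-(t_r+t_c)$ is affine in $\mu$, and the smooth-pasting choice $x_r=V_P'(\beta;m^c)/v$ together with value-matching make this line tangent to the baseline value $V_P(\cdot;m^c)$ at $\beta$. By convexity the buyer strictly prefers to keep learning above $\beta$ and to stop exactly at $\beta$, so his continuation value under the augmented mechanism coincides with $V_P(\cdot;m^c)$ on $[\beta,\mu_0]$; in particular $V_P(\mu_0;m)=V_P(\mu_0;m^c)=V^0(\mu_0;t_b)$, so IR still binds, and OB together with IC follow as in Proposition \ref{implementable}. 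To recover the integral formula, differentiate the tangency relation $t_r+t_c=\beta V_P'(\beta)-V_P(\beta)$ to obtain $dt_r/d\beta=\beta V_P''(\beta)$; differentiating (\ref{odeb}) gives $V_P''(\mu)=k/[\lambda_P\mu^2(1-\mu)]$, whence $\beta V_P''(\beta)=MC_P(\beta)$, and integrating from $q_P$ (where $t_r=0$) delivers $t_r(\beta,t_b)=\int_{q_P}^{\beta}MC_P(\mu)\,d\mu$. The $G^{\mu_0}$ claim is unaffected by the post-purchase rate and follows verbatim from Proposition \ref{implementable}, since deterring learning depends only on the pre-purchase option value $V^0(\cdot;t_b)$.

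The step I expect to be the main obstacle is the strict slack $V_P(\mu_0;m^c(t_b,0))>V^0(\mu_0;t_b)$, which is what makes a positive cancellation fee both feasible and necessary. This compares two good-news stopping problems with identical flow cost $k$ and identical stopping and consumption payoffs but arrival rates $\lambda_P>\lambda$. The cleanest argument is that the faster Poisson experiment reaches any target belief in strictly less expected time, so the marginal cost of information $MC_P(\mu)=k/[\lambda_P\mu(1-\mu)]$ lies uniformly below $MC(\mu)$; any stopping rule feasible under $\lambda$ is then replicable under $\lambda_P$ at strictly lower expected cost whenever learning occurs with positive probability, which holds throughout the active-learning range of $\mu_0$. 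The remainder is the routine tangency bookkeeping already carried out for Proposition \ref{implementable}.
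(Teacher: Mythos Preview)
Your proposal is correct and follows the same route the paper takes: the paper's proof is a one-liner (``immediate from Corollary \ref{post} and Proposition \ref{implementable}''), and you are simply unpacking that---binding IR via the cancellation fee, then replaying the smooth-pasting/value-matching and the $\beta V''=MC$ computation with $\lambda_P$ in place of $\lambda$. Your intermediate-value-theorem argument for the existence and uniqueness of $t_c(t_b)$ is a detail the paper leaves implicit, but it is not a different idea, just a gap you are filling in.
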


\begin{proof}
    This is immediate from Corollary \ref{post} and Proposition \ref{implementable}. Note that $MC_P(\mu)$ denotes the marginal cost of information with the post-purchase learning rate $\lambda_P$. 
\end{proof}
The above corollary characterizes the $G^{\beta}$-implementable mechanisms. In the case $\beta\neq \mu_0$, the cancellation fee $t_c$ is determined by the (binding) IR constraint and is therefore constant with respect to the stopping belief $\beta$.
In contrast, the return transfer $t_r$ is determined by the obedience constraint (OB) and is therefore a function of $\beta$.
For $G^{\mu_0}$, the implementable mechanism is the same as in the main model. Corollary \ref{postimplementable} generalizes Proposition \ref{implementable}: if $\lambda_P=\lambda_B$, then the cancellation fee is zero. 

With a cancellation fee, we can construct a new  transfer function $t:[0,1]\times[0,v] \rightarrow [0,v]$ as follows:
 \begin{equation*}
		  t(\mu;t_b) = \begin{cases}
			0 &\text{if } \mu \in [0,q_P(t_b,t_c(t_b))), \\
			t_r(\mu;t_b)+t_c(t_b) &\text{if } \mu \in [q_P(t_b,t_c(t_b)),\mu_0),  \\
			t_b &\text{if } \mu \in [\mu_0,1].
		\end{cases}
	\end{equation*}
Then, to encourage learning, the seller's optimization problem is 
\begin{equation*} 
\begin{aligned}
     \sup_{t_b\in (Q^{-1}(\mu_0),q^{-1}(\mu_0;\lambda_P)]} &\left\{\qquad \sup_{G^{\beta}}\ \int_0^1 t(\mu;t_b) \ dG^{\beta} (\mu) \qquad \right\}\\
     &\qquad\ \textnormal{s.t. } \  q_P(t_b,t_c(t_b))\le \beta< \mu_0,
\end{aligned}
\end{equation*}
where  $q^{-1}(\mu_0;\lambda_P)$ is the inverse of the quitting belief $q(\cdot)$ defined in equation (\ref{quitbelief}) with $\lambda$ replaced by $\lambda_P$. This is the highest price at which the buyer is willing to learn.  The solution to this problem yields the best refund mechanism, conditional on the seller encouraging the buyer to learn. Comparing this supremum value with $Q^{-1}(\mu_0)$---the learning-deterrence revenue---leads to the optimal refund mechanism.

The next proposition characterizes the limit as $\lambda_P \rightarrow \infty$, the situation in which the buyer learns his true valuation almost immediately after purchase.

\begin{prop}\label{postopm}
 Suppose $\lambda_P\rightarrow \infty$.  The following mechanism is optimal:
\begin{enumerate}\setlength{\itemsep}{0cm}
\item If $\mu_0 \notin[\underline\mu,\overline\mu]$, then the mechanism offers a no-return policy with price $\mu_0 v$.
\item If $\mu_0 \in [\underline\mu,\overline\mu]$, then the mechanism offers a return policy with cancellation fee $\frac{k}{\lambda (1-\mu_0)}$ and price $q^{-1}(\mu_0)$. 
\end{enumerate}
\vspace{-0.3cm}
Under this mechanism, the seller's expected revenue is $\mu_0 v$, while the buyer's expected surplus from trade is 0. 
\end{prop}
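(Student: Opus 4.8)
The plan is to analyze the limit $\lambda_P \to \infty$ by taking the characterizations of the finite-$\lambda_P$ case from Corollaries \ref{post} and \ref{postimplementable} and letting $\lambda_P$ grow without bound. First I would record what happens to the post-purchase quitting belief: since $q_P(t_b,t_c) = k/(\lambda_P(v-t_b+t_c))$, we have $q_P \to 0$ as $\lambda_P \to \infty$ for any fixed finite transfers. This means that under the return-policy-with-cancellation-fee mechanism $m^c(t_b,t_c)$, the buyer learns until his belief is essentially $0$ or until good news arrives — i.e., he learns his true valuation almost perfectly. Consequently the post-purchase continuation value becomes the value of perfect learning: a high-type buyer keeps the product and a low-type buyer requests a return and pays only the cancellation fee.

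Next I would pin down the cancellation fee via the binding IR constraint of Corollary \ref{post}, namely $V_P(\mu_0;m^c(t_b,t_c)) = V^0(\mu_0;t_b)$. In the limit, the left-hand side is the value of (nearly) perfect post-purchase learning net of the cancellation fee; concretely, the high type obtains $v - t_b$ and the low type obtains $-t_c$ (he returns and pays the fee), with the learning cost vanishing in expectation as $\lambda_P\to\infty$ because the expected time to resolve uncertainty shrinks. So $V_P(\mu_0;m^c) \to \mu_0(v-t_b) - (1-\mu_0)t_c$. I would then set the price to the learning-deterrence-irrelevant upper bound $t_b = q^{-1}(\mu_0)$ (the highest price compatible with learning under rate $\lambda$, the pre-purchase rate), evaluate $V^0(\mu_0;q^{-1}(\mu_0))$, and solve the binding IR equation for $t_c$. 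The claim is that this yields $t_c = \frac{k}{\lambda(1-\mu_0)}$; I would verify this by plugging in the closed form for the quitting belief (\ref{quitbelief}) and the relation $MC(\mu) = k/(\lambda\mu(1-\mu))$, using that at the quitting belief the pre-purchase option value is zero so $V^0(\mu_0;q^{-1}(\mu_0))$ collapses to an integral of marginal information cost that evaluates cleanly.

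With the mechanism pinned down, I would compute the seller's revenue and the buyer's surplus. The buyer's surplus from trade is $0$ because IR binds at $V^0(\mu_0;q^{-1}(\mu_0))$, and at the price $q^{-1}(\mu_0)$ the pre-purchase option value is exactly zero (this is the defining price of the quitting belief at the prior). For revenue: with perfect learning the seller collects $t_b = q^{-1}(\mu_0)$ from the high type (probability $\mu_0$) and the cancellation fee $t_c$ from the low type (probability $1-\mu_0$), giving expected revenue $\mu_0 q^{-1}(\mu_0) + (1-\mu_0)\cdot\frac{k}{\lambda(1-\mu_0)} = \mu_0 q^{-1}(\mu_0) + \frac{k}{\lambda}$. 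I would then show this equals $\mu_0 v$ by substituting $q^{-1}(\mu_0)$ from inverting (\ref{quitbelief}): since $q(t_b)=\mu_0$ gives $t_b = v - \frac{k}{\lambda\mu_0}$, the revenue becomes $\mu_0 v - \frac{k}{\lambda} + \frac{k}{\lambda} = \mu_0 v$, exactly the first-best. Finally, for item 1 ($\mu_0 \notin [\underline\mu,\overline\mu]$), the argument is unchanged from the main model since learning is never valuable there, so the no-return price $\mu_0 v$ extracts full surplus.

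The main obstacle I expect is justifying the limit rigorously: I must argue that as $\lambda_P\to\infty$, (i) the expected learning cost incurred post-purchase vanishes, and (ii) the continuation value $V_P$ converges to the perfect-information value uniformly enough that the binding-IR equation and the resulting cancellation fee converge to the stated closed form. This requires controlling the solution of (\ref{odeb}) with boundary point $(q_P(t_b,t_c),-t_c)$ as $q_P\to 0$; the delicate point is that the integral $\int_{q_P}^{\mu_0} MC_P(\mu)\,d\mu$ has an integrand $k/(\lambda_P\mu(1-\mu))$ that diverges near $\mu=0$ while the lower limit $q_P$ simultaneously shrinks, so I would need to check that the product behaves correctly and that the net expected learning cost indeed tends to zero rather than to a positive constant. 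Once that convergence is established, the comparison with the learning-deterrence revenue $Q^{-1}(\mu_0)$ is straightforward because the free-learning mechanism achieves the full first-best surplus $\mu_0 v$, which dominates any learning-deterrence price.
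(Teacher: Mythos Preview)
Your approach is correct and takes a somewhat more direct route than the paper's. The paper keeps the price $t_b$ free, derives from the closed-form solution of (\ref{odeb}) the limit expression $t_r+t_c=\frac{\mu_0}{1-\mu_0}(v-t_b)-\frac{V^0(\mu_0;t_b)}{1-\mu_0}$ as $\lambda_P\to\infty$, and then shows that the expected revenue $\mu_0 t_b+(1-\mu_0)(t_r+t_c)$ is strictly increasing in $t_b$ on $[Q^{-1}(\mu_0),q^{-1}(\mu_0))$ (and constant at $\mu_0 v$ for higher prices), which singles out $t_b=q^{-1}(\mu_0)$. You instead guess $t_b=q^{-1}(\mu_0)$ up front, use $V^0(\mu_0;q^{-1}(\mu_0))=0$ to solve for $t_c$, and verify revenue equals $\mu_0 v$. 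Your argument is shorter, but the closing step should be sharpened: comparing only to the learning-deterrence revenue $Q^{-1}(\mu_0)$ is not enough, since you must rule out every other price and return policy. The clean fix is the first-best bound you are implicitly relying on: total surplus under any mechanism is at most $\mu_0 v$, and the buyer's participation payoff is at least $V^0(\mu_0;t_b)\ge 0$, so seller revenue can never exceed $\mu_0 v$; hence a mechanism achieving $\mu_0 v$ is automatically optimal. On the limit you flag as the obstacle, your intuition is correct and the paper's explicit formulas confirm it: the dangerous term is $\frac{k}{\lambda_P}\log\frac{q_P}{1-q_P}$, which is of order $\frac{\log\lambda_P}{\lambda_P}\to 0$, so the expected post-purchase learning cost vanishes and $V_P(\mu_0;m^c)\to\mu_0(v-t_b)-(1-\mu_0)t_c$ exactly as you state.
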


Proposition \ref{postopm} illustrates how a return policy can enhance price discrimination, allowing the seller's maximum revenue to reach the first-best surplus of $\mu_0 v$.\footnote{Another optimal mechanism would be a free-return policy with price $v$. Here we select the form given above, because a mechanism without cancellation fee is suboptimal when $\lambda_P$ is bounded away from $\infty$ or $v_l$ is bounded away from 0.  } In case 2 of Proposition \ref{postopm}, the seller sets the price to be  $q^{-1}(\mu_0)$ to make the buyer indifferent between learning before purchase and walking away. Hence, the buyer's option value is zero, $V^0(\mu_0;q^{-1}(\mu_0))=0$. Nevertheless, because he can obtain additional information value from post-purchase learning, he is willing to purchase the product first and learn afterward. However, the seller imposes a cancellation fee to extract this additional information value. Ultimately, the buyer receives zero participation value under the optimal mechanism. Note that the optimality of the second item in Proposition \ref{postopm} implies that the seller would benefit from providing more precise post-purchase information to the buyer, since this would enable her to extract more surplus through the cancellation fee.

\section{No news is good news}\label{bad}
In this section, we consider the opposite learning framework, introduced by \citet{badnews}, in which bad news arrives at rate $\rho$ if the buyer's true valuation is 0 (and no news arrives otherwise). We call this the bad-news model; the  previous framework is called the good-news model. For simplicity, we assume the learning rate is the same before and after purchase, and let the learning cost $k$ remain the same. Similarly, we assume $4k < v\rho$, so that learning is valuable for at least some prior beliefs.

Under the bad-news model, the buyer's posterior belief goes up as long as no news arrives, which leads to different learning behavior from that in our main model. Specifically, the buyer continues to learn either until  bad news arrives, at which point his belief falls to 0 and he requests a return, or until his belief becomes high enough that he decides to keep the product and not request a return. That is, he returns the product only if he is certain that his actual valuation is 0.  Consequently, the return policy has no direct impact on his stopping belief, whereas the price plays a major role in determining it. 

We use the same notation for the bad-news model as for our main model, except that we add the subscript $N$. 

\begin{lem}\label{IR-N}
    Under the optimal mechanism $m$, the buyer's IR binds: $V_N(\mu_0;m)=V^0_N(\mu_0;t_b)$. 
\end{lem}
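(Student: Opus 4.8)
The plan is to mirror the argument for Lemma \ref{newlem1}, since the statement is formally identical (IR binds under the optimal mechanism) but now in the bad-news environment. I would argue by contradiction: suppose the optimal mechanism $m=\{t_b,(x_r,t_r)\}$ is such that the buyer's continuation value strictly exceeds his pre-purchase option value, $V_N(\mu_0;m)>V^0_N(\mu_0;t_b)$. The goal is to show the seller can strictly increase revenue by perturbing the return policy, contradicting optimality. The natural perturbation is to raise the return transfer $t_r$ (equivalently shrink the refund $t_b-t_r$) while adjusting the keep-probability $x_r$ so that the buyer's learning behavior---his stopping/quitting belief and hence the induced distribution $G^{\beta}_N$---is left unchanged.

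The key steps, in order, are as follows. First I would recall the structural feature of the bad-news model emphasized just above the lemma: the buyer returns the product only when bad news arrives and he is certain his valuation is $0$, so the \emph{return policy has no direct impact on his stopping belief}, which is instead pinned down by the price $t_b$. This is the crucial simplification relative to the good-news case, because it means a change in $(x_r,t_r)$ does not alter the return rate as long as the price is held fixed and incentive compatibility is preserved. Second, I would exhibit the explicit perturbation: increase $t_r$ by a small $\varepsilon>0$. Since IR is assumed slack by a strictly positive margin, for $\varepsilon$ small enough the inequality $V_N(\mu_0;m')\ge V^0_N(\mu_0;t_b)$ still holds for the perturbed mechanism $m'$, so (IR) is not violated. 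Third, I would verify that the obedience and incentive-compatibility constraints continue to hold under the perturbation---here one either keeps $x_r$ fixed (if returns occur only at belief $0$, where the keep-value $\mu v x_r - t_r$ collapses to $-t_r$ and raising $t_r$ does not disturb the stopping decision) or co-adjusts $x_r$ to maintain the value-matching condition at the stopping belief. Fourth, I would conclude that $m'$ induces the same $G^{\beta}_N$ but yields the seller a strictly higher return transfer, hence strictly higher expected revenue, contradicting the optimality of $m$.

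The main obstacle I anticipate is checking that the perturbation does not inadvertently change the buyer's quitting/stopping belief or flip an incentive constraint in the bad-news geometry. In the good-news model the stopping belief $\beta$ sits in the interior and is tied to $x_r$ through smooth pasting, so one must co-move $x_r$; in the bad-news model the return occurs at the degenerate belief $0$, which should make the argument cleaner, but I would need to confirm that raising $t_r$ does not make \emph{requesting a return} unattractive relative to \emph{keeping} at the stopping belief (the analogue of the constraint $v-t_b>vx_r-t_r$ and $\beta vx_r-t_r>\beta v-t_b$ in the definition of implementability). Concretely, the delicate point is ensuring that after the perturbation the buyer still strictly prefers to return upon bad news and to keep upon favorable news, so that the induced distribution of stopping beliefs is genuinely unchanged. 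Once that is confirmed, the revenue comparison is immediate, since the seller collects a strictly larger $t_r$ on the (unchanged) return event and the same $t_b$ otherwise. Given the near-verbatim correspondence with Lemma \ref{newlem1}, I expect the formal write-up to be short, essentially a one-line appeal to the earlier proof once the bad-news-specific verification of the constraints is in place.
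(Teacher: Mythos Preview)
Your proposal has a genuine gap. You lean on the paper's informal remark that ``the return policy has no direct impact on [the buyer's] stopping belief'' to conclude that raising $t_r$ by $\varepsilon$ leaves the induced distribution unchanged. That reading is too strong. In the bad-news model, smooth pasting and value matching at the \emph{consumption} belief $\alpha$ yield
\[
\alpha \;=\; 1-\frac{k}{\rho\,(t_b-t_r)},
\]
so raising $t_r$ \emph{does} raise $\alpha$ and hence shifts $H^{\alpha}$ (the return rate $1-\mu_0/\alpha$ goes up). The paper's remark is really about the \emph{return} belief, which is always $0$ regardless of $(x_r,t_r)$, not about $\alpha$. Your ``main obstacle'' is therefore real, and your plan---``the distribution is genuinely unchanged, so revenue rises mechanically''---does not go through as stated. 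The argument can be rescued by directly differentiating revenue $R=t_r+\tfrac{\mu_0}{\alpha}(t_b-t_r)$ in $t_r$; one finds $dR/dt_r = 1+\mu_0(1-2\alpha)/\alpha^2$, whose sign is governed by $\alpha^2-2\mu_0\alpha+\mu_0$, a quadratic in $\alpha$ with negative discriminant and hence always positive. But that is a different (and longer) argument than the one you sketched.

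The paper's own proof sidesteps all of this with a one-line observation specific to the bad-news geometry. Since the buyer requests a return only at belief $0$, his return utility there is exactly $-t_r$, while his outside option from pre-purchase learning upon bad news is $0$. With identical learning rates, the two value functions therefore coincide when $t_r=0$, and IR is slack if and only if $-t_r>0$, i.e.\ $t_r<0$. A negative return transfer means the seller subsidizes the buyer on return, which is plainly not optimal (and lies outside the mechanism space $t_r\in[0,t_b]$). Hence IR binds. No perturbation, no stopping-belief bookkeeping, no co-adjustment of $x_r$ is needed; this is what the paper's terse line ``Negative transfer at return is not optimal'' encodes.
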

Since the buyer  requests a return only if his belief is 0, his utility from a return under an optimal mechanism must be 0; in particular, $x_r \times 0-t_r=0$, so $t_r=0$.\footnote{If the return transfer were positive, his return utility would be negative and he would prefer to do all his learning before purchase to obtain his option value.} This means the magnitude of $x_r$ does not affect the seller's revenue; therefore, for the sake of exposition, we let $x_r=0$.\footnote{This assumption can be made without loss of generality as regards the optimal mechanism. } That is, if the seller ever allows returns, she uses the free-return policy $(0,0)$. 
Hence, the buyer's continuation value under the optimal mechanism depends only on the price. With slight abuse of notation, we use $V_N(\mu;t_b)$ to denote the buyer's value function. 

As in the main model, the buyer's learning strategy is characterized by  two cutoff beliefs, $q_N(t_b)<Q_N(t_b)$. Here $q_N(t_b)$ is the quitting belief, at which the buyer is indifferent between returning the product and continuing to learn, while $Q_N(t_b)$ is the consumption belief, at which he is indifferent between consuming the product and continuing to learn:
\[
Q_N(t_b)=1-\frac{k}{\rho t_b}.
\]

\begin{figure}[t]
        \centering
        \includegraphics[width=14cm]{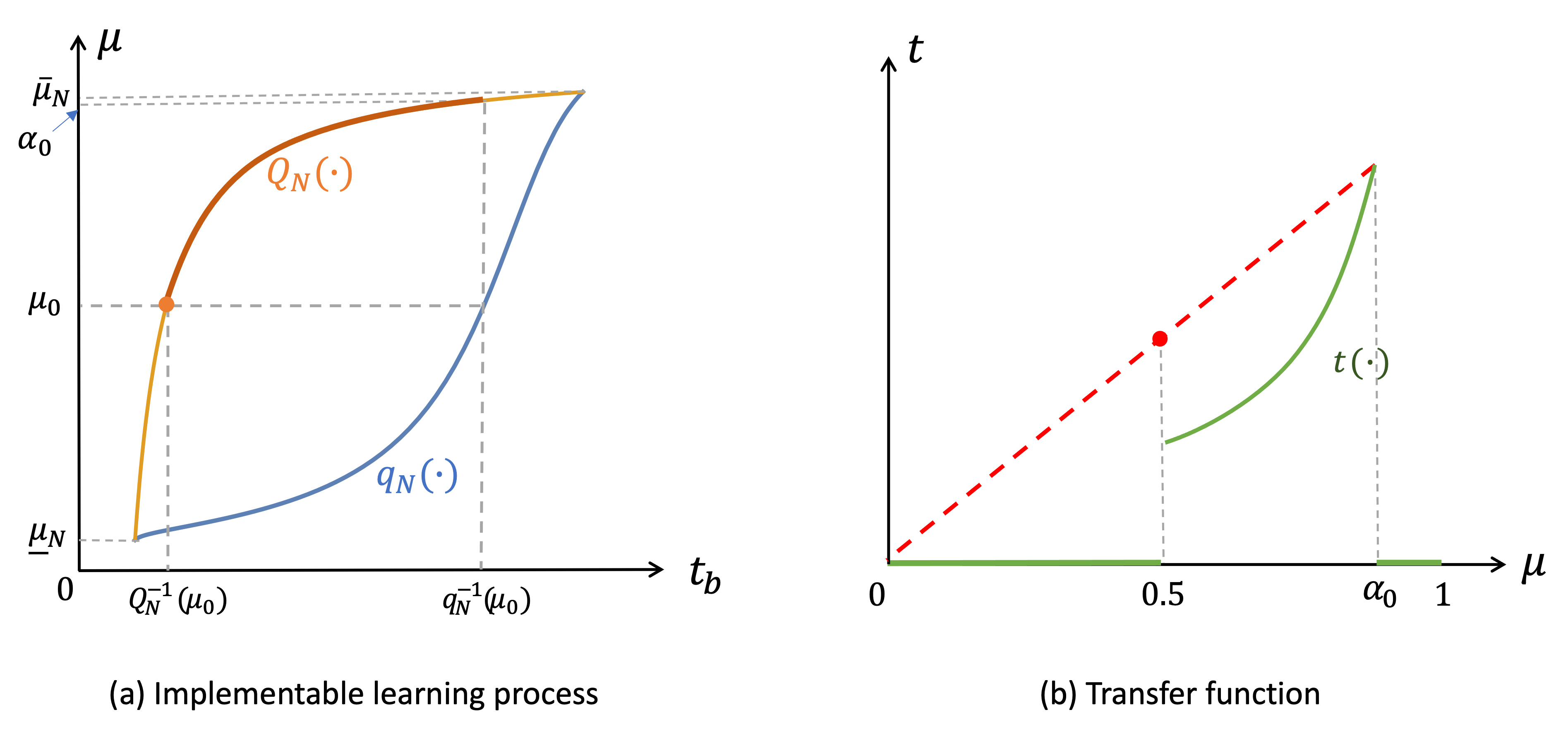}
        \caption{Bad-news model.}
        \label{transferbad}
        \end{figure}

Figure \ref{transferbad}(a) shows a graph of the quitting belief and consumption belief for the bad-news model. 
Since the buyer's posterior belief increases as long as no news arrives, every implementable learning process  corresponds to a distribution that has two atoms, one at 0 and the other at some $\alpha>\mu_0$. 
Specifically, let  $H^{\alpha}: [0,1]\rightarrow [0,1]$ be the following distribution, which has support $\{0,\alpha\}$: 
         \begin{equation*}
		  H^{\alpha}(\mu) = \begin{cases}
			1-\mu_0/\alpha &\text{if } \mu \in [0,\alpha), \\
			1 &\text{if } \mu \in [\alpha,1]. 
		\end{cases}
		\label{G}
	\end{equation*}
Unlike in the good-news model, here there is only one implementable stopping belief for a given fixed price $t_b$: the buyer always stops learning at the consumption belief $Q_N(t_b)$. Since he is willing to learn if the price lies in between $Q^{-1}_N(\mu_0)$ and $q^{-1}_N(\mu_0)$, the dark orange curve in Figure \ref{transferbad}(a) represents all implementable stopping beliefs. 
Moreover, $\alpha_0 := Q_N(q^{-1}_N(\mu_0))$ is the largest consumption belief that is implementable.  
Hence, the set of implementable distributions for the bad-news model is $\mathcal{H}^{\mu_0}=\{H^{\alpha}\}_{\alpha\in[\mu_0,\alpha_0]}$.

\begin{cor}
    A refund mechanism with price $t_b$ and a free-return policy is $H^{\alpha}$-implementable if 
     $t_b(\alpha)=Q^{-1}_N(\alpha)$, where  $\alpha\in (\mu_0,\alpha_0]$. 
  A refund mechanism with price $t_b= Q^{-1}_N(\mu_0)$ and a no-return policy  is $H^{\mu_0}$-implementable. 
\end{cor}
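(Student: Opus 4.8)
The plan is to adapt Proposition~\ref{implementable} to the bad-news setting, exploiting the two simplifications established just before the statement: the return transfer vanishes ($t_r=0$) and we may set $x_r=0$, so that the only design instrument is the price. Consequently, implementation reduces to choosing the price that pins down the correct upper atom of the stopping-belief distribution, and both claims follow by verifying the implementability constraints.

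First I would characterize the buyer's stopping behavior under a free-return policy at an arbitrary admissible price $t_b$. Because the belief drifts upward while no news arrives, the buyer continues to learn until one of two things happens: either bad news arrives, whereupon his belief drops to $0$ and, since $t_r=0$, returning yields $0$ while keeping yields $-t_b<0$, so he returns; or his belief rises to the consumption belief $Q_N(t_b)$, at which he is by definition indifferent between consuming and continuing and therefore stops and keeps the product. This two-sided stopping rule produces a distribution supported on $\{0,Q_N(t_b)\}$.

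Next I would identify this distribution through the martingale property of beliefs. Starting from the prior $\mu_0$ and ending on $\{0,Q_N(t_b)\}$, Bayesian plausibility forces the mass on the upper atom to equal $\mu_0/Q_N(t_b)$. Setting $Q_N(t_b)=\alpha$, i.e.\ $t_b=Q_N^{-1}(\alpha)$, makes this distribution coincide with $H^{\alpha}$, whose upper atom has mass $\mu_0/\alpha$. It then remains to check the constraints. Obedience at the prior (willingness to begin learning) holds precisely because $\alpha\in(\mu_0,\alpha_0]$ forces $t_b=Q_N^{-1}(\alpha)$ into the admissible price range $(Q_N^{-1}(\mu_0),q_N^{-1}(\mu_0)]$; obedience at the two stopping beliefs holds by the argument above; and incentive compatibility follows because keeping weakly dominates returning at belief $\alpha$ (value-matching and the optimality of consumption there give the non-negative value $\alpha v-t_b$), while returning strictly dominates keeping at belief $0$. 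For individual rationality, the key observation is that, since the learning rate is the same before and after purchase and $t_r=0$, post-purchase learning under free returns exactly replicates pre-purchase learning; hence $V_N(\mu_0;m)=V^0_N(\mu_0;t_b)$ and IR binds, consistent with Lemma~\ref{IR-N}. The second claim, $H^{\mu_0}$-implementability at $t_b=Q_N^{-1}(\mu_0)$ with a no-return policy, follows because at this price the consumption belief coincides with the prior, so the buyer is willing to consume immediately, while the no-return policy eliminates any benefit from post-purchase learning---yielding the Dirac measure $H^{\mu_0}$.

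The step I expect to be the main obstacle is confirming that the buyer's optimal policy is genuinely this two-sided stopping rule and that IR binds rather than being slack. This requires ruling out stopping at an interior belief and establishing the pre-/post-purchase learning equivalence, which is precisely what legitimizes collapsing the problem to the single price instrument. Once that equivalence is secured, the remainder is a direct translation of the smooth-pasting and value-matching logic of Proposition~\ref{implementable} to the upward-drifting belief process.
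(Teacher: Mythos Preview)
Your proposal is correct and follows essentially the same route as the paper, which presents this corollary without proof as an immediate consequence of the preceding discussion (Lemma~\ref{IR-N} forces $t_r=0$, $x_r$ is then irrelevant and set to $0$, and the only remaining instrument is the price, which pins down the consumption belief $Q_N(t_b)$). Your verification of the implementability constraints simply makes explicit what the paper leaves implicit in the two paragraphs before the corollary.
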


As in the good-news model, we construct the transfer function 
$t:[0,1] \rightarrow [0,v]$ piecewise: 
 \begin{equation*}
		  t(\mu) = \begin{cases}
			0 &\text{if } \mu \in [0,\mu_0), \\
			Q^{-1}_N(\mu) &\text{if } \mu \in [\mu_0,\alpha_0],  \\
			0 &\text{if } \mu \in (\alpha_0,1].
		\end{cases}
	\end{equation*}
For beliefs $\mu \in (\alpha_0,1]$ we let $ t(\mu) =0$, because those belief cannot be realized. Figure \ref{transferbad}(b) depicts a transfer function with $\mu_0=0.5$.  
The seller's problem is to maximize the expected transfer over the implementable distributions: 
\begin{equation*} 
\begin{aligned}
    \sup_{H^{\alpha}}\ \int_0^1 &t(\mu) \ dH^{\alpha} (\mu) \qquad  \\
  \textnormal{s.t. } \quad  \ & \ H^{\alpha}\in \mathcal{H}^{\mu_0} .
\end{aligned}
\end{equation*}
From Figure \ref{transferbad}(b), one can see that the second piece of the transfer function (the part supported on $[\mu_0,\alpha_0]$) is increasing and convex. Concavifying the transfer function over the implementable distributions $\mathcal{H}^{\mu_0}$ would imply a corner solution, in the sense that the optimal distribution would be either $H^{\mu_0}$ or $H^{\alpha_0}$ depending on the relative positions of the pieces of the transfer function. 

The next proposition summarizes the optimal refund mechanism for every prior belief; Figure \ref{badnews} provides a graphical representation.
\begin{prop}\label{negative}
    There is a non-empty closed interval $F_N \subset [\underline{\mu}_N,\overline\mu_N]$ such that the optimal mechanism takes the following form:
\vspace{0cm}
\begin{enumerate}\setlength{\itemsep}{0cm}
\item If $\mu_0 \notin[\underline\mu_N,\overline\mu_N]$, the mechanism offers a no-return policy with price $\mu_0 v$.
\item If $\mu_0 \in [\underline\mu_N,\overline\mu_N]$ and $\mu_0 \notin F_N$, the mechanism offers a no-return policy with the learning-deterrence price $Q^{-1}_N(\mu_0)$.
\item If $\mu_0 \in F_N$, the mechanism offers a free-return policy with price $Q^{-1}_N(\alpha_0)$.
\end{enumerate}
\end{prop}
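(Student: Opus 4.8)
The plan is to mirror the structure of the good-news analysis (Theorem~\ref{two}) but to exploit the crucial simplification that in the bad-news model the stopping belief is pinned down by the price alone, so there is no stochastic-return instrument to optimize over. First I would reduce the seller's problem to a one-dimensional optimization. Since IR binds (Lemma~\ref{IR-N}) and a positive return transfer would make the return utility negative, we have $t_r=0$, and the only nontrivial return policy is the free-return policy $(0,0)$; thus every candidate mechanism is described by a single price $t_b$ (equivalently, by the implemented consumption belief $\alpha=Q_N(t_b)$). The seller's revenue is then the expected transfer $\int t\,dH^\alpha = (\mu_0/\alpha)\,Q_N^{-1}(\alpha)$, i.e.\ the probability $\mu_0/\alpha$ of reaching the consumption belief $\alpha$ times the price $Q_N^{-1}(\alpha)$ charged there (a return yields transfer $0$). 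The learning-deterrence benchmark corresponds to $\alpha=\mu_0$, giving revenue $Q_N^{-1}(\mu_0)=\mu_0 v$-type surplus extraction without learning.

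Next I would compare the learning-deterrence revenue against the best free-return revenue. Because the relevant piece of the transfer function $Q_N^{-1}(\mu)$ on $[\mu_0,\alpha_0]$ is increasing and convex (as noted after the transfer-function display), concavifying over the binary-support family $\mathcal{H}^{\mu_0}$ forces a corner: the optimal $\alpha$ is either $\mu_0$ (no learning, item~2) or $\alpha_0=Q_N(q_N^{-1}(\mu_0))$, the largest implementable consumption belief (full learning with free returns, item~3). This convexity argument is the analogue of the quasi-convexity in Theorem~\ref{deterministic}, but here it is immediate from the shape of $Q_N^{-1}$ rather than requiring an elasticity comparison, precisely because the price is the only lever. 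I would verify convexity by differentiating $Q_N^{-1}$ twice (from $Q_N(t_b)=1-k/(\rho t_b)$ one gets $t_b=Q_N^{-1}(\alpha)=k/(\rho(1-\alpha))$, which is manifestly increasing and convex on $[\mu_0,\alpha_0]$), so the concavification touches the transfer function only at the two endpoints.

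To pin down the interval $F_N$ I would define $f(\mu_0):=$ (free-return revenue at $\alpha_0$) $-$ (learning-deterrence revenue), both expressed as explicit functions of $\mu_0$ using the closed forms for $Q_N$, $q_N$, and $Q_N^{-1}$, and show that $\{\mu_0: f(\mu_0)\ge 0\}$ is a non-empty closed subinterval of $[\underline\mu_N,\overline\mu_N]$. Non-emptiness and the interval (single-crossing) structure would follow from the same economic logic as in the good-news case: learning deterrence requires depressing the price to compensate the buyer's option value, which is large exactly when the prior is most uncertain, so free returns dominate on a middle range of beliefs; at the endpoints $\underline\mu_N,\overline\mu_N$ the option value vanishes and deterrence (or outright $\mu_0 v$ pricing in item~1) is optimal. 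For item~1, when $\mu_0\notin[\underline\mu_N,\overline\mu_N]$ learning is never optimal for the buyer (the bad-news analogue of Proposition~\ref{benchprop}), so the seller extracts the full surplus $\mu_0 v$ with a no-return policy at price $\mu_0 v$.

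The main obstacle is establishing that $F_N$ is a single closed interval rather than a more complicated set, i.e.\ proving the single-crossing of $f$. Unlike non-emptiness, which follows from evaluating $f$ at a convenient interior point, the interval structure needs a monotonicity/single-crossing property of the gap between the two revenue expressions in $\mu_0$. I expect this to require differentiating $f$ and signing the derivative using the convexity of $Q_N^{-1}$ together with the monotone dependence of $\alpha_0$ and the deterrence price on $\mu_0$; the good-news proof presumably supplies an analogous lemma whose argument transfers, but the bad-news closed forms differ, so the sign computation must be redone carefully. I would isolate this as a separate claim and verify it by direct calculation with the explicit formulas for $Q_N$ and $q_N$.
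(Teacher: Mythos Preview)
Your reduction to a one-dimensional problem and the corner argument via convexity of $Q_N^{-1}$ are correct and match the paper. Where you diverge is in the single-crossing step. The paper exploits the closed form $Q_N^{-1}(\alpha)=k/(\rho(1-\alpha))$ to collapse the revenue comparison to a single inequality: learning deterrence dominates free return at $\alpha$ if and only if $\alpha(1-\alpha)\ge \mu_0(1-\mu_0)$, i.e.\ (since $\alpha_0>\mu_0$) if and only if $\alpha_0\le 1-\mu_0$. Thus the sign of your $f(\mu_0)$ is governed by whether the curve $\alpha_0(\mu_0)$ lies above or below the line $1-\mu_0$. The paper then obtains the ODE $\alpha_0'(\mu_0)=\dfrac{(1-\alpha_0)^2\alpha_0}{\mu_0^2(1-\mu_0)}$ by implicitly differentiating $V_N(\mu_0;Q_N^{-1}(\alpha_0))=0$, and uses it together with the boundary conditions $\alpha_0(\underline\mu_N)=\underline\mu_N$, $\alpha_0(\overline\mu_N)=\overline\mu_N$ to show a single crossing of $\alpha_0(\mu_0)$ with $1-\mu_0$. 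Your proposal to differentiate $f$ directly would work in principle, but without first reducing to the $\alpha_0$ versus $1-\mu_0$ comparison the derivative is considerably messier; the paper's route is both shorter and more transparent.

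One point of substance to correct: your intuition that deterrence wins near both endpoints of $[\underline\mu_N,\overline\mu_N]$ is imported from the good-news model and is wrong here. Because $\alpha_0>\mu_0$, for any $\mu_0\ge 1/2$ we automatically have $\alpha_0>1-\mu_0$, so free return (weakly) dominates throughout $[1/2,\overline\mu_N]$; the right endpoint of $F_N$ is $\overline\mu_N$, not an interior point. Economically, the free-return mechanism at price $Q_N^{-1}(\alpha_0)=q_N^{-1}(\mu_0)$ drives the buyer's participation value to zero, so there is no reason for deterrence to reassert itself at high priors. Your single-crossing argument therefore only needs to locate the left endpoint of $F_N$ (the unique $\mu_0<1/2$ where $\alpha_0(\mu_0)=1-\mu_0$), and non-emptiness is immediate from the observation that $F_N$ contains $[1/2,\overline\mu_N]$.
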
 
The form of the optimal refund mechanism for the bad-news model is similar to that of the good-news model. However, there is a crucial difference: when the seller optimally allows free returns, she price-discriminates to the extreme. As a result, the buyer adopts the longest learning process, $H^{\alpha_0}$, and gains zero participation utility, i.e., $V_N(\mu_0;Q^{-1}_N(\alpha_0))=0$.\footnote{From Figure \ref{transferbad}(a), the price that implements the stopping belief $\alpha_0$ is the same as the price $q^{-1}(\mu_0)$ that makes the buyer indifferent between learning and quitting at the prior belief. Hence the buyer's option value is 0. }  Consequently, the right endpoint of the interval $F_N$ coincides with $\overline\mu_N$. 
This is driven by the nature of the learning framework: in the bad-news model, as long as no news arrives, the buyer becomes increasingly optimistic and his continuation value rises, meaning the seller can raise the price until the buyer's initial continuation value is zero. By contrast, in the good-news model, as long as no news arrives, the buyer becomes increasingly pessimistic and his continuation value falls; when it hits zero, he requests a return. Therefore, to secure his participation, the seller has to provide him with a positive participation payoff. 
Thus, in the bad-news model, free returns further escalate price discrimination: the buyer can only obtain positive participation value if the optimal mechanism deters learning.

 \begin{figure}[t]
        \centering
        \includegraphics[width=11cm]{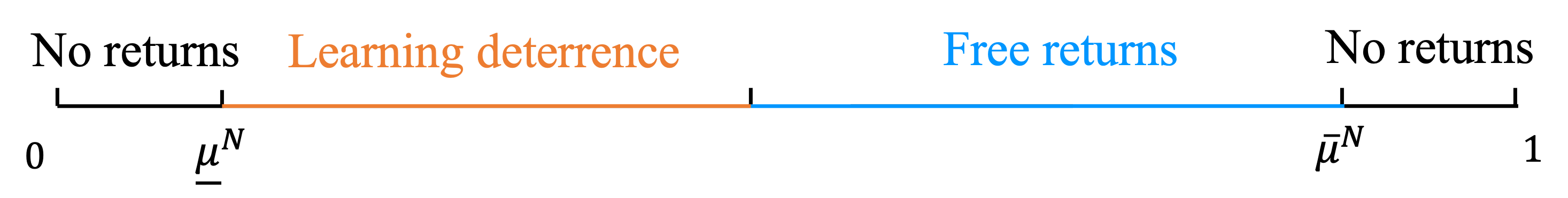}
        \caption{Optimal refund mechanism for bad-news model.}
        \label{badnews}
        \end{figure}

\section{Discussion}
\label{sec:discussion}

\subsection{Alternative interpretation}

In this section we discuss an alternative interpretation of the refund mechanism. Consider a start-up company offering an option contract, consisting of a baseline allocation and an option, to one acquirer. The baseline allocation specifies the price $t_r$ for a share $x_r$ of the company, and the acquirer has the option to purchase the remaining share ($1-x_r$) of the company at a strike price $(t_b-t_r)$. The timeline is as follows. At the beginning, the start-up offers this contract to the acquirer. The acquirer can then evaluate the contract and choose to accept or reject it. If he decides to accept it, then based on information that arrives afterward, he decides whether to purchase the remaining share. Notice that if the baseline allocation specifies a price for the entire company, i.e., $x_r=1$ and $t_b-t_r=0$, then the contract does not involve any real option.

With this interpretation, in the good-news model, the buyer purchases the remaining share only if he obtains conclusive good news before he stops learning (e.g., if some ongoing innovation of the start-up proves successful). Anticipating this, the start-up designs the baseline allocation so as to control the acquirer's learning decision, which hinges on his comparing the expected payoff from the baseline allocation and the continuation value from learning (the latter also depends on the strike price). By Theorem \ref{deterministic}, the optimal baseline allocation either allows the buyer to freely opt out or specifies a price for the entire company. Moreover, in the former case, the start-up designs a sufficiently high strike price to induce a relatively long learning process, while in the latter case the price is low enough to induce the buyer to purchase the entire company immediately. The optimal contract under the bad-news model has a similar structure.

\subsection{Remarks on the refund mechanism}
\textbf{Binary menu}. 
The refund mechanism we propose is essentially a binary menu with two options, $\{(1,t_b),(x_r,t_r)\}$; we implicitly assume that the first option allocates the product to the buyer with probability one. This is an assumption without loss of generality. To see why, suppose the seller offers a mechanism $\{(x_b,t_b),(x_r,t_r)\}$ with $0\le x_r\le x_b<1$. The buyer's option value depends only on the high-type surplus, which is $x_b v-t_b$ in this case. Therefore, if $x_b<1$, the seller can increase this allocation rate and simultaneously increase the price so as to keep the  high-type surplus constant. This would generate a profitable deviation, since the buyer's option value, his quitting belief, and the optimal stopping belief would all remain the same (because they depend only on the high-type surplus), but the price would be higher.\footnote{Suppose the seller encourages learning. If we keep the high-type surplus constant, then reducing $x_b$ can only reduce the trial belief. This will shrink the set of implementable distributions and thus have only a downward effect on revenue. If the seller deters learning, then one can easily show that the learning-deterrence price is increasing in $x_b$. } 

\textbf{Multi-option menu}. In principle, the refund mechanism can consist of arbitrarily many options, $\{(1,t_b),(x^1_r,t^1_r),\ldots,(x^K_r,t^K_r)\}$, where $K\ge 1$ is an integer. One can easily obtain a binding IR constraint in this case.
Moreover, the obedience constraint (OB) implies that the buyer is willing to stop learning at $\beta^k$ and request the return option $(x^k_r,t^k_r)$.  Both IR and OB imply that the buyer is indifferent to stopping learning at all $\beta^k$ with $1\le k\le K$. From an ex-ante point of view, the seller can therefore implement a distribution $G$ with finite support $\{\beta^1,...,\beta^K,1\}$.\footnote{Note that if the support has more than two elements, then the distribution is not uniquely determined through its support under a binary state space. }  However, a standard result in \cite{KamenicaGentzkow2011} indicates that with a binary state space, a distribution with binary support is sufficient to characterize the solution of $\max_{F} \int_0^1 y(x) d F(x)$, where $F$ is the set of feasible distributions on the posterior beliefs. Hence, it is without loss to consider a refund mechanism with only one return option.

\textbf{Return window}. As mentioned in Section \ref{sec:model}, the seller cannot benefit from imposing a return window; that is, her maximum revenue remains the same whether or not her policy is allowed to include a return window. To elaborate, suppose the optimal refund mechanism is characterized by the triplet $\left\{t_b,\left(x_r, t_r\right)\right\}$ and the return window $T$. We distinguish between two cases: either (a) the buyer stops learning (weakly) before reaching the return window and will not change his learning behavior if the return window is extended, or (b) he stops learning at the return window but would be willing to keep learning if the return window were extended. In case (a) it is trivially true that the seller cannot gain from imposing the return window. In case (b), the seller has a profitable deviation. To see this, suppose the buyer's learning is restricted through $T$, and let his posterior belief upon reaching the return window be $\mu_T$.\footnote{In the case $\lambda_P=\lambda_B$, the return window cannot effectively limit the buyer's learning process. For example, the buyer can calculate the stopping time $\tau^*$ that would be optimal in the absence of the return window, and learn for a duration of $\tau^*-T$ before purchase. Hence $\mu_T=\mu(\tau^*)$. In the case $\lambda_P>\lambda_B$, $\mu_T$ is more complicated. To explicitly characterize $\mu_T$, we need to solve for the buyer's optimal purchase time, denoted by $\tau_P$. This also determines his belief at purchase, denoted by $\mu_P=\mu(\tau_P)$. The buyer's stopping belief at the return window $\mu_T$ is determined by the law of motion for post-purchase learning with a starting belief $\mu_P$ from $\tau_P$ to $\tau_P+T$. Nevertheless, there is a well-defined stopping belief at the return window. } His value function for post-purchase learning, $V_P(\,\cdot\;m,T)$, is the solution to \eqref{odeb} with a boundary point $\left(\mu_T, x_r \mu_T v-t_r\right)$. Because he would prefer to continue leaning if $T$ were extended, his value function on the whole belief space has a kink at $\mu_T$. To the left of $\mu_T$, the value function coincides with the return utility $x_r \mu v-t_r$. Note that the kink is an upward kink, i.e., the return utility crosses the value function for post-purchase learning from above. Hence the seller can set the same return window $T$ but increase both $x_r$ and $t_r$ to make the return utility tangent to the value function. The return window will then have no bite. In other words, we can first solve for the optimal refund mechanism without a return window and then set the return window equal to the buyer's stopping time under the optimal mechanism.

This is consistent with the observation  that during the holiday season, retailers often provide extended return windows so that customers can take their time trying out products. For example, an Amazon announcement for the 2023 holiday season states, ``Although the return window for most orders will be extended, returns eligibility for all orders remains the same.'' 
In other words, unless the seller does not allow returns at all, she will not use the return window to restrict the buyer's learning.

\newpage
\section*{Appendix A}\label{appendixa}

\begin{proof}[{\bf Proof of Lemma \ref{set}}] 
Suppose that stopping learning at $\beta<q(t_b)$ is an optimal learning strategy given a mechanism $\{t_b,(x_r,t_r)\}$. Then such a stopping belief is determined by the smooth pasting and value matching condition. Smooth pasting implies  $V_1(\beta;t_b)\ge 0$ since  $x_r\ge 0$ and thereby the return payoff is weakly increasing in belief. From (\ref{odeb}), we can conclude $V(\beta;t_b)<0$. Contradiction. 
\end{proof}

\begin{proof}[{\bf Proof of Proposition \ref{benchprop}}] The  optimal learning strategy in case 2 is a corollary of Proposition 3.1 in \cite{KRC}. To prove the existences of $\underline{t}_b$ and $\overline{t}_b$, we want to show $q(t_b)\le Q(t_b)$ if $t_b\in[\underline{t}_b,\overline{t}_b]$ with equality holds at the two boundaries. At the quitting belief, $q(\underline{t}_b)$,  the buyer must prefer to walk away rather than accept the price,
\[
q(t_b) v-t_b\le 0. 
\]
This inequality leads to the existence of $\underline{t}_b$ and $\overline{t}_b$ given the assumption $4 k< \lambda v$.  

Let $\tilde{\mu}(\mu):=q(Q^{-1}(\mu))$ be the quitting belief when the price is the inverse of the trial belief, $t_b=Q^{-1}(\mu)$. First, notice that $\tilde{\mu}(\cdot)$ is increasing in $\mu$ because both $q(\cdot)$ and $Q(\cdot)$ are increasing. Moreover,  $\tilde{\mu}(\underline{\mu})=\underline{\mu}$, $\tilde{\mu}(\bar\mu)=\bar\mu$. We want to show $\tilde{\mu}(\mu)<\mu$ if $\mu\in(\underline{\mu},\bar\mu)$. 
With the construction of the trial belief, 
\[V(\mu;Q^{-1}(\mu))=\mu v-Q^{-1}(\mu). \]
Take implicit differentiation with respect to to $\mu$, we obtain
\[
 \frac{d Q^{-1}(\mu)}{d \mu}=-\frac{k (k-\lambda (v-Q^{-1}(\mu)))}{\lambda^2(1-\mu)^2\mu (v-Q^{-1}(\mu))}=-\frac{k (\tilde{\mu}-1)}{\lambda(1-\mu)^2\mu }>0.
\]
Furthermore, 
\[
\frac{d \tilde{\mu}}{d \mu}=-\frac{k^2(k-\lambda (v-Q^{-1}(\mu)))}{\lambda^3(1-\mu)^2 \mu (v-Q^{-1}(\mu))^3}=\frac{\tilde{\mu}^2 (1-\tilde{\mu})}{(1-\mu)^2 \mu}.
\]
Thus, $\tilde{\mu}(\mu)$ is a differential equation with initial point $(\underline{\mu},\underline{\mu})$,
and its solution is
\begin{equation}\label{mutilde}
     -\frac{1}{\tilde{\mu}} + \log\left[\frac{\tilde{\mu}}{1-\tilde{\mu}}\right]=\frac{1}{1-\mu} +\log\left[\frac{\mu}{1-\mu}\right]-\frac{\lambda v}{k}.
\end{equation}
The left-hand side is a function of $\tilde{\mu}$ and the right-hand side is a function of $\mu$. Both are increasing and they coincide at $\underline\mu$ and $\bar\mu$. Moreover, the slope of the left-hand side is larger (smaller) than the right-hand side if both arguments are smaller (larger) than 0.5. Given that $\tilde{\mu}(\mu)$ is determined when the left-hand side equals the right-hand side,  thus, $\tilde{\mu}(\mu)<\mu$ if $\mu\in(\underline\mu,\bar\mu)$. Since $Q^{-1}(\mu)$ is monotone in $\mu$, then $q(t_b)\le Q(t_b)$ if $t_b\in[\underline{t}_b,\overline{t}_b]$.
\end{proof}

\begin{proof}[{\bf Proof of Lemma \ref{newlem1}}]
Suppose $V_P(\mu_0;m)\ge V^0(\mu_0;t_b)$ holds with strict inequality. That is,  under the mechanism $m$, 
if the buyer optimally stops learning at  belief $\beta$ and requests a return, he gets a payoff $V_P(\beta;m)>V^0(\beta;t_b)$. 
With the standard smooth-pasting and value-matching conditions,  
\[
 t_r(\beta) =\beta V'_P(\beta; m)/v-V_P(\beta;m).
\]
Note that $V'_P(\beta;m)$ and $V_P(\beta;m)$ are constrained by the differential equation 
\[
(1-\beta)\beta \lambda V_P'(\beta;m)+ \beta \lambda V_P(\beta;m)=\beta \lambda (v-t_b) - k
\]
derived from the Bellman equation for post-purchase learning. As long as $V_P(\beta;m)> V^0(\beta;t_b)$, 
the seller can reduce $V_P(\beta;m)$ and increase $V'_P(\beta;m)$  with respect to the differential equation, to induce the same stopping belief $\beta$. This leads to a higher $t_r(\beta)$ and thereby implies a profitable deviation. 
\end{proof}

\begin{proof}[{\bf Proof of Proposition \ref{implementable}}]
Given Lemma \ref{newlem1}, then standard smooth pasting and value matching conditions give the formulas for $x_r$ and  $t_r$ when $\beta\in [q(t_b),\mu_0)$. We want to show the second equality in equation (\ref{value}). Take integration by part and omit the notation in the brackets,
\[
\beta V_1^0 -V^0 = \beta V^0_1 -\int_0^{\beta} V^0_1 d\mu =\beta V^0_1-V^0_1 \mu \biggm|^\beta_{q(t_b)}+\int^{\beta}_{q(t_b)} \mu d V^0_1=\int^{\beta}_{q(t_b)} \mu V^0_{11} d \mu,
\]
where $V^0_{11}$ is the second-order derivative with respect to $\beta$. The second equality comes from integration by part and $V^0_1=0$ for $\beta\le q(t_b)$. Now we show $\mu V^0_{11}=k/|\mu'(\tau)|$ to close the proof. Rearrange the (\ref{odeb}),
\[
k=\mu \lambda(v-t_b-V^0)-\lambda\mu(1-\mu)V^0_1.
\]
Take total differentiation with respect to $\mu$ and multiply both sides by $\mu$,
\[
\mu V^0_{11}\lambda \mu (1-\mu)=\mu \lambda  (v-t_b-V^0)-\lambda \mu (1-\mu) V^0_1=k,
\]
where $\lambda \mu (1-\mu)=|\mu'(\tau)|=1/|\tau'(\mu)|$. Thus, $
\mu V^0_{11}=k |\tau'(\mu)|=MC(\mu)$, and 
\[
t_r(\beta,t_b)= \int^{\beta}_{q(t_b)} \mu V^0_{11} d \mu =\int^{\beta}_{q(t_b)} MC(\mu) d \mu.
\]
\end{proof}

\begin{proof}[{\bf Proof of Proposition  \ref{step1}}]
We want to show that $t_b^*=Q^{-1}(0.5)$ is the solution to $\beta^o(t^*_b)=Q(t^*_b)$ and $ t_b^{**}=v/2$ is the solution to $\beta^o(t^{**}_b)=q(t^{**}_b)$. 

First, the construction of $t^*_b$ implies that, 
\[
t^*_b=\beta^o(t^*_b)  v- V^0. 
\]
Since $t_r(\beta^o(t^*_b),t^*_b)=\beta^o(t^*_b) V^0_1-V^0$, we obtain, 
\[
t^*_b-t_r=\beta^o(t^*_b) \left(\frac{V^0-(\beta^o(t^*_b) v-t^*_b)}{1-\beta^o(t^*_b)}+\frac{k}{|\mu'(\tau)|} \right)=\beta^o(t^*_b) \frac{k}{|\mu'(\tau)|},
\]
where $k/|\mu'(\tau)|$ is evaluated at $\beta^o(t^*_b)$.
The first equality is from  a slight rearrangement of the (\ref{odeb}). The second equality comes from the first item in the bracket equals 0. Then substitute the expression into equation (\ref{foc}) and simplify it, we obtain, 
\[
\frac{1}{\beta^o(t^*_b)}=\frac{1}{1-\beta^o(t^*_b)} \Longrightarrow \beta^o(t^*_b)=0.5 .
\]
Hence, $t^{*}_b=Q^{-1}(0.5)$. 

Next, the construction of $t^{**}_b$ implies, $t_r(\beta^o(t^{**}_b);t^{**}_b)=0$ and $t^{**}_b=v-\frac{k}{\lambda \beta^o(t^{**}_b)}$. Substitute them back to equation (\ref{foc}),
\[
\frac{k/(\lambda \beta^o(t^{**}_b)(1-\beta^o(t^{**}_b)))}{v-k/(\lambda \beta^o(t^{**}_b))}=\frac{1}{1-\beta^o(t^{**}_b)} \Longrightarrow \beta^o(t^{**}_b)=\frac{2k}{\lambda v}.
\]
Hence, $t^{**}_b=v/2$.  

If $\mu_0=0.5$, to deter learning, the seller sets a non-refundable price $Q^{-1}(0.5)$; to encourage learning, the price must stay in $(Q^{-1}(0.5),q^{-1}(0.5)]$. Since $t_b^*=Q^{-1}(0.5)$ and $t_b^{**}<q^{-1}(0.5)$, we obtain the optimal information process in proposition \ref{step1}. 
\end{proof}

\begin{cor}\label{fullcharacterization} 
The optimal information process $G^{\beta^*}$ is determined as following: 
\vspace{-0.2cm}
\begin{itemize}\setlength\itemsep{0cm}
    \item[1.] If $\mu_0\le q(t_b^{**})$, then $\beta^*(t_b)\rightarrow \mu_0$ regardless of the price $t_b\in [Q^{-1}(\mu_0),q^{-1}(\mu_0)]$. 
    \item[2.] If $\mu_0\in (q(t_b^{**}),0.5)$, then,  
    \vspace{-0.05cm}
    \begin{enumerate}\setlength\itemsep{0cm}
 \item[(a).] If $t_b\in [Q^{-1}(\mu_0), t^*_b]$, then the $\beta^*(t_b)\rightarrow\mu_0$;
    \item[(b).] If $t_b\in (t^*_b,t^{**}_b)$, then the $\beta^*(t_b)=\beta^o(t_b)$;
    \item[(c).] If $t_b\in [t^{**}_b,q^{-1}(\mu_0)]$, then $\beta^*(t_b)=q(t_b)$. 
\end{enumerate}
\item[3.] If $\mu_0\in (0.5,Q(t^{**}_b))$, then,  
 \vspace{-0.05cm}
\begin{enumerate}\setlength\itemsep{0cm}
 \item[(a).] If $t_b=Q^{-1}(\mu_0)$, then the $\beta^*(t_b)=\mu_0$;
    \item[(b).] If $t_b\in (Q^{-1}(\mu_0),t^{**}_b)$, then the $\beta^*(t_b)=\beta^o(t_b)$;
    \item[(c).] If $t_b\in [t^{**}_b,q^{-1}(\mu_0)]$, then $\beta^*(t_b)=q(t_b)$. 
    \end{enumerate}
    \item[4.] If $\mu_0\ge Q(t^{**}_b)$, then $\beta^*(t_b)=q(t_b)$ regardless of the price $t_b\in [Q^{-1}(\mu_0),q^{-1}(\mu_0)]$. 
\end{itemize}
\end{cor}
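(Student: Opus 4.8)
The plan is to fix an admissible price $t_b\in[Q^{-1}(\mu_0),q^{-1}(\mu_0)]$, reduce the problem to the one‑dimensional maximization of the expected transfer $L(\beta):=\int_0^1 t(\mu;t_b)\,dG^{\beta}(\mu)$ over the feasible stopping beliefs $\beta\in[q(t_b),\mu_0]$, and read off the three possible outcomes (right corner $\mu_0$, interior $\beta^o(t_b)$, left corner $q(t_b)$) from the shape of $L$. First I would record the closed form $L(\beta)=\frac{(1-\mu_0)\,t_r(\beta,t_b)+(\mu_0-\beta)\,t_b}{1-\beta}$ coming from the binary support $\{\beta,1\}$, and differentiate to obtain $L'(\beta)=\frac{1-\mu_0}{(1-\beta)^2}\,\Phi(\beta)$, where $\Phi(\beta):=MC(\beta)(1-\beta)-\big(t_b-t_r(\beta,t_b)\big)$. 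Thus the critical points of $L$ are exactly the zeros of $\Phi$, and $\Phi(\beta)=0$ is precisely the first‑order condition (\ref{foc}) defining $\beta^o(t_b)$.

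The second step is a global shape analysis of $\Phi$. Using $t_r'=MC$ (the slope identity behind (\ref{value})) one gets $\Phi'(\beta)=MC'(\beta)(1-\beta)$, and since $MC(\beta)=k/\big(\lambda\beta(1-\beta)\big)$ is symmetric about $1/2$ with $MC'<0$ on $(0,1/2)$ and $MC'>0$ on $(1/2,1)$, the function $\Phi$ is strictly decreasing on $(0,1/2)$, strictly increasing on $(1/2,1)$, with a unique minimum at $1/2$. I would then evaluate $\Phi(q(t_b))=v-2t_b$ (using $t_r(q(t_b),t_b)=0$ and $k/(\lambda q(t_b))=v-t_b$) and note $\partial_{t_b}\Phi=-1/(1-q(t_b))<0$, so $\Phi$ shifts down uniformly in $t_b$. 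These facts pin the thresholds: $\Phi_{t_b}(1/2)=0$ identifies $t^*_b=Q^{-1}(1/2)$ with $\beta^o(t^*_b)=1/2$, while $\Phi_{t_b}(q(t_b))=0$ identifies $t^{**}_b=v/2$ with $\beta^o(t^{**}_b)=q(t^{**}_b)=2k/(\lambda v)$, recovering the constants of Proposition \ref{step1}; for $t_b>t^*_b$ the U‑shaped $\Phi$ has a left (local‑max) zero $\beta^o(t_b)<1/2$ that is strictly decreasing in $t_b$.

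With this in hand the maximizer of $L$ on $[q(t_b),\mu_0]$ is located by comparing $\beta^o(t_b)$ to the endpoints: $L$ rises to its interior local maximum at $\beta^o(t_b)$ and falls thereafter, so $\beta^*(t_b)=\mu_0$ when $\beta^o(t_b)\ge\mu_0$, $\beta^*(t_b)=\beta^o(t_b)$ when $q(t_b)<\beta^o(t_b)<\mu_0$, and $\beta^*(t_b)=q(t_b)$ when $\beta^o(t_b)\le q(t_b)$. The four cases are then monotonicity bookkeeping: if $\mu_0\le q(t^{**}_b)$ then $q^{-1}(\mu_0)\le t^{**}_b$, so every admissible price gives $\beta^o(t_b)\ge\beta^o(t^{**}_b)=q(t^{**}_b)\ge\mu_0$ (no learning, Case 1); if $\mu_0\ge Q(t^{**}_b)$ then $Q^{-1}(\mu_0)\ge t^{**}_b$, so every admissible price gives $\beta^o(t_b)\le q(t_b)$ (full learning, Case 4). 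For intermediate priors interior learning occupies exactly the prices with $q(t_b)<\beta^o(t_b)<\mu_0$, bounded below by the price solving $\beta^o(t_b)=\mu_0$ (which equals $t^*_b$ at $\mu_0=1/2$ and collapses onto the learning‑deterrence price $Q^{-1}(\mu_0)$ once $\mu_0>1/2$, since then $\beta^o(t_b)<1/2<\mu_0$ is feasible at every admissible price) and bounded above by $t^{**}_b$, matching Cases 2 and 3.

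The step I expect to be the main obstacle is justifying this single‑peaked reduction when $\mu_0>1/2$, because then $[q(t_b),\mu_0]$ extends into the convex region $\{\mu>1/2\}$, where $\Phi$ is increasing and $L$ turns back upward, so the interior local maximum $\beta^o(t_b)$ must be compared \emph{globally} against the right‑endpoint value $L(\mu_0^-)=t_r(\mu_0,t_b)$ (and, in Case 4, the left corner against the right). I would settle this with the one‑variable comparison $\Delta(\mu_0):=L(\beta^o(t_b);\mu_0)-t_r(\mu_0,t_b)$: the envelope/FOC gives $\Delta'(\mu_0)=MC(\beta^o(t_b))-MC(\mu_0)$ with $\Delta(\beta^o(t_b))=0$, so by the symmetry and monotonicity of $MC$ the difference stays positive throughout the relevant prior range, while the analogous difference anchored at the $q(t_b)$‑corner disposes of Case 4. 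This is precisely the constrained‑support counterpart of the concavification‑invariance fact quoted after Proposition \ref{step1}: the optimal support $\{\beta^o(t_b),1\}$ does not move with $\mu_0$ as long as $\mu_0$ lies in the region where the tangent chord to the transfer function sits strictly above it.
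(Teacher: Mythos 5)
Your reduction is correct and, up to the decisive step, travels the same road as the paper: the zeros of your $\Phi(\beta)=MC(\beta)(1-\beta)-\bigl(t_b-t_r(\beta,t_b)\bigr)$ are exactly the critical points of the first-order condition (\ref{foc}); $\Phi'(\beta)=MC'(\beta)(1-\beta)$ gives the U-shape; $\Phi(q(t_b))=v-2t_b$ and $\partial_{t_b}\Phi=-1/(1-q(t_b))$ correctly recover $t^*_b=Q^{-1}(1/2)$ and $t^{**}_b=v/2$; and the monotonicity bookkeeping for Case 1, Case 2, and the interior branch of Case 3 is sound. The genuine gap is exactly at the step you flag as the main obstacle, and your proposed fix does not close it. Your $\Delta(\mu_0)$ indeed satisfies $\Delta(\beta^o)=0$ and $\Delta'(\mu_0)=MC(\beta^o)-MC(\mu_0)$, but symmetry and monotonicity of $MC$ only give $\Delta'>0$ on $(\beta^o,1-\beta^o)$ and $\Delta'<0$ beyond, so positivity is automatic only up to $1-\beta^o$, whereas admissible priors run up to $Q(t_b)$. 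Since $\Delta$ is strictly decreasing past $1-\beta^o$, what you must show is $\Delta(Q(t_b))\ge 0$; using the value-matching identity $t_b-t_r(Q(t_b),t_b)=k/\bigl(\lambda(1-Q(t_b))\bigr)$ and the FOC, this reduces to $\beta^o(1-\beta^o)\ge \bigl(1-Q(t_b)\bigr)^2$ --- a quantitative comparison between $Q(t_b)$ and $\beta^o(t_b)$ that nothing in your symmetry appeal delivers, and which is delicate (e.g., as $t_b\to t^{**}_b$ with $k/\lambda$ small, $\beta^o(t^{**}_b)=2k/(\lambda v)\to 0$ while $Q(t^{**}_b)\to 1$, so both sides vanish and the comparison is a genuine race). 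The same unproved assertion underlies your one-line disposal of Case 4 via the difference anchored at the $q(t_b)$-corner.

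The paper closes precisely this hole by a different device: instead of comparing values at the corner, it locates the second (local-minimizer) root of the FOC relative to the trial belief. From the proof of Proposition \ref{step1}, a critical point of (\ref{foc}) coincides with $Q(t_b)$ only at $t_b=Q^{-1}(0.5)$, where it equals $0.5$; combined with $(\beta^o)'<0$ on the branch below $0.5$, $(\beta^o)'>0$ above, and $(\beta^o)'\to\infty$ at $0.5$ (footnote \ref{footnote}), the local-min branch lies strictly above $Q(t_b)$ for all relevant prices. Hence $\Phi$ never turns positive again on $[q(t_b),Q(t_b)]\supseteq[q(t_b),\mu_0]$, the objective is quasi-concave in $\beta$ on the entire feasible range, and all four cases --- including both of your corner comparisons, which then follow a fortiori --- are settled at once. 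To repair your proof, either import that single-crossing argument for the min branch of your $\Phi$ (your setup already contains all the ingredients), or genuinely prove $\beta^o(1-\beta^o)\ge\bigl(1-Q(t_b)\bigr)^2$; as written, the concavification-invariance remark you cite presupposes the very quasi-concavity that is in question.
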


\begin{proof}
    The optimal information process in 1 and 2 is immediate after anticipating the constraints $t_b\in [Q^{-1}(\mu_0),q^{-1}(\mu_0)]$ and $\beta\in [q(t_b),\mu_0]$. For 3 and 4, when $\mu_0>0.5$, then the transfer function is convex when $\mu\in(0.5,\mu_0)$. Hence, $\int_0^1 t(\mu,t_b) d G^{\beta}(\mu)$ can have a local maximizer when $\mu\le 0.5$ and another local minimizer when $\mu>0.5$. If it is true, then the global maximizer $\beta^*(t_b)$ can be either the local maximizer $\beta^o(t_b)$ or $\beta^*(t_b)\rightarrow \mu_0$. However, if the latter case is true, then using stochastic return to implement $G^{\mu_0^{-}}$ is strictly dominated by using learning deterrence to implement $G^{\mu_0}$, where $\mu_0^{-}$ represents a belief  converges to $\mu_0$ from the left.  Therefore, in any case, setting a price $t_b>Q^{-1}(\mu_0)$ and implementing $G^{\mu_0^{-}}$ can not be an optimal refund mechanism. However, to close the proof for this corollary, we want to show the seller's expected revenue $\int_0^1 t(\mu,t_b) d G^{\beta}(\mu)$ is actually quasi-concave in $\beta$, and therefore we rule out the latter case. Note that this part of the proof is not necessary for the characterization of the optimal refund mechanism in the next section. 

    To show $\int_0^1 t(\mu,t_b) d G^{\beta}(\mu)$ is quasi-concave in $\beta$ when $\beta\in[q(t_b),\mu_0)$, we show a relaxed version that the objective function is quasi-concave in $\beta$ when $\beta\in[q(t_b),Q(t_b)]$. To abuse notation, let $\beta^o(t_b)$ be the solution to equation (\ref{foc}). It can be a correspondence. Note that the proof of Proposition \ref{step1} suggests that the local optimizer $\beta^o(t_b)$ of the objective function (either maximizer or minimizer) only coincides with $Q(t_b)$ when $t_b=Q^{-1}(0.5)$. Moreover, $(\beta^o)'>0$ if $\beta^o>0.5$, $(\beta^o)'<0$ if $\beta^o<0.5$, and $(\beta^o)'\rightarrow \infty$ at $\beta^o=0.5$. Hence, if there is a  local minimizer, it is always larger than $Q(t_b)$. 
\end{proof}

\begin{proof}[{\bf Proof of Lemma \ref{elasticity}}]
    With simple rearrangement, 
\[
\mathcal{E}_r(\beta^o(t_b))=\frac{t_b}{1-\beta^o(t_b)}(\beta^o)'(t_b), \textnormal{ and } \mathcal{E}_r(q(t_b))=\frac{t_b}{1-q(t_b)}q'(t_b). 
\]
Recall that $(\beta^o)'=\frac{(\beta^o)^2 (1-\beta^o)}{(2 \beta^o-1)(1-q(t_b))k/\lambda}$ from footnote \ref{footnote}, and $q'(t_b)=q(t_b)/(v-t_b)$. Hence, 
\[
\frac{-\mathcal{E}_r(\beta^o(t_b))}{\mathcal{E}_r(q(t_b))}=\frac{(\beta^o(t_b))^2}{q(t_b)^2 (1-2 \beta^o  )}>1,
\]
because $q(t_b)\le \beta^o(t_b)\le 0.5 $. 
\end{proof}

\begin{proof}[{\bf Proof of Theorem \ref{two}}]
Theorem \ref{deterministic} implies that the optimal refund mechanism is either learning deterrence or free return. We first characterize the optimal free return mechanism. The revenue from a free return mechanism only depends on the price, 
\begin{equation}\label{freereturn}
    \sup_{t_b}\quad  \int_0^1 t(\mu,t_b) d G^{q(t_b)}(\mu)=\frac{\mu_0-q(t_b)}{1-q(t_b)} t_b, 
\end{equation}
\[
\textnormal{s.t }\quad  Q^{-1}(\mu_0)<t_b\le q^{-1}(\mu_0).
\]
First, ignore the constraint on price and solve the unconstrained optimal price, denoted as $t_b^F$,
\[
t_b^F=v-\frac{k}{\lambda}-\left(\frac{k}{\lambda}\left(v-\frac{k}{\lambda}\right)\frac{1-\mu_0}{\mu_0}\right)^{1/2}.
\]
Denote $\Pi^F$ as the expected revenue of a free return mechanism with price $t^F_b$. Note that both $t^F_b$ and $\Pi^F$ are functions of $\mu_0$. 

We now verify that the price constraint is satisfied when $\Pi^F (\mu_0)\ge Q^{-1}(\mu_0)$, where $Q^{-1}(\mu_0)$ is the revenue/price from learning deterrence. Suppose $\Pi^F (\mu_0)\ge Q^{-1}(\mu_0)$, then $t^F_b (\mu_0)> Q^{-1}(\mu_0)$ because the probability of a successful sale is less than one with free return. Moreover, $t^F_b(\mu_0)\le q^{-1}(\mu_0)=v-k/\lambda \mu_0$ can be verified by simple algebra given that $q(v)<\mu_0$. 

Next, we want to show that $F$ is either an empty set or a closed interval. Note that the revenue from learning deterrence is strictly higher than the optimal free return when the prior belief is either below $\underline\mu$ or above $\bar\mu$. Hence, we only need to show that $\Pi^F(\mu_0)$ intersects with $Q^{-1}(\mu_0)$ at most twice when $\mu_0\in (\underline\mu,\bar\mu)$.  That is, $\Pi^F(\mu_0)=Q^{-1}(\mu_0)$ has at most two roots in $(\underline\mu,\bar\mu)$. Substitute this in equation (\ref{mutilde}) and let $c\equiv k/\lambda$, 
\begin{equation}\label{slope}
    \frac{1}{1-\mu_0} +\log\left[\frac{\mu_0}{1-\mu_0}\right]-\frac{ v}{c}+\frac{1}{q(\Pi^F(\mu_0))} -\log\left[\frac{q(\Pi^F(\mu_0))}{1-q(\Pi^F(\mu_0))}\right]=0
\end{equation}
The first-order derivative of the left-hand side of the above equation with respect to $\mu_0$ is,
\begin{equation*}
\frac{1}{1-\mu_0}\left(\frac{1}{1-\mu_0}+\frac{1}{\mu_0}\right)+\left(\frac{1}{\sqrt{\mu_0}}-\frac{r}{\sqrt{1-\mu_0}}\right)\frac{(\sqrt{\mu_0}+ r \sqrt{1-\mu_0})^3}{(\sqrt{\mu_0}+r \sqrt{1-\mu_0})^2-1},
\end{equation*}
where $r=\sqrt{v/c-1}>\sqrt{3}$. 
Let $x\equiv \sqrt{\frac{\mu_0}{1-\mu_0}}\in (0,\infty)$, which is a monotone transformation of $\mu_0$. Let the left-hand side of the above equation equal to 0 and rearrange it,  
$$m(x):=\frac{x (x+r)^3 (1-x r)}{(1+x^2)^3 (-1+2x r+r^2)}=-1,$$
where $m(x)$ is a rational function. The degree of the numerator is smaller than that of the denominator, thus it has a horizontal asymptote $m=0$. Note that the denominator is positive, hence it does not have a vertical asymptote. Meanwhile $\lim_{x\to 0} m(x)=0$, $\lim_{x\to \infty} m(x)=0$, $m(x=1)<0$, and $m(x)=0$ has a unique root $x=1/r<1$. Therefore, the graph of $m(x)$ is the following.
\begin{figure}[H]
        \centering
        \includegraphics[width=7cm]{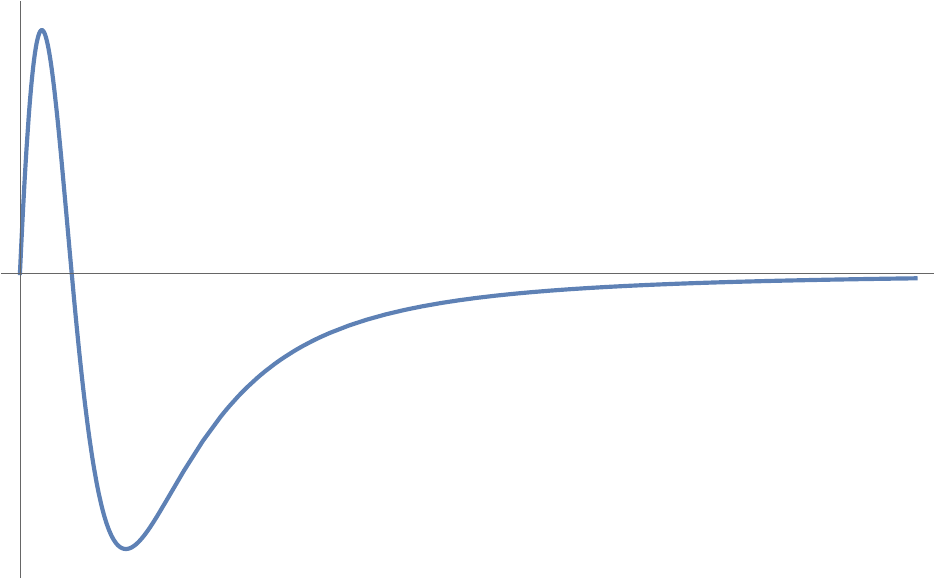}
        \end{figure}
Then, $m(x)=-1$ has at most two roots. That is, if there are two roots, then the left-hand side of the equation (\ref{slope}) is first increasing, then decreasing, and then increasing again. Given that the left-hand side of equation (\ref{slope})  is strictly positive at $\underline{\mu}$ and $\overline{\mu}$, then equation (\ref{slope}) has at most two roots, and thereby $F$ is either empty or a non-empty interval. Moreover, if $m(x)=-1$ has no or one  root, then $F$ is empty. 

The existence of $c^*$ comes from  that $\Pi^F(\mu_0)$ decreases in $c=\frac{k}{\lambda}$ and $Q^{-1}(\mu_0)$ increases in $c$. The former one can be verified from the Envelop theorem of the objective function (\ref{freereturn}). The latter one can be verified by the Implicit function theorem of $\mu_0 v-Q^{-1}(\mu_0)=V^o(\mu_0;Q^{-1}(\mu_0))$ with respect to $c$. 
\end{proof}

\begin{proof}[{\bf Proof of Proposition \ref{postopm}}]
First, recall that $V^o(\mu_0;t_b)$ is the solution to the (\ref{odeb}) with boundary point $(q(t_b),0)$, 
\begin{equation}\label{V0}
    V^o(\mu_0;t_b)=\mu_0 (v-t_b)-\frac{k}{\lambda}-(1-\mu_0)\frac{k}{\lambda}\left[\log\left(\frac{\mu_0}{1-\mu_0}\right)-\log\left(\frac{q(t_b)}{1-q(t_b)}\right)\right].
\end{equation}
Similarly, $V_P(\mu_0;m^c)$ is the solution to the (\ref{odeb}) with boundary point $(q_P(t_b,t_c),-t_c)$, replacing $\lambda$ by $\lambda_P$,
\[V_P(\mu_0;m^c)=\mu_0 (v-t_b)-\frac{k}{\lambda_P}-(1-\mu_0)\frac{k}{\lambda_P}\left[\log\left(\frac{\mu_0}{1-\mu_0}\right)-\log\left(\frac{q_P}{1-q_P}\right)\right]-(1-\mu_0) t_c.\]
Slight rearranging the above equation,  we have,
\begin{equation*}
    \begin{aligned}
        \frac{k}{\lambda_P} \log\left(\frac{q_P}{1-q_P}\right)-t_c&=\frac{k}{\lambda_P} \log\left(\frac{\mu_0}{1-\mu_0}\right)+\frac{k}{\lambda_P}\frac{1}{1-\mu_0}-\frac{\mu_0}{1-\mu_0}(v-t_b)+\frac{V_P(\mu_0;m^c)}{1-\mu_0}\\
        &=\frac{k}{\lambda_P} \log\left(\frac{\mu_0}{1-\mu_0}\right)+\frac{k}{\lambda_P}\frac{1}{1-\mu_0}-\frac{\mu_0}{1-\mu_0}(v-t_b)+\frac{V^0(\mu_0;t_b)}{1-\mu_0},
    \end{aligned}
\end{equation*}
where the second equality comes from Corollary \ref{post}. 

Since $t_r(\beta,t_b)=\int_{q_P}^{\beta} \frac{k}{\lambda_P \mu(1-\mu)} d\mu= \frac{k}{\lambda_P}\left[\log\left(\frac{\beta}{1-\beta}\right)-\log\left(\frac{q_P}{1-q_P}\right)\right]$,
\[
t_r(\beta,t_b)+t_c(t_b)=\frac{k}{\lambda_P}\log\left(\frac{\beta}{1-\beta}\right)-\frac{k}{\lambda_P}\log\left(\frac{\mu_0}{1-\mu_0}\right)-\frac{k}{\lambda_P}\frac{1}{1-\mu_0}+\frac{\mu_0}{1-\mu_0} (v-t_b)-\frac{V^o(\mu_0;t_b)}{1-\mu_0}. 
\]
If $\lambda_P\rightarrow \infty$, we obtain,
\[
t_r+t_c=\frac{\mu_0}{1-\mu_0} (v-t_b)-\frac{V^o(\mu_0;t_b)}{1-\mu_0}. 
\]
We consider two cases: (a) $t_b\in [q^{-1}(\mu_0), q^{-1}(\mu_0;\lambda_P)]$ where $V^o(\mu_0;t_b)=0$; and (b) $t_b\in [Q^{-1}(\mu_0),q^{-1}(\mu_0))$, where $V^o(\mu_0;t_b)$ is specified as in equation (\ref{V0}).

In case (a), the expected revenue  $\mu_0 t_b+(1-\mu_0)(t_r+t_c)=\mu_o v$ is a constant. In case (b), the expected revenue can be simplified to,
\[
\mu_0 t_b+(1-\mu_0)(t_r+t_c)=\mu_0 t_b +\frac{k}{\lambda}+(1-\mu_0)\frac{k}{\lambda}\left(\log\left(\frac{\mu_0}{1-\mu_0}\right)-\log\left(\frac{q(t_b)}{1-q(t_b)}\right)\right),
\]
whose first-order derivative with respect to $t_b$ simplifies to, 
\[
\mu_0-(1-\mu_0)\frac{q(t_b)}{1-q(t_b)}>0. 
\]
Thus, the revenue is increasing in $t_b$, rendering $q^{-1}(\mu_0)$ the optimal price. Moreover, $t_r+t_c=\frac{k}{\lambda(1-\mu_0)}$. The optimal revenue equals $\mu_0 v$ and buyer's surplus $V^o(\mu_0;q^{-1}(\mu_0))=0$. 

\end{proof}

\begin{proof}[{\bf Proof of Lemma \ref{IR-N}}]
Negative transfer at return is not optimal. 
\end{proof}

\begin{proof}[{\bf Proof of Proposition \ref{negative}}]
Suppose $4k<v\rho$, $\underline{\mu}^N$ and $\overline{\mu}^N$ are determined as the two roots such that the buyer at the consumption belief is indifferent between consuming and walking away, $\mu v-Q^{-1}_N(\mu)=0. $ Thus, if the prior belief is either smaller than $\underline{\mu}^N$  or bigger than $\overline{\mu}^N$, then learning is suboptimal regardless of the mechanism and thereby the optimal refund mechanism follows form 1. 

When $\mu_0\in [\underline{\mu}^N,\overline{\mu}^N]$, 
 the optimal mechanism either implements $H^{\mu_0}$ or $H^{\alpha_0}$, we compare the revenue from both cases to determine the optimal mechanism. 
Suppose implementing $H^{\mu_0}$ via learning deterrence dominates implementing $H^{\alpha}$ via free return. Equivalently, 
\[
Q^{-1}_N(\mu_0)\ge \frac{\mu_0}{\alpha}Q^{-1}_N(\alpha) \Longleftrightarrow \alpha(1-\alpha)\ge \mu_0(1-\mu_0).
\]
Then if $\mu_0<(>)1/2$, then implementing $H^{\mu_0}$ dominates (is dominated by) $H^{\alpha_0}$ as long as $\alpha_0\in (\mu_0,1-\mu_0)$ ($\alpha_0\in (1-\mu_0,\mu_0)$).  

Recall that $\alpha_0$ is the solution to 
\[
V_N(\mu_0;Q^{-1}_N(\alpha_0))=-\frac{k}{\rho} -\mu_0 \frac{k}{\rho (1-\alpha_0)} +\mu_0 v-\mu_0 \frac{k}{\rho}\left(\log\left(\frac{\alpha_0}{1-\alpha_0}\right)-\log\left(\frac{\mu_0}{1-\mu_0}\right)\right)=0.
\]
By the implicit function theorem, 
\[
\alpha'_0(\mu_0)=\frac{(1-\alpha_0)^2\alpha_0}{\mu_0^2(1-\mu_0)}>0. 
\]
Moreover, since $\alpha_0(\underline\mu)=\underline\mu$ and  $\alpha_0(\bar\mu)=\bar\mu$, we have $\alpha'_0(\underline\mu)>1$ and  $\alpha'_0(\bar\mu)<1$. One can verify that $\alpha_0(\mu_0)$ crosses $1-\mu_0$ once from below if $\mu_0\in(\underline\mu,\bar\mu)$. Thus, we have form 2 and 3. 
\end{proof}

\newpage 
\section*{Appendix B}\label{appendixb}
In this section, we provide complementary materials to derive the cost of learning in order to clarify our marginal cost of information and derive the continuation value for learning.  
Recall that in our model, the buyer's true valuation is either high or low, and good news  arrives according to the Poisson rate $\lambda$ if the true valuation is high. For the sake of exposition, we call the buyer's true valuation the state, $\theta \in\{0,1\}$, where $0$ represents low valuation and $1$ represents high valuation. If the buyer continues to learn from time $\tau$ to  $\tau+d\tau$ without receiving any good news, then his belief evolves according to,
\[
\mu(\tau+d\tau)=\frac{\mu(\tau)(1-\lambda d\tau)}{\mu(\tau)(1-\lambda d\tau)+1-\mu(\tau)},
\]
which leads to the law of motion $\mu'(\tau)=-\mu(\tau)(1-\mu(\tau))\lambda$, a well-behaved differential equation with initial point $\mu(\tau=0)=\mu_0$. The solution to this differential equation is,
\[
\mu(\tau)=\left(1+\frac{1-\mu_0}{\mu_0}e^{\lambda \tau}\right)^{-1}. 
\]
Hence, conditional on no news arriving, the buyer's posterior belief is a deterministic function of time. 

Consider a learning strategy such that the buyer stops learning either when his posterior belief reaches $\beta\le \mu_0$ or $1$. Under such a strategy, we can calculate how long it takes for the buyer to reach the belief $\beta$ conditional on no news arriving. We denote such time duration as $T_{0}$ and it satisfies $\mu(T_0)=\beta$. Hence, 
\begin{equation}\label{T0}
    T_0=\frac{1}{\lambda}\left[\log\left(\frac{\mu_0}{1-\mu_0}\right)-\log\left(\frac{\beta}{1-\beta}\right)\right].
\end{equation}
Note that $T_0$ is also the time cost conditional on $\theta=0$ if the buyer follows such a learning strategy. Hence, the information cost under such learning strategy conditional $\theta=0$ is $kT_0$, which is a function of both the posterior belief $\beta$ and the prior belief $\mu_0$. 

Note that,
\[
k T_0=\int_{\beta}^{\mu_0} MC(\mu) d\mu.
\]  
Therefore, when the seller applies stochastic return to implement a stopping belief $\beta$ higher than the quitting belief, the gain in the return transfer is the saved learning cost under state $0$.

To determine the buyer's continuation value from learning, we want to calculate the expected time that a Poisson jump arrives before $T_0$ (when the buyer stops learning), denoted as $\mathbb{E}(T|T\le T_0,1)$. To abuse notation, we use $T$ to represent the time that good news arrives. $T$ is a random variable.  
Note that 
\[
\frac{1}{\lambda}=e^{-\lambda T_0}\mathbb{E}(T|T>T_0,1) +(1-e^{-\lambda T_0}) \mathbb{E}(T|T\le T_0,1),
\]
where the left-hand side is the expected time that good news arrives in state $1$. In the right-hand side, $e^{-\lambda T_0}$ is the expected probability that no news arrives before time $T_0$, and $\mathbb{E}(T|T>T_0,1)$ is the expected time that good news arrives after $T_0$. Thus, $\mathbb{E}(T|T>T_0,1)=T_0+\frac{1}{\lambda}$. Moreover, $(1-e^{-\lambda T_0}) $ is the probability that good news arrives before $T_0$. Thus, we can solve,
\[
\mathbb{E}(T|T\le T_0,1)=\frac{1}{\lambda}-\frac{e^{-\lambda T_0}}{1-e^{-\lambda T_0}}T_0.
\]
Let  $\mathbb{E}[T_1]$ be the expected stopping time in state $1$ conditional on the learning strategy. It is the sum of expected stopping time in both  events:   no news arrives until $T_0$ and good news arrives before $T_0$,
\begin{equation}\label{T1}
   \mathbb{E}[T_1]=e^{-\lambda T_0}T_0+(1-e^{-\lambda T_0}) \mathbb{E}(T_1|T_1\le T_0,1)=\frac{1}{\lambda}\frac{\mu_0-\beta}{\mu_0(1-\beta)}.  
\end{equation}
Using $T_0$ and $\mathbb{E}[T_1]$, we can calculate the buyer's continuation value from an ex-ante point of view, conditional on a particular learning strategy parameterized by $\beta$. We use $U^{\beta}(\mu_0)$ to denote it.
\[
 U^{\beta}(\mu_0) =\mu_0(1-e^{-\lambda T_0}) (v-t_b)-\mu_0 k \mathbb{E}[T_1]-(1-\mu_0) k T_0. 
\]
Essentially, $U^{\beta}(\mu_0)$ is equal to the expected gain from a successful sale net the expected cost from learning. 
$V(\mu_0;t_b)$ in the main text can be obtained if we replace  $\beta$  by the quitting belief $q(t_b)$. Doing so,  
\[
V(\mu_0;t_b)=U^{q(t_b)}(\mu_0)=\mu_0 (v-t_b)-\frac{k}{\lambda}-(1-\mu_0) \frac{k}{\lambda}\left(\log\left(\frac{\mu_0}{1-\mu_0}\right)-\log\left(\frac{q(t_b)}{1-q(t_b)}\right)\right).
\]

\newpage
\bibliographystyle{aer}
\bibliography{orm}

\end{document}